\keywords{well-quasi-ordering, recursive path ordering, iterative path ordering, Hydra battle, termination, Kruskal's tree theorem, Buchholz Hydra}
\colorlet{dblue}{blue!40!black}
\newcommand\openbigstar[1][0.7]{%
  \raisebox{.5ex}{\scalerel*{%
    \stackinset{c}{-.125pt}{c}{}{\scalebox{#1}{\color{white}{$\bigstar$}}}{%
      $\bigstar$}%
  }{\bigstar}}
}
\theoremstyle{plain}
\newtheorem{notation}[thm]{Notation}
\newtheorem{definition}[thm]{Definition}
\newtheorem{theorem}[thm]{Theorem}
\newtheorem{proposition}[thm]{Proposition}
\newtheorem{lemma}[thm]{Lemma}
\newtheorem{remark}[thm]{Remark}
\newtheorem{example}[thm]{Example}
\newtheorem{corollary}[thm]{Corollary}
\def\setminus{-}
\newcommand{\mrm}{\mathrm}
\newcommand{\mcl}{\mathcal}
\newcommand{\mit}{\mathit}
\renewcommand{\prop}[1]{\ensuremath{\mrm{#1}}}
\newcommand{\SN}{\prop{SN}}
\newcommand{\SNr}{\funap{\SN}}
\newcommand{\CR}{\prop{CR}}
\newcommand{\CRr}{\funap{\CR}}
\newcommand{\funap}[2]{#1(#2)}
\newcommand{\bfunap}[3]{\funap{#1}{#2,#3}}
\renewcommand{\emptyset}{\varnothing}
\newcommand{\nat}{\mathbb{N}}
\newcommand{\sred}{{\rightarrow}}
\newcommand{\smred}{{\twoheadrightarrow}}
\newcommand{\mred}{\mathrel{\smred}}
\newcommand{\smredi}{{\twoheadleftarrow}}
\newcommand{\mredi}{\mathrel{\smredi}}
\newcommand{\posemp}{\epsilon}
\newcommand{\pos}{\funap{\mathcal{P}\!os}}
\newcommand{\redrat}[2]{\mathrel{\sred_{#1,#2}}}
\newcommand{\pairlft}{(}
\newcommand{\pairrgt}{)}
\newcommand{\pairsep}{{,\,}}
\newcommand{\pairstr}[1]{\pairlft#1\pairrgt}
\newcommand{\pair}[2]{\pairstr{#1\pairsep#2}}
\newcommand{\triple}[3]{\pairstr{#1\pairsep#2\pairsep#3}}
\providecommand\rightarrowRHD{\relbar\joinrel\mathrel\RHD}
\providecommand\rightarrowrhd{\relbar\joinrel\mathrel\rhd}
\newcommand{\ilpored}{\rightarrowrhd}
\newcommand{\ilporedomega}{\rightarrowRHD}
\newcommand{\ilporedomegamulti}{\rightarrowRHD_{\#}}
\newcommand{\ilporule}[1]{\text{\sc{#1}}}
\newcommand{\lput}{\ilporule{put}}
\newcommand{\lselect}{\ilporule{select}}
\newcommand{\lcopy}{\ilporule{copy}}
\newcommand{\llex}{\ilporule{lex}}
\newcommand{\ldown}{\ilporule{down}}
\newcommand{\set}[1]{\{\,#1\,\}}
\newcommand{\mset}[1]{\{\!\{\,#1\,\}\!\}}
\newcommand{\quotient}[2]{#1/_{#2}}
\newcommand{\commute}{\simeq}
\newcommand{\commutestep}{\to_{\commute}}
\newcommand{\treecommute}[1]{#1_{\commute}}
\newcommand{\streeemb}{\preceq}
\newcommand{\treeemb}{\mathrel{\streeemb}}
\newcommand{\streenodes}{\mathrm{nodes}}
\newcommand{\treenodes}{\funap{\streenodes}}
\newcommand{\streelabel}{\mathrm{label}}
\newcommand{\treelabel}{\funap{\streelabel}}
\newcommand{\stre}{\mathbb{T}}
\newcommand{\tre}{\bfunap{\stre}}
\renewcommand{\star}{{\scalebox{.6}{\openbigstar[.5]\!\!\!}}}
\newcommand{\ster}{\mathcal{T}}
\newcommand{\ter}{\bfunap{\ster}}
\newcommand{\avars}{\mathcal{X}}
\newcommand{\vars}{\funap{\mcl{V}\!\mit{ar}}}
\newcommand{\atrs}{\mathcal{R}}
\newcommand{\aars}{\mathcal{A}}
\newcommand{\subtermat}[2]{#1|_{#2}}
\tikzset{arsdefault/.style={thick,shorten >= 0mm, shorten <= .5mm,inner sep=.3mm,outer sep=0mm}}
\tikzset{anno/.style={scale=.85,blue}}
\tikzset{wave/.style={decorate,decoration={snake,amplitude=.4mm,segment length=2mm,post length=1mm}}}
\tikzset{bluered/.style={cblue,nodes={black}}}
\tikzset{redred/.style={cred,nodes={black},wave}}
\tikzset{smallCircle/.style={circle,fill=black,inner sep=0mm,outer sep=1mm,minimum size=1mm}}
\newcommand{\fun}[1]{\mrm{#1}}
\tikzset{default/.style={
  thick,
  every node/.style={circle},
  level distance=12mm, 
  inner sep=.5mm}}
\tikzset{smallCircle/.style={circle,fill=black,inner sep=0mm,outer sep=1mm,minimum size=1mm}}
\tikzset{paint/.style={very thick,draw=#1!50!black,fill=#1,opacity=.4}}
\tikzset{paintopaque/.style={very thick,draw=#1!50!black!60,fill=#1!60}}
\tikzset{loopl/.style={out=140,in=210,looseness=10}}
\tikzset{loopr/.style={out=-40,in=30,looseness=10}}
\tikzset{loopb/.style={out=-40-90,in=30-90,looseness=10}}
\tikzset{loopt/.style={out=-40+90,in=30+90,looseness=10}}
\tikzset{loop/.style={out=-40+#1,in=30+#1,looseness=10}}
\tikzset{emptystep/.style={-,dotted,line cap=round,dash pattern=on 0 off 3.00000}}
\tikzset{b/.style={anchor=north,at=(#1.south)}}
\tikzset{br/.style={anchor=north west,at=(#1.south east)}}
\tikzset{bl/.style={anchor=north east,at=(#1.south west)}}
\tikzset{bw/.style={anchor=north west,at=(#1.south west)}}
\tikzset{be/.style={anchor=north east,at=(#1.south east)}}
\tikzset{a/.style={anchor=south,at=(#1.north)}}
\tikzset{ar/.style={anchor=south west,at=(#1.north east)}}
\tikzset{al/.style={anchor=south east,at=(#1.north west)}}
\tikzset{aw/.style={anchor=south west,at=(#1.north west)}}
\tikzset{ae/.style={anchor=south east,at=(#1.north east)}}
\tikzset{r/.style={anchor=west,at=(#1.east)}}
\tikzset{l/.style={anchor=east,at=(#1.west)}}
\tikzset{rn/.style={anchor=north west,at=(#1.north east)}}
\tikzset{eta/.style={very thick,->,cblue!90!black}}
\tikzset{beta/.style={very thick,->,corange!80!white!90!black}}
\tikzset{devcirc/.style={circle,draw,fill=white,inner sep=0,minimum size=4\pgflinewidth}}
\tikzset{dev/.style={postaction={decorate},decoration={
  markings,
  mark=at position .5 with \node [devcirc] {};}}
}
\tikzset{medium tree/.style={
    level 1/.style={sibling distance=17mm},
    level 2/.style={sibling distance=9mm},
    level 3/.style={sibling distance=5mm},
    level 4/.style={sibling distance=4mm},
  }}
\tikzset{startAt/.style={inner sep=0mm,r=#1,xshift=-1.2mm,yshift=.8mm}}
\newcommand{\arrowTriangle}[2]{
  \pgfpathmoveto{\pgfpoint{-0.01#1+#2}{.6#1}}
  \pgfpathlineto{\pgfpoint{#1+#2}{0}}
  \pgfpathlineto{\pgfpoint{-0.01#1+#2}{-.6#1}}
  \pgfusepathqfill
}
\newdimen\prearrowsize
\newdimen\arrowsize
\newdimen\temparrowsize
\newcommand{\arrowscale}{5}
\newcommand{\setarrowsize}{
  \arrowsize=0.000000001pt
  \prearrowsize=\arrowscale\pgflinewidth
  \normalizearrowsize
}
\newcommand{\normalizearrowsize}{
  \ifdim\prearrowsize>2mm
    \addtolength{\arrowsize}{2mm}
    \addtolength{\prearrowsize}{-2mm}
    \temparrowsize=0.5\prearrowsize
    \prearrowsize=\temparrowsize
  \else
  \fi

  \addtolength{\arrowsize}{\prearrowsize}
}
  \arrowTriangle{\arrowsize}{-0.1\arrowsize}
  \arrowTriangle{\arrowsize}{-0.1\arrowsize}
  \arrowTriangle{\arrowsize}{-0.1\arrowsize+.8\arrowsize}
  \arrowTriangle{\arrowsize}{-0.1\arrowsize}
  \arrowTriangle{\arrowsize}{-0.1\arrowsize+.8\arrowsize}
  \arrowTriangle{\arrowsize}{-0.1\arrowsize+1.6\arrowsize}
\tikzset{>=red>}
\tikzstyle{gyellow}=[draw=black!80,top color=white!50,bottom color=black!20]
\tikzstyle{gblue}=[draw=blue!50,top color=white,bottom color=blue!60]
\tikzstyle{gred}=[draw=red!50,top color=white,bottom color=red!60]
\tikzstyle{ggreen}=[draw=blue!80!green!90!black,top color=white,bottom color=blue!80!green!60]
\tikzstyle{roundNode}=[gyellow,thick,circle,minimum size=4mm,inner sep=0.5mm]
\definecolor{cblue}{rgb}{0,0.4,0.7}
\definecolor{clighterblue}{rgb}{0,0.6,1.0}
\colorlet{cred}{red}
\colorlet{cgreen}{green!80!black}
\colorlet{corange}{orange!70!red}
\colorlet{cpureorange}{orange}
\colorlet{cpurple}{clighterblue!50!cred}
\colorlet{clightblue}{clighterblue!50!cblue!40}%
\colorlet{clightred}{cred!40}
\colorlet{clightgreen}{cgreen!80!cblue!40}
\colorlet{clightyellow}{corange!40!yellow!50}
\colorlet{clightorange}{cred!50!orange!40}
\colorlet{clightpurple}{clighterblue!50!cred!50}
\colorlet{cdarkred}{red!70!black}
\colorlet{cdarkgreen}{green!60!black}
\colorlet{chighlight}{orange!50!yellow!60}
\tikzset{pgnode/.style={smallCircle,fill=white,draw=black,minimum size=1.3mm,outer sep=0.5mm}}
\tikzset{pgnodecolor/.style={pgnode,minimum size=4mm,scale=0.9}}
\tikzset{pgnodebig/.style={roundNode,gyellow,outer sep=1mm}}
\tikzset{pgrelation/.style={ultra thick,cblue!80!black,decorate,decoration={snake,amplitude=.4mm,segment length=4mm}}}
\tikzset{exi/.style={densely dotted}}
\tikzset{decweak/.style={-,green!50!black,very thick,opacity=0.7}}
\tikzset{decstrict/.style={->,orange,very thick,opacity=0.7}}
\newcommand{\contexthole}{\Box}
\newcommand{\cxthole}{\contexthole}
\newcommand{\cxtfill}[2]{#1[#2]}
\newcommand{\trsstar}{\mit{Star}}
\newcommand{\trsomega}{\mit{Star}^\omega}
\newcommand{\stepstarhydra}{\to_\mathit{SH}}
\newcommand{\hydrahead}[2]{
  \begin{pgfonlayer}{background}
  \begin{scope}[xscale=#2,yscale=-1,shift={(#1)}]
  \begin{scope}[shift={(-8mm,1mm)},scale=0.4]
  \draw [cblue!50,fill=cblue!50] (0,0) 
    to[out=80,in=180,looseness=1.3] ++(8mm,5mm)
    to[out=0,in=160,looseness=1.3] ++(12mm,3mm)
    to[out=160+180,in=-130,looseness=1] ++(12mm,3mm)
    to[out=-100,in=20,looseness=1] ++(-6mm,-7mm)
    to[out=160+180,in=-130,looseness=1] ++(8mm,2mm)
    to[out=-100,in=10,looseness=1] ++(-8mm,-6mm)
    to[out=160+180,in=-170,looseness=1] ++(8mm,0mm)
    to[out=-140,in=10,looseness=1] ++(-8mm,-5mm)
    to[out=10+180,in=20,looseness=1] ++(-25mm,0mm)
    to[out=20+180,in=-70,looseness=1] ++(-1mm,3mm)
    to[out=-20,in=-140,looseness=1] ++(8mm,1mm)
    to[out=140,in=0,looseness=1] cycle;
  
  \draw [fill=white,draw=none] (14mm,5mm) 
    to[out=20,in=180] ++(4mm,1.5mm)
    to[out=0,in=0] ++(0mm,-2mm)
    to[out=170,in=10] ++(-4mm,0mm)
    to[out=10+180,in=20+180] cycle;
  \end{scope}
  \end{scope}
  \end{pgfonlayer}
}
\begin{document}

\title{Star Games and Hydras}

\author[J\"{o}rg Endrullis]{J\"{o}rg Endrullis\rsuper{a}}
\address{
  \lsuper{a} Vrije Universiteit Amsterdam, Amsterdam
}
\email{j.endrullis@vu.nl, r.overbeek@vu.nl}

\author[Jan Willem Klop]{Jan Willem Klop\rsuper{b}}
\address{
  \lsuper{b}Centrum voor Wiskunde en Informatica (CWI), Amsterdam
}
\email{JWKlop1945@kpnmail.nl}

\author[Roy Overbeek]{Roy Overbeek\rsuper{a}}

\begin{abstract}
  The recursive path ordering is an established and crucial tool in term rewriting to prove termination. 
  We revisit its presentation by means of some simple rules on trees (or corresponding terms) equipped with a `star' as control symbol,
  signifying a command to make that tree (or term) smaller in the order being defined. 
  This leads to star games that are very convenient for proving termination of many rewriting tasks.
  For instance, using already the simplest star game on finite unlabeled trees, 
  we obtain a very direct proof of termination of the famous Hydra battle, direct in the sense that there is not the usual  mention of ordinals.
  We also include an alternative road to setting up the star games, 
  using a proof method of Buchholz, adapted by van Oostrom, resulting in a quantitative version of the star as control symbol.
  We conclude with a number of questions and future research directions.
\end{abstract}

\maketitle

\section*{Dedication}
\begin{quote}
  After graduating, Jos Baeten started his scientific work in the realm of \emph{the higher infinite} 
  with his Ph.D.\ thesis, appearing in the bibliography of Kanamori's book~\cite{kanamori1994higher} about that realm (see Baeten~\cite{baeten1984filters}). 
  The second author cherishes many precious memories of the cooperation with Jos and Jan Bergstra 
  in our years of process algebra around $1984$.

  The present note is dedicated to Jos in friendship, and includes several questions relating to the higher infinite of ordinals and cardinals.
\end{quote}

\section{Introduction}\label{sec:introduction}

To understand difficult notions in mathematics, logic or informatics it sometimes helps to develop a ``game view'' on the notion in question. 
A typical example is the notion of bisimulation. 
The second author of this note, at the start of his new job around Christmas 1980 at the CWI\footnote{%
  The Centrum Wiskunde \& Informatica of which Jos Baeten is the current director.
  In 1980 the CWI was still called Mathematical Centre Amsterdam.
}, in trying to understand the then newly proposed notion of the %
\emph{recursive path ordering} (RPO), found that his understanding was much eased by a ``gamification'', consisting of some simple rules for moving a control symbol $\star$ on finite labeled trees.

The star game version of RPO was called IPO, \emph{iterative path order}. IPO was used fruitfully by Bergstra \& Klop~\cite{bergstra1985algebra} to yield termination of a rewrite system evaluating process expressions in ACP, the \emph{algebra of communicating processes}. Together with Jan Bergstra and Aart Middeldorp, the second author of the current paper wrote a course book (in Dutch) Termherschrijfsystemen~\cite{bergstra1989termherschrijfsystemen}, while employed, respectively, in 1987 at the University of Amsterdam and the Vrije Universiteit Amsterdam. 
There the IPO star game is included, together with a proof of the crucial lemma of acyclicity and \emph{Kruskal's tree theorem}. 
This proof of the acyclicity lemma is published in the present paper for the first time (in English, the mentioned course book is in Dutch). 
The survey chapter~\cite{klop1993term} did not include a proof of the acyclicity lemma.

The iterative path order has been extended in various directions. 
Klop, van Oostrom and de Vrijer~\cite{klop2006iterative} have extended it to the \emph{iterative lexicographic path order} (ILPO) after including a well-known rule for lexicographic extensions of RPO\footnote{%
  For stars it was mentioned in~\cite{bergstra1989termherschrijfsystemen,klop1993term}. 
  For RPO it was defined in Dershowitz~\cite{dershowitz1991rewrite} and several introductions to RPO, e.g., Zantema's chapter on Termination in~\cite{zantema2003termination}.
}, see also \cite{gese:1990,baad:nipk:1998,dershowitz1982orderings,kamin2004two}.
In subsequent studies, van Raamsdonk and Kop~\cite{kop2008higher,kop2009iterative} have extended the framework of ILPO to higher-order rewriting~\cite{kop2008higher} and to deal with the recursive path order with \emph{status}~\cite{kop2009iterative}.
Nowadays, the recursive path order is a standard ingredient of every leading termination prover. In particular, it is an important part of the termination prover \textit{Jambox}~\cite{jambox,termination:automata:2015} developed by the first author. 
The recursive path order has also been applied fruitfully to prove confluence by decreasing diagrams~\cite{felg:oost:2013}.
Decreasing diagrams~\cite{oost:1994b,confluence:weak:diamond:2013,confluence:decreasing:diagrams:2018} is one of the strongest techniques for proving confluence of (abtract)  rewrite systems.

A peculiar matter with the IPO star game is that on the auxiliary objects, being finite trees adorned with control stars, the game reduction is \emph{not} terminating (\SN), while it is \SN{} when restricted to the proper, unstarred, objects.
Vincent van Oostrom~\cite{klop2006iterative} discovered a very interesting variant of the IPO star game where the stars are given a \emph{quantity of energy}, a natural number or ordinal. 
This variant had the remarkable property to be terminating even on the intermediate auxiliary trees (see Figure \ref{star-labels})
while inducing the same reduction relation on unstarred terms.
The termination proof in~\cite{klop2006iterative} did not use Kruskal's tree theorem plus acyclicity, but instead a very different proof method~\cite{buchholz1995proof}, developed by Buchholz, in the framework of proof theory investigations concerned with powerful independence theorems and fast-growing hierarchies. The method was adapted by van Oostrom in ~\cite{klop2006iterative}.
In our paper we have devoted a full section to this proof calling it `magical', which of course is a subjective appraisal, but which just means that we find this route surprisingly elegant and effective, saving us from the more laborious details of the symbol tracing proof of acyclicity.

\newcommand{\ilpopicbase}{
  \node (a1) [n] {};
  \node (a2) [n] at ($(a1) + (20:2cm)$) {};
  \node (a3) [n] at ($(a2) + (15:2cm)$) {};
  \node (a4) [n] at ($(a3) + (10:2cm)$) {};
  \node (r1) [n] at ($(a3) + (-40:1.8cm)$) {};
  \node (l1) [n] at ($(a2) + (70:2cm)$) {};
  \node (l2) [n] at ($(a3) + (60:1.5cm)$) {};
  
  \begin{scope}[->,very thick]
    \draw (a1) -- (a2);
    \draw (a2) -- (a3);
    \draw (a3) -- (a4);
    \draw (a3) -- (r1);
    \draw (r1) -- (a4);
    \draw (a2) -- (l1);
    \draw (a3) -- (l2);
  \end{scope}
}
\newcommand{\ilpopicextended}{
  \path (a1) to[bend left=-50] node [m,pos=.3] (t1) {} node [m,pos=.7] (t2) {} (a2);
  \draw (a1) to[bend left=-15] (t1); \draw (t1) to[bend left=-15] (t2); \draw (t2) to[bend left=-15] (a2);
  
  \node (ex1) [m] at ($(a2) + (120:.8cm)$) {};
  \node (ex2) [m] at ($(a2) + (120:1.6cm)$) {};
  \draw (a2) -- (ex1); \draw (ex1) -- (ex2); \draw (ex2) -- (l1);
  
  \node (loop) [m] at ($(a2) + (-20:1.3cm)$) {};
  \draw (a2) -- (loop); \draw (loop) -- (a3);

  \node (t1) [m] at ($(a3) + (100:1.0cm)$) {};
  \draw (a3) -- (t1); \draw (t1) -- (l2);

  \node (t1) [m] at ($(a3)!.5!(a4)!.4!(r1)$) {};
  \draw (a3) -- (t1); \draw (t1) -- (r1); \draw (t1) -- (a4);

  \node (infinite) [m] at ($(r1) + (30:1cm)$) {};
  \draw (r1) -- (infinite); \draw (infinite) -- (a4);
}
\newcommand{\ilpopicinfinite}{
  \node (loop2) [m] at ($(loop) + (3mm,-5mm)$) {};
  \node (loop3) [m] at ($(loop) + (-3mm,-5mm)$) {};
  \draw (loop) to[bend left=0] (loop2);
  \draw (loop2) to[bend left=0] (loop3);
  \draw (loop3) to[bend left=0] (loop);

  \node (ex3) [m] at ($(ex1) + (175:.8cm)$) {};
  \draw (ex1) -- (ex3); \draw (ex3) -- (ex2);

  \node (t2) [m] at ($(infinite) + (-10:1cm)$) {}; \draw (infinite) -- (t2);
  \node (t3) [m] at ($(t2) + (-20:.8cm)$) {}; \draw (t2) -- (t3);
  \node (t4) [rotate=-35] at ($(t3) + (-30:.4cm)$) {$\cdots$};
}

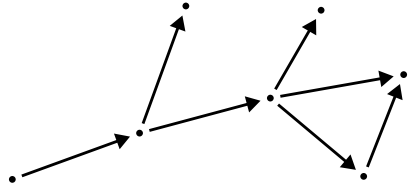
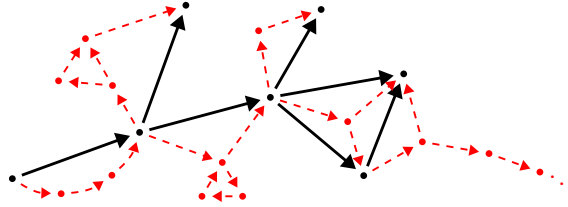
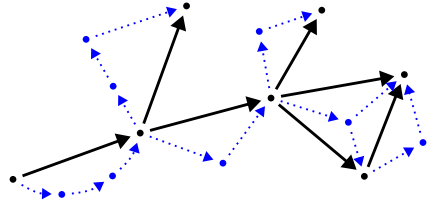
\begin{figure}[ht]
  \centering
  \begin{subfigure}{.8\textwidth}
    \centerline{\scalebox{0.9}{
      \begin{tikzpicture}[default,node distance=20mm,n/.style={smallCircle,fill=black}]
        \ilpopicbase
      \end{tikzpicture}
    }}
    \caption{Original terms with reduction relation $\to_1$, to be proved $\SN$.}\label{star-labels-a}
  \end{subfigure}\\[2ex]
  \begin{subfigure}{.8\textwidth}
    \centerline{\scalebox{0.9}{
      \begin{tikzpicture}[default,node distance=20mm,n/.style={smallCircle,fill=black}]
        \ilpopicbase
        \begin{scope}[->,red,m/.style={smallCircle,red}, dashed]
          \ilpopicextended
          \ilpopicinfinite
        \end{scope}
      \end{tikzpicture}
    }}
    \caption{Original terms together with auxiliary terms and reduction relation $\to_2$ 
            such that ${\to_1} \subseteq {\to_2^+}$, and $\to_2^+$ restricted to the original terms is $\SN$; $\to_2$ itself need not be $\SN$.
            It follows that $\to_1$ is $\SN$. This is the situation of ILPO \emph{with stars}.}\label{star-labels-b}
  \end{subfigure}\\[2ex]
  \begin{subfigure}{.8\textwidth}
    \centerline{\scalebox{0.9}{
      \begin{tikzpicture}[default,node distance=20mm,n/.style={smallCircle,fill=black}]
        \ilpopicbase
        \begin{scope}[->,blue,m/.style={smallCircle,blue}, dotted]
          \ilpopicextended
        \end{scope}
       \end{tikzpicture}
    }}
    \caption{Original terms together with auxiliary terms and reduction relation $\to_3$ 
            such that $\to_3$ is $\SN$, and ${\to_2^+} = {\to_3^+}$, 
            restricted to original terms. It follows that $\to_1$ is $\SN$. 
            This is the situation of ILPO \emph{with number labels}.}\label{star-labels-c}
  \end{subfigure}
  \caption{Rewriting of objects simulated by star rules (\subref{star-labels-b}), red and dashed, and by number star rules (\subref{star-labels-c}), blue and dotted. Note that the red auxiliary rewriting relation is not always terminating, while the blue one is. We note that the non-terminating red parts of the reduction are not simply cut away; their effect is in a more subtle way after finite unwinding subsumed in the blue terminating reduction.}
  \label{star-labels}
\end{figure}

Recently, in a draft book on abstract rewriting by the first two authors, we were including the Kirby-Paris Hydra battle~\cite{kirby1982accessible}, a paradigm example in the termination area~\cite{buchholz2007another,dershowitz2007hydra,moser2009hydra,touzet1998encoding}. The terminator Hercules succeeds in killing the monstrous Hydra, in the Greek mythology helped by the golden sword given to him by goddess Athena. While including this killer example of termination, a question, thirty years overdue, occurred to us: if the IPO star game is so flexible and versatile as it sometimes appeared to be, should it maybe also prove the victory of Hercules? And \emph{lo and behold}, the termination proof of the Kirby-Paris Hydra battle via a simulation in the IPO game, is almost immediately obvious, as we shall demonstrate in Section~\ref{sec:hydras}. Additionally, we design an even more monstrous Hydra, and show that it also is no match for IPO.

Remarkably, the usual proofs for termination of the Kirby-Paris Hydra battle use ordinals in the form of Cantor normal forms, or even the more intricate ordinal diagrams~\cite{takeuti2013proof}, but the proof via Kruskal's tree theorem and the star game in this paper does not refer to ordinals at all. Actually, ordinals are hiding in the background: the Kirby-Paris Hydra example requires only trees labelled with~$0$, i.e., unlabelled finite trees, and this game is paramount to the nested multiset ordering over natural numbers, which has, as is well-known, ordinal strength $\epsilon_0$~\cite{klop1993term}. The more general star game, employing natural numbers as node labels, is much stronger, reaching the ordinal $\phi_{\omega}(0)$ in the Veblen notation~\cite{dershowitz87}.

However, there are monsters and giant monsters, some of which are much harder to conquer than the Kirby-Paris Hydra.
The most well-known giant monster is the Buchholz Hydra~\cite{hamano1998direct,okada1988note,buchholz1987independence}, described in this paper. 
It is far beyond a simulation by the IPO star game as presented here, because its astronomically large ordinal dwarfs even the ordinals in the Veblen hierarchy.

\subsection*{Outline}
In the introduction we give a lengthy historical account of sources leading up to the current survey paper on star games as a termination framework.
In Section~\ref{sec:prelims} we recapitulate the main notions and notations for reading this paper, including a nutshell account of some of the first large ordinals.
Section~\ref{sec:star:games} presents the main star game of this paper,
the star game called \emph{iterative path order} (IPO).

In Section~\ref{sec:scenic:route} we establish termination of IPO using Kruskal's tree theorem and acyclicity. 
Section~\ref{sec:kruskal} gives a brief introduction to Kruskal's tree theorem and Higman's lemma.
In Section~\ref{sec:star:termination} we prove the pivotal lemma of acyclicity.

Section~\ref{sec:buchholz} contains a different route towards the same result of \SN{} of IPO, via a refinement of stars, and a proof method that is interesting in its own right.

Section~\ref{sec:hydras} presents three Hydras: the ``classical'' one by Kirby and Paris; a super Hydra of our making, still provably terminating by our IPO star game; and a hyper Hydra of Buchholz, that is beyond the power of IPO.

The final section, Section~\ref{sec:conclusion}, presents several directions of further research.

\section{Preliminaries}\label{sec:prelims}

In this section we will briefly enumerate and describe some of the main notions and notations figuring in this paper.
We will work with \emph{unranked} terms (basically labelled, ordered trees) where the symbols of the signature do \emph{not} have a fixed arity. 
We consider unordered trees (algebraic trees) as equivalence classes of terms modulo the permutation of arguments.

\subsection{Abstract rewriting}

\begin{definition}[Abstract rewriting]
  An \emph{abstract reduction system (ARS)} $\aars = \langle A, \to \rangle$ consists of a set $A$ of \emph{objects} and a binary \emph{rewrite} or \emph{reduction relation} ${\to} \subseteq {A \times A}$. 
\end{definition}

Let $\aars = \langle A, {\to} \rangle$ be an ARS.

\begin{definition}[Rewrite sequence]
  A \emph{rewrite sequence} (or \emph{reduction}) in $\aars$ is a sequence of rewrite steps $a_0 \to a_1 \to a_2 \to \cdots$
  where $a_0, a_1,\ldots \in A$.
  If the rewrite sequence is finite,
  say $a_0 \to a_1 \to \cdots \to a_n$ for $n \ge 0$,
  the sequence is said to have length $n$ and to end in $a_n$.
  If $a_0 = a_n$ and $n > 0$, then we speak of a \emph{reduction cycle}. 
  If there is a finite rewrite sequence from $a$ and to (ending in) $b$ we write $a \to^* b$ or $a \mred b$.
\end{definition}

\begin{definition}
  An object $a \in A$ is a \emph{normal form} if there is no $b \in A$ such that $a \to b$.
\end{definition}

\begin{definition}
  The ARS $\aars$ is said to be
  \begin{enumerate}
    \item
      \emph{terminating} (or \emph{strongly normalizing}), denoted $\SNr{\aars}$, if there are no infinite reductions;
    \item \emph{weakly normalizing} if for every $a \in A$ there exists a normal form $b$ such that $a \mred b$; 
    \item
      \emph{confluent} (has the \emph{Church-Rosser property}), denoted $\CRr{\aars}$, if ${\mredi \cdot \mred} \subseteq {\mred \cdot \mredi}$;
    \item
      \emph{acyclic} if there are no reduction cycles.
  \end{enumerate}
\end{definition}

\subsection{Term rewriting}
We introduce some basic notions of term rewriting and unranked terms.
For background on term rewriting with unranked terms, we refer to~\cite{kutsia2002, kutsia2004}.
For general background in (term) rewriting we refer to Terese~\cite{terese}, Klop~\cite{klop1993term}, Dershowitz-Jouannaud~\cite{dershowitz1991rewrite}, Baader-Nipkow~\cite{baad:nipk:1998}.

Let $\Sigma$ be a (possibly infinite) set of symbols, called the \emph{(unranked) signature}. 
Let $\avars = \{x,y,z,\ldots \}$ be an infinite set of variables such that $\avars \cap \Sigma = \emptyset$. 

\begin{definition}[Terms]
  The set of \emph{(unranked) terms $\ter{\Sigma}{\avars}$} over the signature $\Sigma$ and set of variables $\avars$ is inductively defined by:
  \begin{enumerate}[label=(\roman*)]
    \item 
      $\avars \subseteq \ter{\Sigma}{\avars}$, and
    \item 
      $f(t_1,\ldots,t_n) \in \ter{\Sigma}{\avars}$ whenever $f \in \Sigma$ and $t_1,\ldots,t_n \in \ter{\Sigma}{\avars}$.
  \end{enumerate}
  We denote terms $f()$ by $f$ for short. 
\end{definition}
\noindent
As we only consider unranked terms,
we will simply speak of terms. 

\begin{remark}\label{ex:tree:ordered}
  The set of terms $\ter{\Sigma}{\avars}$ is isomorphic to the set of ordered, rooted finite trees with inner nodes labeled by elements from $\Sigma$ and leaves labelled by elements from $\Sigma \cup \avars$. Here ``ordered'' means that there is an order between node siblings. In illustrations, the sibling-order is the left-to-right. 
  We frequently depict terms by ordered trees.
  For instance,
  \begin{center}
    \begin{tikzpicture}[arsdefault,level distance=8mm,nodes={circle,outer sep=0.5mm},sibling distance=13mm]
  \node (3) {$3$}
      child { node (5) {$5$}
    }
      child { node (7) {$7$}
        child { node (9) {$9$}
      }
    }
      child { node (8) {$8$}
        child { node (0) {$0$}
          child { node (1) {$1$}
        }
          child { node (5) {$5$}
        }
      }
    };
  \begin{pgfonlayer}{background}
  \end{pgfonlayer}
  
    \end{tikzpicture}
  \end{center}  
  depicts the term $3(5,7(9),8(0(1,5)))$.
\end{remark}

\begin{definition}[Variables]
  The set of \emph{variables $\vars{t} \subseteq \avars$ of $t \in \ter{\Sigma}{\avars}$} is given by:
  \begin{align*}
    \vars{x} &= \{x\} &
    \vars{f(t_1,\ldots,t_n)} &= \vars{t_1} \cup \cdots \cup \vars{t_n}
  \end{align*}
\end{definition}

\begin{definition}[Positions]
  The set of \emph{positions $\pos{t}\subseteq \nat^*$ of $t \in \ter{\Sigma}{\avars}$} is defined by:
  \begin{align*}
    \pos{x} &= \{\posemp\} &
    \pos{f(t_1,\ldots,t_n)} &= \{\posemp\} \cup \{ip \mid 1\le i\le n,\;p \in \pos{t_i}\}
  \end{align*}
  For $p \in \pos{t}$, the \emph{subterm $\subtermat{t}{p}$ of $t$ at position $p$} is defined by:
  \begin{align*}
    \subtermat{t}{\posemp} &= t &
    \subtermat{f(t_1,\ldots,t_n)}{ip} &= \subtermat{t_i}{p}
  \end{align*}
\end{definition}

\begin{definition}[Substitutions]
  A \emph{substitution $\sigma$} is a map $\sigma : \avars \to \ter{\Sigma}{\avars}$.
  We extend the domain of substitutions $\sigma$ to terms $\ter{\Sigma}{\avars}$ as follows:
  \begin{align*}
    \sigma(f(t_1,\ldots,t_n)) &= f(\sigma(t_1),\ldots,\sigma(t_n))
  \end{align*}
  We write $\{\,x_1 \mapsto t_1,\ldots,\,x_n\mapsto t_n\,\}$ for the substitution $\sigma$ given by
  $\sigma(x_1) = t_1$, \ldots, $\sigma(x_n) = t_n$ and $\sigma(y) = y$ for every $y \in \avars\setminus\{\,x_1,\ldots,x_n\,\}$.
\end{definition}

\begin{definition}[Contexts]
  A \emph{context $C$} is a term $\ter{\Sigma}{\avars\cup\{\cxthole\}}$ containing precisely one occurrence of~$\cxthole$, that is, there is precisely one position $p \in \pos{C}$ such that $\subtermat{C}{p} = \cxthole$. For a context $C$ and a term $t$, we write $C[t]$ for the term $\{\,\cxthole \mapsto t\,\}(C)$.
\end{definition}

\begin{definition}[Rewriting]
  A \emph{rewrite rule $\ell \to r$} over $\Sigma$ and $\avars$ is a pair $(\ell,r) \in \ter{\Sigma}{\avars} \times \ter{\Sigma}{\avars}$ of terms such that the left-hand side $\ell$ is not a variable ($\ell \not\in \avars$), and all variables in the right-hand side $r$ occur in $\ell$ ($\vars{r} \subseteq \vars{\ell}$). 
  
  A \emph{term rewrite system (TRS) $\atrs$} over $\Sigma$ and $\avars$ is a set of rewrite rules over $\Sigma$ and $\avars$.
  The \emph{rewrite relation} ${\to_{\atrs}} \subseteq {\ter{\Sigma}{\avars} \times \ter{\Sigma}{\avars}}$ \emph{generated by $\atrs$} consist of all pairs
  \begin{align*}
    C[\sigma(\ell)] \to_{\atrs} C[\sigma(r)]
  \end{align*} 
  for contexts $C$, rules $\ell\to r \in \atrs$, and substitutions $\sigma : \avars \to \ter{\Sigma}{\avars}$. Furthermore, we write~$\redrat{\atrs}{p}$ whenever additionally $\subtermat{C}{p} = \cxthole$.
  We drop the subscript $\atrs$ in $\to_{\atrs}$ and $\redrat{\atrs}{p}$ whenever $\atrs$ is clear from the context.
\end{definition}

\begin{notation}[Argument vector notation]\label{notation:argument:vector}
We use expressions of the form $\vec{x}$ to denote a finite (possibly empty) sequence of distinct variables $(x_1, \ldots, x_n)$ for some $n \in \mathbb{N}$. For example, the rule schema
\[
  f(\vec{x}, y, \vec{z}) \to f(\vec{x}, \vec{z})
\]
enumerates rules
\[
f(x_1, \ldots, x_n, y, z_1, \ldots, z_m)
\to
f(x_1, \ldots, x_n, z_1, \ldots, z_m)
\]
for every $m,n \geq 0$. The rule schema thus allows deleting an arbitrary subtree $y$ under head symbol $f$.
\end{notation}

\subsection{Tree rewriting}

We introduce (finite, rooted, labelled, unordered) trees as equivalence classes of terms modulo the permutation of arguments. 
In our setting this view is convenient as we have already introduced terms, we use terms to denote trees, and we will switch from trees to terms by ``freezing'' in order to make use of positions and tracing.

\begin{definition}[Swapping]
  Let $\atrs_\commute$ be the TRS consisting of the rules
  \begin{align*}
     f(\vec{u},x,y,\vec{w}) \to f(\vec{u},y,x,\vec{w})
  \end{align*}
  for every $f \in \Sigma$. 
  We write $\commutestep$ for the rewrite relation $\to_{\atrs_\commute}$,
  and we use $\commute$ to denote the equivalence relation generated by $\commutestep$.
\end{definition}
Note that $\commute$ coincides with the transitive closure of $\commutestep$, that is, ${\commute} = {\commutestep^*}$.

\begin{definition}[Trees]
  The set of (unordered) \emph{trees} $\tre{\Sigma}{\avars}$ over $\Sigma$ and $\avars$ is $\quotient{\ter{\Sigma}{\avars}}{\commute}$, 
  the equivalence classes of terms modulo $\commute$.
  For a term $t \in \ter{\Sigma}{\avars}$, we write $\treecommute{t}$ 
  for the equivalence class of $t$ with respect to $\commute$.
  So $\tre{\Sigma}{\avars} = \set{\treecommute{t} \mid t \in \ter{\Sigma}{\avars}}$.
\end{definition}

\begin{example}
  In a tree the subtrees of a node may be permuted without changing the tree.
  So $\treecommute{3(5,7(9),8(0(1,5)))}$ denotes the same tree as  $\treecommute{3(8(0(5,1)),5,7(9))}$.
\end{example}

In this paper we speak of terms whenever the arguments are ordered, and we speak of trees when the arguments  are unordered.
Most of this paper will be concerned with trees.
In the proof in Section~\ref{sec:star:termination}, however, we ``freeze'' trees, obtaining terms, in order to make use of positions and tracing throughout rewrite sequences. This freezing procedure was also used in~\cite{klop1993term}.

\begin{remark}
  We depict trees in the same way as we depict terms:
  \begin{center}
    \begin{tikzpicture}[arsdefault,level distance=8mm,nodes={circle,outer sep=0.5mm},sibling distance=13mm]
  \node[xshift=0mm] (3) {$3$}
      child { node (8) {$8$}
        child { node (0) {$0$}
          child { node (5) {$5$}
        }
          child { node (1) {$1$}
        }
      }
    }
      child { node (5) {$5$}
    }
      child { node (7) {$7$}
        child { node (9) {$9$}
      }
    };
  \begin{pgfonlayer}{background}
  \end{pgfonlayer}
  
    \end{tikzpicture}
  \end{center}
  It will be clear from the context whether such illustration 
  depicts a term (ordered) or a tree (unordered).
\end{remark}

We rewrite trees using term rewriting systems.

\begin{definition}
  Let $\atrs$ be a term rewriting system.
  We extend $\to_\atrs$ from terms $\ter{\Sigma}{\avars}$ to trees $\tre{\Sigma}{\avars}$ by defining
  $\treecommute{s} \to_\atrs \treecommute{t}$
  whenever $s,t \in \ter{\Sigma}{\avars}$ and $s \to_\atrs t$.
\end{definition}
\noindent
So $\set{(\treecommute{s}, \treecommute{t}) \mid s,t \in \ter{\Sigma}{\avars},\; s \to_\atrs t}$ is the rewrite relation induced by $\atrs$ on trees. 

A term rewrite system $\atrs$ thus induces abstract rewrite systems on terms and trees, namely $\langle \ter{\Sigma}{\avars},\, \to_{\atrs} \rangle$
and $\langle \tre{\Sigma}{\avars},\, \to_{\atrs} \rangle$, respectively.
Hence, the notions of rewrite sequences, reduction cycles, normal forms, termination, confluence, acyclicity and so forth, carry over from abstract rewriting to term and tree rewriting.

\subsection{Orders}

We also presuppose some familiarity with the basics of order theory, partial orders, well-founded orders, and trees. Particularly important in this paper is the notion of a \emph{well-quasi-order} (\emph{wqo}), an order that often proves fruitful in the termination setting ~\cite{weiermann2020wqo}.

\begin{definition}(Well-quasi-order)
  A \emph{quasi-order} is a binary relation that is reflexive and transitive. A \emph{well-quasi-order (wqo)} is a quasi-order $\trianglelefteq$ such that any infinite sequence of elements $a_0, a_1, \ldots, a_i, \ldots, a_j, \ldots$ contains an increasing pair $a_i \trianglelefteq a_j$ with $i < j$.
\end{definition}

\begin{proposition}
Any wqo is well-founded.
\end{proposition}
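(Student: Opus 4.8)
The plan is a short proof by contradiction, once the right reading of ``well-founded'' for a quasi-order is fixed. Recall that the strict part of a quasi-order $\trianglelefteq$ is the relation $\triangleleft$ defined by $a \triangleleft b$ iff $a \trianglelefteq b$ and $b \not\trianglelefteq a$, and that $\trianglelefteq$ being well-founded means there is no infinite strictly descending chain $a_0 \triangleright a_1 \triangleright a_2 \triangleright \cdots$.

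First I would record the routine fact that $\triangleleft$ is transitive: from $a \triangleleft b$ and $b \triangleleft c$ we get $a \trianglelefteq c$ by transitivity of $\trianglelefteq$, and $c \not\trianglelefteq a$, since otherwise $c \trianglelefteq a \trianglelefteq b$ would give $c \trianglelefteq b$, contradicting $b \triangleleft c$. In particular, along any descending chain $a_i \triangleright a_{i+1} \triangleright \cdots \triangleright a_j$ with $i < j$ one obtains $a_i \triangleright a_j$.

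Now suppose, toward a contradiction, that $a_0 \triangleright a_1 \triangleright a_2 \triangleright \cdots$ is an infinite strictly descending chain. Regarding $(a_i)_{i \in \nat}$ as an infinite sequence of elements, the wqo property supplies indices $i < j$ with $a_i \trianglelefteq a_j$. But transitivity of $\triangleleft$ along the chain gives $a_i \triangleright a_j$, i.e.\ $a_j \trianglelefteq a_i$ and $a_i \not\trianglelefteq a_j$, contradicting $a_i \trianglelefteq a_j$. Hence no infinite strictly descending chain exists, so $\trianglelefteq$ is well-founded.

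I expect no real obstacle here: the only step requiring a moment's care is making explicit that well-foundedness of a quasi-order refers to its strict part and verifying that this strict part is transitive; after that, the defining property of a wqo closes the argument in one line.
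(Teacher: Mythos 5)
Your proof is correct. The paper states this proposition without any proof (it is treated as a routine fact following the definition of wqo), so there is nothing to compare against; your argument --- identify well-foundedness with the absence of infinite descending chains in the strict part $\triangleleft$, check that $\triangleleft$ is transitive, and then let the wqo property contradict any infinite chain $a_0 \triangleright a_1 \triangleright \cdots$ by producing $i<j$ with $a_i \trianglelefteq a_j$ --- is the standard one and is carried out cleanly.
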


\subsection{Ordinals}

An \emph{ordinal}~\cite{rubin1967set} is a transitive set whose elements are again transitive sets. 
A set~$A$ is \emph{transitive} if $x \subseteq A$ for every $x \in A$.
In this view, every ordinal $\alpha$ is the set of all ordinals smaller than $\alpha$; thus $\alpha = \set{ \beta \mid \beta < \alpha }$.
So the class of all ordinals is totally ordered by the membership relation.
The least ordinal is $\set{}$, called $0$.
The \emph{successor} $\alpha^+$ of an ordinal $\alpha$ is the ordinal $\alpha^+ = \alpha \cup \set{ \alpha }$.
A \emph{limit} ordinal is an ordinal that is neither $0$ nor a successor ordinal. 
So the smallest ordinals are
\begin{align*}
    0 = \set{},\; 
    1 = 0^+ = \set{\set{}},\;
    2 = 1^+ = \set{\set{}, \set{\set{}}},\quad
    \ldots,\quad
    \omega,
    \ldots
\end{align*}
The ordinal $\omega$ is the least infinite ordinal, $\omega = \set{0,1,2,3,\ldots}$.

The Kirby-Paris Hydra and the Buchholz Hydra treated in this paper have traditionally been studied and used in the context of ordinal analysis~\cite{rathjen2006art}, a branch of proof theory that measures the strength of mathematical theories in terms of ordinals. More specifically, the \emph{ordinal strength} (or \emph{proof-theoretic ordinal}) of such a theory $T$ can be characterized as the smallest ordinal that $T$ cannot prove well-founded. In this paper, we will say that a Hydra has \emph{ordinal strength} $\alpha$ if a proof of its termination requires a theory with ordinal strength at least $\alpha$.

While we will not perform any ordinal analysis in this paper, we will ask questions pertaining to it, in particular in relation to our newly introduced Star Hydra. For this reason, we provide the following overview of large ordinals, assuming only an understanding of the limit ordinal $\omega$ and ordinal arithmetic~\cite{rubin1967set}.

\newcommand{\rddots}{\cdot^{\cdot^{\cdot}}}
\newcommand{\lddots}{\cdot_{\cdot_{\cdot}}}

\subsubsection{Ordinals up to $\epsilon_0$}\label{sec:ordinals:up:to:epsilon}

Any ordinal can be written as a polynomial 
$$\omega^{\alpha_1}c_1 + \omega^{\alpha_2}c_2 + \cdots + \omega^{\alpha_n}c_n,$$
where $\alpha_1 > \alpha_2 > \cdots > \alpha_n \geq 0$
are ordinals and the $c_i$ are positive integers. This representation is called Cantor normal form (CNF).

Using ordinals $\leq \omega$ as elementary symbols, one can express ordinals lying on the spectrum $\omega, \omega^\omega, \omega^{\omega^\omega}, \ldots$ in CNF. The limit ordinal for this sequence is the first solution to the fixed point equation $\alpha = \omega^\alpha$, called $\epsilon_0$. The weakest Hydra treated in this paper, the Kirby-Paris Hydra, has ordinal strength $\epsilon_0$.

The equation $\alpha = \omega^\alpha$ in fact has infinitely many fixed points. The $\iota$-th ordinal that is a solution to $\alpha = \omega^\alpha$ is called $\epsilon_\iota$. If we similarly generalize the subscript of $\epsilon$, the limit for what we can express is the first fixed point solution to $\alpha = \epsilon_\alpha$. 
Call this ordinal $\zeta_0$. 
Once again generalizing the subscript, we can express ordinals up to the first fixed point solution to $\alpha = \zeta_\alpha$. 
If we were to continue in this fashion, we would soon exhaust the Greek alphabet.
Veblen functions provide a systematic solution to this problem.

\subsubsection{Ordinals up to $\Gamma_0$}
The Veblen hierarchy $\phi_\alpha(\gamma)$ can be defined by two clauses:
\begin{itemize}
\item for $\alpha = 0$, $\phi_\alpha(\gamma) = \omega^\gamma$, and
\item for $\alpha > 0$,
$\phi_\alpha(\gamma)$ is the $(1+\gamma)$-th common fixed point of the functions $\phi_\beta$ for all $\beta < \alpha$.
\end{itemize}

So in particular, we obtain $\phi_1(0) = \epsilon_0$ and $\phi_2(0) = \zeta_0$.
The first solution to the fixed point equation $\phi_\alpha(0) = \alpha$ is the Feferman–Sch\"utte ordinal, and it is denoted by $\Gamma_0$ \cite{veblen1908, schutte1965}. 

\subsubsection{Ordinals beyond $\Gamma_0$}

Multiple definitions exist that allow us to go beyond the Veblen hierarchy.
Buchholz's $\psi$ functions~\cite{buchholz1986psifunctions}  are one example.
The ordinal strength of the strongest Hydra in our paper, the powerful Buchholz Hydra, exceeds even all the ordinals expressible using the $\psi$ functions.

\section{Star Games}\label{sec:star:games}

Proving termination (\SN{}) of various rewriting sequences is of major importance. Not only is it a central aspect of program correctness, it is also a stepping stone towards confluence, using Newman's lemma.
Although it is in general undecidable whether a (string or term) rewrite system is terminating, in many instances termination can be proven, 
and various techniques have been developed to do so.

In this paper we present one of the most powerful of such termination proof techniques:
the \emph{recursive path order}~\cite{dershowitz1982orderings} as developed by Dershowitz.
We will use the presentation as in~\cite{klop2006iterative},
where the inductive definitions of the usual presentation
are replaced by a reduction procedure that is easier to grasp to our taste.
We develop this application, employing trees with annotations, 
natural numbers or ordinals at the nodes, and an auxiliary marker $\star$ at some nodes.
It turns out that we will arrive at vast generalizations of the well-known \emph{multiset orders}~\cite{manna1980lectures, oostrom1994phd}.

\begin{figure}
  \centering
    \begin{tikzpicture}[arsdefault,level distance=8mm,nodes={circle,outer sep=0.5mm},sibling distance=13mm]
\node (3^star) {$3^\star$}
    child { node (5) {$5$}
  }
    child { node (7) {$7$}
      child { node (9^star) {$9^\star$}
    }
  }
    child { node (8^star) {$8^\star$}
      child { node (0) {$0$}
        child { node (1) {$1$}
      }
        child { node (5) {$5$}
      }
    }
  };
\begin{pgfonlayer}{background}
\end{pgfonlayer}

    \end{tikzpicture}
  \caption{A tree with stars.}
  \label{fig:tree:stars}
\end{figure}
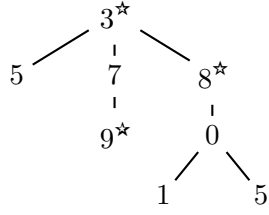

\begin{definition}[Starred signature]
$\Sigma^\star = \{ f^\star \mid f \in \Sigma \}$ is the \emph{starred signature}. It is assumed that $\Sigma$ is unstarred, i.e., it does not contain a symbol of the form $x^\star$.
\end{definition}

\newcommand{\terstar}[2]{\ter{#1}{#2}^\star}
\newcommand{\trestar}[2]{\tre{#1}{#2}^\star}
\begin{notation}
We abbreviate $\ter{\Sigma \cup \Sigma^\star}{\avars}$ as $\terstar{\Sigma}{\avars}$, and
$\tre{\Sigma \cup \Sigma^\star}{\avars}$
as $\trestar{\Sigma}{\avars}$.
\end{notation}

\begin{definition}[Iterative path order]\label{def:star:game}%
  Let ${>} \subseteq \Sigma \times \Sigma$ be a well-founded partial order, often called the \emph{precedence ordering}.
  The \emph{iterative path order (IPO)} on trees $\tre{\Sigma}{\avars}$ is 
  $${\ilpored^+} \cap {(\tre{\Sigma}{\avars} \times \tre{\Sigma}{\avars})}\;.$$
  Here $\ilpored$ is the rewrite relation induced on $\tre{\Sigma}{\avars}$ by the TRS $\trsstar$ consisting of the rules
  \begin{gather*}
    \begin{aligned}
      (\lput) && f(\vec{x}) &\ilpored f^\star(\vec{x}) \\
      (\lselect) && f^\star(\vec{x},y,\vec{z}) &\ilpored y  && \\
      (\lcopy) && f^\star(\vec{x}) &\ilpored g(\underbrace{f^\star(\vec{x}),\ldots,f^\star(\vec{x})}_{\text{$0$ or more}}) && \text{if $f > g$} \\
      (\ldown) && f^\star( \vec{x},g(\vec{y}),\vec{z} ) &\ilpored f( \vec{x}, \underbrace{g^\star(\vec{y}), \ldots, g^\star(\vec{y})}_{\text{$0$ or more}}, \vec{z} )
    \end{aligned}
  \end{gather*}
  for every $f,g \in \Sigma$.
  
  Overloading names and notation, we also use $\ilpored$ to denote the rewrite relation on  $\terstar{\Sigma}{\avars}$ induced by these rules.
\end{definition}

\begin{remark}
Although they are rule schemas in the technical sense, we will often refer to $\lput$, $\lselect$, $\lcopy$, and $\ldown$ simply as rules. Similarly for many other rule schemas in this paper.
\end{remark}

The idea behind $\trsstar$ is that marking the root symbol of a tree, 
by means of the $\lput$ rule, corresponds to the obligation to make that tree \emph{smaller},
whereas the other rules correspond to ways in which this can be brought about:
\begin{enumerate}
  \item
    The $\lselect$ rule expresses that selecting one of the children of a tree makes it smaller.
  \item
    The $\lcopy$ rule expresses that a tree $t$ can be made smaller
    by putting finitely many copies of trees smaller than $t$ 
    below a root symbol $g$ which is less heavy than the root symbol $f$ of $t$.
  \item
    The $\ldown$ rule expresses that a tree can be made smaller by replacing one of its subtrees with finitely many smaller subtrees.
\end{enumerate}

\begin{remark}[$\ldown$ vs.\ $\llex$]\label{rem:lex}
  In the literature on iterative lexicographic path orders, one frequently finds the $\llex$ rule~\cite{klop1993term,gese:1990,klop2006iterative,kop2009iterative} instead of the $\ldown$ rule:
  \begin{align*}
      (\llex) && f^\star( \vec{x},g(\vec{y}),\vec{z} ) &\ilpored f( \vec{x}, g^\star(\vec{y}), \underbrace{\ell, \ldots, \ell}_{\text{$0$ or more}} )
      && \text{where $\ell = f^\star( \vec{x},g(\vec{y}),\vec{z} )$}
  \end{align*}
  The $\llex$ rule is stronger than the $\ldown$ rule in the sense that the latter can be derived using the former (in combination with the rules $\lput$ and $\lselect$).
  However, the $\llex$ rule is only valid on terms,
  but it leads to non-termination in the setting of (unordered) trees.
  The $\ldown$ rule, on the other hand, is valid also on trees, and this is the setting that we are focusing on in this paper.
\end{remark}

\begin{example}[A cyclic reduction]\label{example:cyclic}
  A cyclic reduction
  \medskip
  \begin{center}
    {
\newcommand{\F}{$\fun{f}$}
\newcommand{\A}{$\fun{a}$}
\newcommand{\B}{$\fun{b}$}
\newcommand{\G}{$\fun{g}$}
\renewcommand{\H}{$\fun{h}$}
\begin{tikzpicture}[default,thick,level distance=8mm,sibling distance=10mm,inner sep=1mm,
  every node/.style={rectangle}]

\node[xshift=0mm] (t1) {\F}
    child { node (A) {\A}
  }
    child { node (G) {\G}
      child { node (B) {\B}
    }
  };
\begin{pgfonlayer}{background}
\end{pgfonlayer}

  \node [startAt=t1] {{\large $\star$}};

  \begin{scope}[medium tree]
\node[xshift=33mm] (t2) {\H}
    child { node (Fa) {\F}
      child { node (A) {\A}
    }
      child { node (G) {\G}
        child { node (B) {\B}
      }
    }
  }
    child { node (Fb) {\F}
      child { node (A) {\A}
    }
      child { node (G) {\G}
        child { node (B) {\B}
      }
    }
  };
\begin{pgfonlayer}{background}
\end{pgfonlayer}

  \end{scope}

  \node [startAt=Fa] {{\large $\star$}};
  \node [startAt=Fb] {{\large $\star$}};
  
  \begin{scope}[medium tree]
\node[xshift=74mm] (t3) {\H}
    child { node (Fa) {\F}
      child { node (A) {\A}
    }
      child { node (G) {\G}
        child { node (B) {\B}
      }
    }
  }
    child { node (Fb) {\F}
      child { node (A) {\A}
    }
      child { node (G) {\G}
        child { node (B) {\B}
      }
    }
  };
\begin{pgfonlayer}{background}
\end{pgfonlayer}

  \end{scope}

  \node [startAt=t3] {{\large $\star$}};
  \node [startAt=Fa] {{\large $\star$}};
  \node [startAt=Fb] {{\large $\star$}};

\node[xshift=110mm] (t4) {\F}
    child { node (A) {\A}
  }
    child { node (G) {\G}
      child { node (B) {\B}
    }
  };
\begin{pgfonlayer}{background}
\end{pgfonlayer}

  \node [startAt=t4] {{\large $\star$}};

  \begin{scope}[very thick,->,n/.style={at end,anchor=north west,xshift=-2mm,inner sep=.75mm}]
    \draw ($(t1)!.3!(t2)$) -- node [n] {$\lcopy$} ++(7mm,0mm);
    \draw ($(t2)!.3!(t3)$) -- node [n] {$\lput$} ++(7mm,0mm);
    \draw ($(t3)!.3!(t4)$) -- node [n] {$\lselect$} ++(7mm,0mm);
  \end{scope}
\end{tikzpicture}
}
    \smallskip
  \end{center}
  in $\trsstar$. (Here $ h < f$.)
\end{example}

Clearly, $\trsstar$ is neither \SN, nor \CR.
However, we are interested in termination on star-free trees.
It turns out that $\trsstar$ does not admit infinite rewrite sequences
that contain an infinite number of star-free trees.
In other words, the restriction of the transitive closure $\ilpored^+$ to $\tre{\Sigma}{\avars}$  is terminating.

\begin{theorem}\label{thm:rpo:tree:sn}
  The relation $\ilpored^+$, restricted to $\tre{\Sigma}{\avars}$, is $\SN$.
\end{theorem}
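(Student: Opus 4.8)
The plan is to reduce termination to two results stated later in the paper --- Kruskal's tree theorem (Section~\ref{sec:kruskal}) and the acyclicity lemma (Section~\ref{sec:star:termination}) --- glued together by a small ``$\ilpored$ simulates tree embedding'' lemma. So I would argue by contradiction: suppose $t_0 \ilpored^+ t_1 \ilpored^+ t_2 \ilpored^+ \cdots$ is an infinite reduction in which every $t_i$ lies in $\tre{\Sigma}{\avars}$, i.e.\ is star-free (the intermediate trees occurring inside a single $\ilpored^+$-step may of course carry stars). By Kruskal's tree theorem the homeomorphic tree embedding $\treeemb$ is a well-quasi-order on $\tre{\Sigma}{\avars}$, so there are indices $i < j$ with $t_i \treeemb t_j$.

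The embedding lemma I would then prove is: for star-free $s,t$ with $s \treeemb t$ and $s \ne t$, we have $t \ilpored^+ s$. The proof goes by induction on $t$, following the recursive description of $\treeemb$: either $s$ already embeds into a child $t_k$ of $t = f(\dots,t_k,\dots)$, or the root label of $s$ equals that of $t$ (or is smaller in the precedence) and the children of $s$ embed injectively into the children of $t$. In the first case $t \iput f^\star(\dots,t_k,\dots) \iselect t_k$, and either $t_k = s$ or $t_k \ilpored^+ s$ by the induction hypothesis. In the second case I would first apply $\lput$, and then: if the root precedence step is strict, use $\lcopy$ to spawn one copy of $f^\star(\vec{t})$ (where $\vec{t}$ lists the children of $t$) under the root symbol of $s$ for each child of $s$, and rewrite each copy to the corresponding child of $s$ via $\lselect$ and the induction hypothesis; if the root labels are equal, use $\ldown$ with zero copies (interleaved with $\lput$) to delete the children of $t$ not in the image of the injection, and then rewrite the surviving children to the children of $s$ using the induction hypothesis and closure of $\ilpored$ under contexts. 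Since $s \ne t$, at least one genuine step is taken, so $t \ilpored^+ s$.

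Combining the two, the segment $t_i \ilpored^+ \dots \ilpored^+ t_j$ gives $t_i \ilpored^+ t_j$, while the embedding lemma gives $t_i = t_j$ or $t_j \ilpored^+ t_i$. In either case the reductions compose to a cycle $t_i \ilpored^+ t_i$ that begins and ends at the star-free tree $t_i$. It therefore only remains to exclude such cycles --- precisely the content of the acyclicity lemma: no $t \in \tre{\Sigma}{\avars}$ satisfies $t \ilpored^+ t$; equivalently, the IPO $\ilpored^+ \cap (\tre{\Sigma}{\avars} \times \tre{\Sigma}{\avars})$ is irreflexive. Note that this irreflexivity does \emph{not} by itself give $\SN$ --- a transitive irreflexive relation can still descend forever --- which is exactly why Kruskal's theorem is needed to turn it into well-foundedness. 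Granting acyclicity, no such infinite reduction exists, so $\ilpored^+$ restricted to $\tre{\Sigma}{\avars}$ is $\SN$.

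The hard part, and the one I would postpone to Section~\ref{sec:star:termination}, is the acyclicity lemma itself. The difficulty is that $\trsstar$ really does admit cycles (Example~\ref{example:cyclic}), but each such cycle must pass through starred trees, so one must show that no cycle returns to a star-free tree. The intended route is to ``freeze'' the unordered trees of a hypothetical cycle into ordered terms so that positions become available, and then trace the ancestry of the $\star$ markers along the cycle: a star is created only by $\lput$, and thereafter it can only descend (via $\ldown$) or be consumed (via $\lselect$, $\lcopy$, or $\ldown$); following this descent around the cycle yields a contradiction. By comparison Kruskal's tree theorem is simply cited from Section~\ref{sec:kruskal}, and the embedding lemma is a routine induction whose only mild chore is the bookkeeping of the deletion steps.
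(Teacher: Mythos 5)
Your proposal is correct and follows essentially the same route as the paper: it instantiates the paper's Proposition~\ref{basic} (Kruskal's tree theorem for the wqo, Proposition~\ref{ex:treeemb:rpo} for ``embedding implies reverse reduction'', and the acyclicity lemma of Section~\ref{sec:star:termination}), differing only in that you prove the embedding-simulation lemma by a direct induction on trees rather than via the intermediate TRS $\trsemb$ of Definition~\ref{def:tree:embedding:transformation}.
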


\noindent
We will present two proofs of this theorem.

The first proof makes use of Kruskal's tree theorem in combination with acyclicity of the reduction relation $\ilpored$.
This route has also been used in the seminal paper~\cite{dershowitz1982orderings} introducing the recursive path order. Nevertheless, the proof that we present has a very different flavour. In~\cite{dershowitz1982orderings} the recursive path ordering
is defined via an inductive definition, and the crucial step in the termination proof is establishing transitivity of the defined order. In our setting, transitivity of the iterative path order is immediate since the order is defined via the transitive closure of $\ilpored$. Our proof, presented in Section~\ref{sec:scenic:route}, establishes acyclicity by tracing symbol occurrences throughout $\ilpored$ reductions.

The second proof, presented in Section~\ref{sec:buchholz}, is based on well-founded induction, as in~\cite{buchholz1995proof,klop2006iterative}. This proof gives a stronger result than the first proof in the presence of infinite signatures $\Sigma$, as it establishes termination of the iterative path order for arbitrary well-founded partial orders $(\Sigma,<)$. For the first proof, $(\Sigma,\le)$ needs to be a wqo due to the use of Kruskal's tree theorem. 

Theorem~\ref{thm:rpo:tree:sn} is of significant practical interest because of the following corollary, which intuitively states that a relation $R$ is terminating if it can be simulated by $\ilpored^+$.

\begin{corollary}[Termination by simulation]\label{cor:termination:by:simulation}
  If $R \subseteq {\tre{\Sigma}{\avars} \times \tre{\Sigma}{\avars}}$
  and $R \subseteq {\ilpored^+}$,
  then $\SNr{R}$.
\end{corollary}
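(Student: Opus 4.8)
The plan is to obtain the corollary directly from Theorem~\ref{thm:rpo:tree:sn} by contraposition, with essentially no extra work. First I would spell out what $\SNr{R}$ fails to mean: if $R$ were not terminating, there would be an infinite rewrite sequence $t_0 \mathrel{R} t_1 \mathrel{R} t_2 \mathrel{R} \cdots$ with every $t_i \in \tre{\Sigma}{\avars}$, since by hypothesis $R \subseteq {\tre{\Sigma}{\avars} \times \tre{\Sigma}{\avars}}$.

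Next I would observe that, because the source and target of every $R$-step are star-free trees, the inclusion $R \subseteq {\ilpored^+}$ already gives $R \subseteq {\ilpored^+} \cap ({\tre{\Sigma}{\avars} \times \tre{\Sigma}{\avars}})$, i.e.\ $R$ is contained in the iterative path order of Definition~\ref{def:star:game}. Hence each step $t_i \mathrel{R} t_{i+1}$ is in particular a step of the relation ${\ilpored^+} \cap ({\tre{\Sigma}{\avars} \times \tre{\Sigma}{\avars}})$, so the sequence $t_0, t_1, t_2, \ldots$ is an infinite reduction in the ARS $\langle \tre{\Sigma}{\avars},\, {\ilpored^+} \cap ({\tre{\Sigma}{\avars} \times \tre{\Sigma}{\avars}}) \rangle$. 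This contradicts Theorem~\ref{thm:rpo:tree:sn}, and therefore $\SNr{R}$.

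There is no real obstacle: I do not need transitivity of the restricted relation, nor any control over how $\ilpored$ behaves on starred intermediate trees, since $R$ itself never leaves $\tre{\Sigma}{\avars}$. All the substance has been discharged in Theorem~\ref{thm:rpo:tree:sn}; the corollary merely repackages it in the form convenient for the applications in Section~\ref{sec:hydras}, where one exhibits a concrete $R$ (a Hydra-battle step relation) together with a simulation $R \subseteq {\ilpored^+}$.
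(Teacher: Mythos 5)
Your proposal is correct and is exactly the intended argument: the paper states the corollary without proof precisely because it is the immediate consequence of Theorem~\ref{thm:rpo:tree:sn} that you spell out, namely that an infinite $R$-sequence of star-free trees would be an infinite reduction in the restriction of $\ilpored^+$ to $\tre{\Sigma}{\avars}$. Nothing further is needed.
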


We illustrate this corollary by applying it to a simple word rewrite system.

\begin{example}[$\trsstar$ in practice]\label{ex:irpo_application}
  Consider the rewrite system with signature $\Sigma = \{1, 0\}$ and the single rewrite rule 
  \begin{align*}
     1(0(x)) \to 0(0(1(x)))\;. 
  \end{align*}
  By defining the precedence ordering $1 > 0$, we obtain the IPO reduction
  \[
  \begin{array}{llll}
    \treecommute{1(0(x))} & \ilpored_\lput & \treecommute{1^\star(0(x))} \\
    & \ilpored_\lcopy & \treecommute{0(1^\star(0(x)))} \\
    & \ilpored_\lcopy & \treecommute{0(0(1^\star(0(x))))} \\
    &\ilpored_\ldown & \treecommute{0(0(1(0^\star(x))))} \\
    &\ilpored_\lselect & \treecommute{0(0(1(x)))}
  \end{array}
  \]
  on unstarred trees. Since the conditions of Corollary~\ref{cor:termination:by:simulation} are met, it follows that $\to$ is terminating.
\end{example}

A more interesting example is given in Section~\ref{sec:hydras}, where $\trsstar$ is used to show that the Kirby-Paris Hydra is terminating.

\section{The First Proof: The Scenic Route}\label{sec:scenic:route}

In this section, we present the first proof of Theorem~\ref{thm:rpo:tree:sn} based on Kruskal's tree theorem, which presents a well-known way to establishing termination of the star games~\cite{klop1993term}. Due to the use of Kruskal's tree theorem, this proof requires that $(\Sigma,\le)$ is a well-quasi-order (wqo).
The proof in this section is centered around the following simple, but pivotal, proposition.

\begin{proposition}\label{basic}
  Let $\mathcal{A} = \langle A, \to \rangle$ be an ARS and suppose that
  \begin{enumerate}[label=(\roman*)]
    \item\label{basic:wqo} ${\trianglelefteq} \subseteq A \times A$ is a wqo,
    \item $ a \trianglelefteq b \implies b \twoheadrightarrow a$, and
    \item\label{basic:ac} $ \to $ is acyclic.
  \end{enumerate}
  Then the reduction $\to$ is SN.
\end{proposition}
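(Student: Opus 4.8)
The plan is to argue by contradiction: suppose $\to$ is not SN, so there is an infinite reduction sequence $a_0 \to a_1 \to a_2 \to \cdots$ in $\mcl{A}$. The idea is to use the wqo hypothesis \ref{basic:wqo} to extract an increasing pair far apart in this sequence, and then combine that with hypothesis (ii) to close a loop, contradicting acyclicity \ref{basic:ac}. Concretely, since $\trianglelefteq$ is a wqo, the infinite sequence $a_0, a_1, a_2, \ldots$ contains indices $i < j$ with $a_i \trianglelefteq a_j$. By hypothesis (ii) this gives $a_j \mred a_i$.

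Now I would patch together a cycle. On the one hand, since $i < j$, the infinite reduction restricted to the segment from $a_i$ to $a_j$ is a nonempty reduction $a_i \to a_{i+1} \to \cdots \to a_j$, hence $a_i \mred a_j$ with at least one step. On the other hand $a_j \mred a_i$. Composing, $a_i \mred a_j \mred a_i$, a reduction from $a_i$ back to itself that contains at least one step. If $a_i = a_j$ this is already a reduction cycle (of length $j - i > 0$); otherwise $a_i \mred a_j$ uses at least one step and so does the combined loop $a_i \mred a_i$. In either case we have produced a reduction cycle, contradicting \ref{basic:ac}. Hence no infinite reduction exists, i.e.\ $\to$ is SN.

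The only point requiring a little care is the bookkeeping that the composite reduction $a_i \mred a_j \mred a_i$ genuinely has positive length, so that it is a \emph{cycle} in the sense of the definition (which requires $n > 0$) rather than the trivial empty reduction; this is immediate because $i < j$ forces the first leg to have length $j - i \ge 1$. I expect this to be essentially the whole difficulty — the argument is short, and the real content is packaged into the two hypotheses (ii) and (iii), with Kőnig-style compactness not even needed since the infinite sequence is handed to us directly by the failure of SN.
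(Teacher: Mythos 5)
Your proof is correct and is essentially identical to the paper's own argument: extract $i<j$ with $a_i \trianglelefteq a_j$ from the infinite reduction via the wqo, use (ii) to get $a_j \twoheadrightarrow a_i$, and close the nonempty loop $a_i \to^+ a_j \twoheadrightarrow a_i$ to contradict acyclicity. The extra bookkeeping about positive length is fine but the paper leaves it implicit.
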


\begin{proof}
  Assume for a contradiction that there is an infinite rewrite sequence $a_0 \to a_1 \to \cdots$. Then since $\trianglelefteq$ is a wqo on $A$, there exists a pair $a_i \to^+ a_j$ with $a_i \trianglelefteq a_j$. Thus by (2), $a_j \twoheadrightarrow a_i$. Hence $a_i \to^+ a_i$, contradicting that $\to$ is acyclic. So $\to$ must be SN.
\end{proof}

\noindent
The idea behind Proposition~\ref{basic} is illustrated in Figure~\ref{tree-barrier}.\footnote{
  Proposition~\ref{basic} can be viewed as a generalization of \cite[Theorem 4.4]{midd:zant:1997}, the termination of simplification orders, to abstract reduction systems. In~\cite[Theorem 4.4]{midd:zant:1997}, the objects are terms, and the wqo $\trianglelefteq$ is the homeomorphic embedding.
}

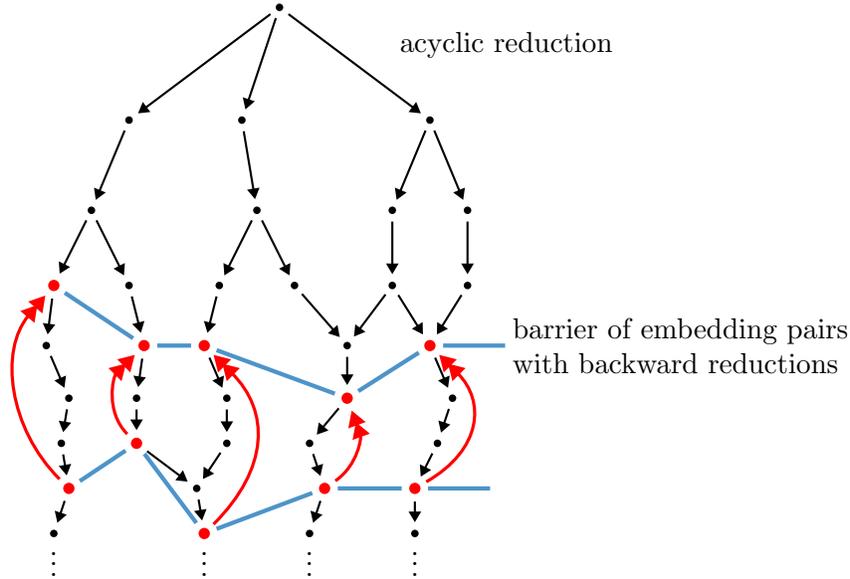
\begin{figure}
  \centering
  \begin{tikzpicture}[default,nodes={smallCircle},->,r/.style={fill=cred,minimum size=1.5mm}]
    \node (r) {};
    \node (r1) at ($(r) + (-20mm,-15mm)$) {}; \draw (r) -- (r1);
      \node (r11) at ($(r1) + (-5mm,-12mm)$) {}; \draw (r1) -- (r11);
        \node [r] (r111) at ($(r11) + (-5mm,-10mm)$) {}; \draw (r11) -- (r111);
          \node (r1111) at ($(r111) + (-1mm,-8mm)$) {}; \draw (r111) -- (r1111);
            \node (r11111) at ($(r1111) + (3mm,-7mm)$) {}; \draw (r1111) -- (r11111);
              \node (r111111) at ($(r11111) + (-1mm,-6mm)$) {}; \draw (r11111) -- (r111111);
                \node [r] (r1111111) at ($(r111111) + (1mm,-6mm)$) {}; \draw (r111111) -- (r1111111);
                  \node (r11111111) at ($(r1111111) + (-2mm,-6mm)$) {}; \draw (r1111111) -- (r11111111);
                  \node [draw=none,fill=none] at ($(r11111111) + (0mm,-3mm)$) {$\vdots$};
        \node (r112) at ($(r11) + (5mm,-10mm)$) {}; \draw (r11) -- (r112);
          \node [r] (r1121) at ($(r112) + (2mm,-8mm)$) {}; \draw (r112) -- (r1121);
            \node (r11211) at ($(r1121) + (-1mm,-7mm)$) {}; \draw (r1121) -- (r11211);
              \node [r] (r112111) at ($(r11211) + (0mm,-6mm)$) {}; \draw (r11211) -- (r112111);
                \node (r1121111) at ($(r112111) + (8mm,-6mm)$) {}; \draw (r112111) -- (r1121111);
                  \node [r] (r11211111) at ($(r1121111) + (1mm,-6mm)$) {}; \draw (r1121111) -- (r11211111);
                  \node [draw=none,fill=none] at ($(r11211111) + (0mm,-3mm)$) {$\vdots$};
    \node (r2) at ($(r) + (-5mm,-15mm)$) {}; \draw (r) -- (r2);
      \node (r21) at ($(r2) + (2mm,-12mm)$) {}; \draw (r2) -- (r21);
        \node (r211) at ($(r21) + (-5mm,-10mm)$) {}; \draw (r21) -- (r211);
          \node [r] (r2111) at ($(r211) + (-2mm,-8mm)$) {}; \draw (r211) -- (r2111);
            \node (r21111) at ($(r2111) + (3mm,-7mm)$) {}; \draw (r2111) -- (r21111);
              \node (r211111) at ($(r21111) + (0mm,-6mm)$) {}; \draw (r21111) -- (r211111);
                \draw (r211111) -- (r1121111);
              \node (r211112) at ($(r21111) + (11mm,-6mm)$) {}; %
                \node [r] (r2111121) at ($(r211112) + (2mm,-6mm)$) {}; \draw (r211112) -- (r2111121);
                  \node (r21111211) at ($(r2111121) + (-2mm,-6mm)$) {}; \draw (r2111121) -- (r21111211);
                  \node [draw=none,fill=none] at ($(r21111211) + (0mm,-3mm)$) {$\vdots$};
        \node (r212) at ($(r21) + (5mm,-10mm)$) {}; \draw (r21) -- (r212);
          \node (r2121) at ($(r212) + (7mm,-8mm)$) {}; \draw (r212) -- (r2121);
            \node [r] (r21211) at ($(r2121) + (0mm,-7mm)$) {}; \draw (r2121) -- (r21211);
              \draw (r21211) -- (r211112);
    \node (r3) at ($(r) + (20mm,-15mm)$) {}; \draw (r) -- (r3);
      \node (r31) at ($(r3) + (-5mm,-12mm)$) {}; \draw (r3) -- (r31);
        \node (r311) at ($(r31) + (0mm,-10mm)$) {}; \draw (r31) -- (r311);
          \draw (r311) -- (r2121);
      \node (r32) at ($(r3) + (5mm,-12mm)$) {}; \draw (r3) -- (r32);
        \node (r321) at ($(r32) + (0mm,-10mm)$) {}; \draw (r32) -- (r321);
          \node [r] (r3211) at ($(r321) + (-5mm,-8mm)$) {}; \draw (r321) -- (r3211);
          \draw (r311) -- (r3211);
            \node (r32111) at ($(r3211) + (3mm,-7mm)$) {}; \draw (r3211) -- (r32111);
              \node (r321111) at ($(r32111) + (-2mm,-6mm)$) {}; \draw (r32111) -- (r321111);
                \node [r] (r3211111) at ($(r321111) + (-3mm,-6mm)$) {}; \draw (r321111) -- (r3211111);
                  \node (r32111111) at ($(r3211111) + (0mm,-6mm)$) {}; \draw (r3211111) -- (r32111111);
                  \node [draw=none,fill=none] at ($(r32111111) + (0mm,-3mm)$) {$\vdots$};
    \draw [ultra thick,cblue!70,-] (r111) -- (r1121) -- (r2111) -- (r21211) -- (r3211) -- node [rectangle,fill=none,at end,anchor=west,align=left,black] {barrier of embedding pairs\\ with backward reductions} ++(10mm,0mm);
    \draw [ultra thick,cblue!70,-] (r1111111) -- (r112111) -- (r11211111) -- (r2111121) -- (r3211111) -- ++(10mm,0mm);
    \begin{scope}[cred,->>,very thick]
      \draw (r1111111) to[out=135,in=-130] (r111);
      \draw (r112111) to[out=135,in=-135] (r1121);
      \draw (r11211111) to[out=45,in=-45,looseness=1.2] (r2111);
      \draw (r2111121) to[out=35,in=-70] (r21211);
      \draw (r3211111) to[out=30,in=-45,looseness=1.3] (r3211);
    \end{scope}
    \node [anchor=west,rectangle,fill=none] at (15mm,-5mm) {acyclic reduction};
  \end{tikzpicture}
  \caption{\textit{A wqo provides a barrier for an acyclic reduction, if it subsumes the converse of the wqo embedding $\trianglelefteq$. The well-founded part of the reduction tree cut off by the wqo barrier has an ordinal height that measures how soon the barrier with its embedded pairs is manifesting itself.}}
  \label{tree-barrier}
\end{figure}

The remainder of this section is devoted to establishing conditions \ref{basic:wqo}--\ref{basic:ac} of Proposition~\ref{basic} for $A = \tre{\Sigma}{\avars}$ and ${\to} = {\ilpored^+} \cap {(\tre{\Sigma}{\avars} \times\tre{\Sigma}{\avars})}$. Then termination of ${\ilpored^+}$ on unstarred trees follows. 
In Subsection~\ref{sec:kruskal}, we recall a well-known embedding relation ${\treeemb}$ on trees, which is a wqo as stated by the beautiful Kruskal's tree theorem~\cite{nash1967infinite,rathjen1993proof,weiermann1994complexity,gallier1991s}. In Subsection~\ref{sec:embeddings:and:ipo}, we clarify the relation between $\treeemb$ and $\ilpored^+$, and show that $s \treeemb t \implies t \ilpored^+ s$. In Subsection~\ref{sec:star:termination}, finally, we demonstrate that $\ilpored^+$ is acyclic on unstarred trees by means of an elaborate symbol tracing proof.

\subsection{A reminder of Kruskal's tree theorem}\label{sec:kruskal}

\newcommand{\descendant}{\ll}

A standard embedding relation on trees exists, called the \emph{homeomorphic embedding}, for which we need the following auxiliary definitions.
Let $\alpha$ and $\beta$ be nodes in a tree. 
We write $\alpha \descendant \beta$ if $\beta$ lies within the subtree rooted at $\alpha$ and $\alpha \ne \beta$.
We write $\alpha \wedge \beta$ for the supremum of $\alpha$ and $\beta$ with respect to $\descendant$.
In other words, $\alpha \wedge \beta$ is the unique node that lies 
on the intersection of the path from $\alpha$ to $\beta$, 
the path from $\alpha$ to the root and 
the path from $\beta$ to the root.

\begin{definition}[Tree embedding~\cite{gallier1991s}]\label{lem:tree:embedding}%
A tree $s \in \tre{\Sigma}{\avars}$ is \emph{homeomorphically embedded in} tree $t \in \tre{\Sigma}{\avars}$, notation $s \treeemb t$, if there is an injective function $\varphi : \treenodes{s} \to \treenodes{t}$  such that
  \begin{enumerate}
    \item \label{it:kruskal:tree:embedding:sup} $\varphi$ is supremum preserving ($\varphi(\alpha \wedge \beta) = \varphi(\alpha) \wedge \varphi(\beta)$), and
    \item $\varphi$ is label increasing ($\treelabel{\alpha} \leq \treelabel{\varphi(\alpha)}$).
  \end{enumerate}
  Here $\le$ is the wqo on the node labels $\Sigma$.
  
\end{definition}

For the labels, we will typically use natural numbers with the usual order $\le$.

\begin{example}
  We have $\treecommute{2(9,7(0,4))} \treeemb \treecommute{1(3(8(8(5,1)),9,5(9)),2)}$ due to the mapping:
  \begin{center}
    
\vspace{1ex}
\begin{tikzpicture}[default,thick,level distance=8mm,sibling distance=10mm,inner sep=.5mm,
  paint/.style={very thick,draw=#1!50!black,fill=#1,opacity=.3},
  cnode/.style={every node/.style={circle,draw=#1!50!black,fill=#1!50}}]

\node (2) {2}
    child { node (9) {9}
  }
    child { node (7) {7}
      child { node (0) {0}
    }
      child { node (4) {4}
    }
  };
\begin{pgfonlayer}{background}
\end{pgfonlayer}

\node[xshift=60mm] (1) {1}
    child { node (3r) {3}
      child { node (8) {8}
        child { node (8r) {8}
          child { node (5r) {5}
        }
          child { node (1r) {1}
        }
      }
    }
      child { node (9r') {9}
    }
      child { node (5) {5}
        child { node (9r) {9}
      }
    }
  }
    child { node (2r) {2}
  };
\begin{pgfonlayer}{background}
\end{pgfonlayer}

  \begin{scope}[->,ultra thick,gray]
  \draw (2) -- (3r);
  \draw (9) to[out=-85,in=-125,looseness=1.1] (9r);
  \draw (7) -- (8r);
  \draw (0) to[out=-30,in=-150] (1r);
  \draw (4) -- (5r);
  \end{scope}
\end{tikzpicture}
\vspace{-3ex}
  \end{center}
  
\end{example}

Clearly, $\treeemb$ is a partial order on 
$\tre{\Sigma}{\avars}$. Moreover it satisfies the following remarkable property:

\begin{theorem}[Kruskal's tree theorem~\cite{kruskal1960well}]\label{thm:kruskal}%
  Let $t_{0}, t_{1}, t_{2}, \ldots$ be an infinite sequence of trees in~$\tre{\Sigma}{\avars}$, and ${\leq}$ a wqo on the labels $\Sigma$.
  Then $t_{i} \treeemb t_{j}$ for some $i < j$.
\end{theorem}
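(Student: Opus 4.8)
The plan is to run the Nash--Williams \emph{minimal bad sequence} argument. Call an infinite sequence of trees \emph{bad} if it contains no pair $t_i \treeemb t_j$ with $i < j$; the theorem asserts exactly that no bad sequence exists. Suppose towards a contradiction that one does. First I would build a \emph{minimal} bad sequence $(t_i)_{i \in \nat}$ recursively: having chosen $t_0,\dots,t_{i-1}$ so that they form the initial segment of some bad sequence, let $t_i$ be a tree of least possible size (number of nodes) such that $t_0,\dots,t_i$ is still the initial segment of some bad sequence. Since sizes are natural numbers this is well-defined, and the resulting $(t_i)_{i\in\nat}$ is itself bad.

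The heart of the proof is the claim that the set $S$ of all trees occurring as an \emph{immediate} subtree of some $t_i$ is well-quasi-ordered by $\treeemb$. Suppose not; let $(s_k)_k$ be a bad sequence with $s_k$ an immediate subtree of $t_{n_k}$, put $m = \min_k n_k$, and --- after passing to a tail of $(s_k)_k$, which is still bad --- assume $s_0$ is an immediate subtree of $t_m$ while $n_k \ge m$ for all $k$. I claim the spliced sequence $t_0,\dots,t_{m-1},s_0,s_1,s_2,\dots$ is bad: a pair among the $t$'s or among the $s$'s cannot be an embedding by badness of the respective sequences, and a cross pair $t_i \treeemb s_k$ with $i < m$ would give $t_i \treeemb s_k \treeemb t_{n_k}$ (an immediate subtree of a tree embeds into that tree) with $i < m \le n_k$, again contradicting badness of $(t_i)_i$. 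But $s_0$ is a \emph{proper} subtree of $t_m$, so it has strictly fewer nodes, and $t_0,\dots,t_{m-1},s_0$ starts a bad sequence --- contradicting the minimal choice of $t_m$. Hence $S$ is a wqo.

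It remains to assemble the pieces. By Higman's lemma, the set of finite sequences over the wqo $(S,\treeemb)$ is a wqo under the order ``$\sigma$ is dominated by $\tau$ iff there is a strictly increasing $g$ with $\sigma_k \treeemb \tau_{g(k)}$ for all $k$''; and since $(\Sigma,\le)$ is a wqo and a finite product of wqos is a wqo, the sequence of pairs $(\treelabel{t_i},\, \vec{c}_i)$ --- where $\vec{c}_i$ lists the immediate subtrees of $t_i$ in some fixed order --- lives in a wqo. Therefore there are $i < j$ with $\treelabel{t_i} \le \treelabel{t_j}$ and a strictly increasing $g$ matching the $k$-th child of $t_i$ with the $g(k)$-th child of $t_j$ via some embedding $\varphi_k$. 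Gluing the $\varphi_k$ together and sending the root of $t_i$ to the root of $t_j$ gives a map $\varphi : \treenodes{t_i} \to \treenodes{t_j}$ that is injective (distinct children of $t_i$ go into distinct, hence disjoint, child-subtrees of $t_j$, as $g$ is injective), supremum-preserving (inside one child-subtree this is inherited from $\varphi_k$; for nodes in different child-subtrees, both their supremum in $t_i$ and the supremum of their images in $t_j$ are the respective roots), and label-increasing (at the root by $\treelabel{t_i} \le \treelabel{t_j}$, elsewhere from $\varphi_k$). So $t_i \treeemb t_j$, contradicting badness of $(t_i)_i$, and we conclude that no bad sequence exists.

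The main obstacle is setting up the minimal-bad-sequence machinery correctly: the recursive definition of $(t_i)_i$ tacitly invokes dependent choice, and the splicing step needs the careful case analysis above to ensure that \emph{no} pair of the new sequence is an embedding, the key auxiliary fact being that an immediate subtree always homeomorphically embeds into its parent tree (extend the identity on that subtree; it stays supremum-preserving since the parent's root dominates everything). The remaining ingredients --- Higman's lemma, closure of wqos under finite products, and the verification that the glued map is an embedding --- are standard.
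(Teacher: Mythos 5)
The paper does not prove this theorem at all: it is stated with a citation to Kruskal's original article and used as a black box. Your proposal therefore cannot be compared to an in-paper argument; what you have written is the standard Nash--Williams minimal bad sequence proof, and it is correct. The two places where such proofs usually go wrong are both handled properly: (i) in the splicing step you pass to the tail of $(s_k)_k$ starting at an index realizing $m=\min_k n_k$, so that $s_0$ really is an immediate subtree of $t_m$ and all later $n_k\ge m$, and your three-way case analysis (two $t$'s, two $s$'s, cross pair via $t_i \treeemb s_k \treeemb t_{n_k}$) does establish badness of the spliced sequence, contradicting the minimal choice of $t_m$ because $|s_0|<|t_m|$; (ii) the auxiliary fact that an immediate subtree embeds into its parent holds for the paper's Definition of $\treeemb$, since the inclusion map preserves suprema (the deepest common ancestor of two nodes of a subtree lies in that subtree) and is label-increasing by reflexivity of $\le$. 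One small remark: since the paper's trees are unordered (terms modulo permutation of arguments), listing the children of $t_i$ ``in some fixed order'' before applying Higman's lemma is harmless --- any injective matching of children suffices to glue an embedding, and the strict monotonicity of $g$ that Higman provides is more than you need here (it would be essential only in the ordered-tree version). With the standard caveats you already flag (dependent choice for the minimal bad sequence, Higman's lemma, closure of wqos under finite products), the argument is complete.
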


\begin{remark}
  Kruskal's tree theorem can be phrased as follows:  if  $\pair{\Sigma}{{\leq}}$ is a wqo, then $\pair{\tre{\Sigma}{\avars}}{{\treeemb}}$ is a wqo.
\end{remark}

Kruskal's tree theorem generalizes Higman's lemma, a similar result originally stated for words over finite alphabets.

\begin{corollary}[Higman's lemma~\cite{higman1952ordering}]
  Let $(A, {\leq})$ be a (possibly infinite) wqo alphabet. Then the set of words $(A^*, {\leq_*})$ is a wqo, where $w \leq_* v$ if $w$ can be obtained from $v$ by decreasing and erasing an arbitrary number of letters in $v$.
\end{corollary}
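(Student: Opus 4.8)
The plan is to deduce Higman's lemma from Kruskal's tree theorem (Theorem~\ref{thm:kruskal}) by encoding words as ``thin'' trees (caterpillars). Take $\Sigma := A$ as signature, keeping $\le$ as the wqo on labels, and work inside the ground trees of $\tre{\Sigma}{\avars}$. To a word $w = a_1 a_2 \cdots a_n \in A^*$ assign the tree $\widehat{w}$ consisting of a single branch: root labelled $a_1$, its unique child labelled $a_2$, and so on down to a leaf labelled $a_n$. Since every node of $\widehat{w}$ has at most one child, permuting arguments changes nothing, so $\widehat{w}$ is a well-defined tree. (The empty word is dealt with separately below.)

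The crux is the claim that this encoding reflects and preserves the orders exactly: $w \le_* v$ if and only if $\widehat{w} \treeemb \widehat{v}$. The nodes of $\widehat{w}$ are linearly ordered by the descendancy relation $\descendant$, and the supremum of two of them is the one nearer the root. Consequently, an injective and supremum-preserving map $\varphi : \treenodes{\widehat{w}} \to \treenodes{\widehat{v}}$ must send the root-to-leaf order of $\widehat{w}$ strictly monotonically into that of $\widehat{v}$: if $\alpha$ lies above $\beta$ in $\widehat{w}$, then $\varphi(\alpha) = \varphi(\alpha \wedge \beta) = \varphi(\alpha) \wedge \varphi(\beta)$ forces $\varphi(\alpha)$ above $\varphi(\beta)$, and injectivity makes this strict. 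Writing $w = b_1\cdots b_m$ and $v = a_1\cdots a_n$, such a $\varphi$ therefore selects positions $p_1 < \cdots < p_m$ in $v$, and the label-increasing condition reads $b_k \le a_{p_k}$ for every $k$; this says precisely that $w$ arises from $v$ by erasing the positions outside $\set{p_1,\dots,p_m}$ and decreasing each retained letter, i.e.\ $w \le_* v$. Conversely, any witness for $w \le_* v$ directly yields such a $\varphi$. This equivalence, and in particular the appeal to the supremum-preservation clause of Definition~\ref{lem:tree:embedding} to exclude order-reversing node maps, is the one point requiring genuine (if brief) care; everything else is routine.

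It remains to assemble the pieces. The relation $\le_*$ is clearly reflexive and transitive, hence a quasi-order. Let $w_0, w_1, \ldots$ be an infinite sequence in $A^*$. If two of the $w_i$ are empty then $w_i \le_* w_j$ trivially; otherwise all but at most one of the $w_i$ are non-empty, and, replacing the sequence by the infinite subsequence of non-empty words (an increasing pair there is also one in the original sequence), we may assume every $\widehat{w_i}$ is a genuine tree in $\tre{\Sigma}{\avars}$. Since $(\Sigma, \le) = (A,\le)$ is a wqo, Kruskal's tree theorem provides indices $i < j$ with $\widehat{w_i} \treeemb \widehat{w_j}$, and hence $w_i \le_* w_j$ by the equivalence above. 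Thus every infinite sequence over $A^*$ contains an increasing pair, so $(A^*, \le_*)$ is a wqo. (Alternatively, the bookkeeping around $\varepsilon$ can be avoided by padding every $\widehat{w}$ with a fixed fresh minimal symbol at its leaf.)
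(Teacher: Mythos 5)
Your proof is correct and is exactly the derivation the paper intends (the corollary is stated without proof as an immediate consequence of Kruskal's tree theorem): encode each word as a single-branch tree and check that supremum-preservation plus injectivity forces the node map to be order-preserving along the branch, so that $\treeemb$ on these trees coincides with $\le_*$ on words. The separate treatment of the empty word is a sensible piece of bookkeeping, since $\tre{\Sigma}{\avars}$ has no empty tree.
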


\subsection{Embeddings and IPO}\label{sec:embeddings:and:ipo}

We now clarify the relation between the homeomorphic embedding ${\treeemb}$ and $\ilpored$.

First of all, we believe that the homeomorphic embedding relation can be more conveniently defined via term rewriting as follows.

\newcommand{\trsemb}{\mathcal{E}}
\begin{definition}[Tree embedding via term rewriting]\label{def:tree:embedding:transformation}%
  Define the TRS $\trsemb$ to consist of the following rules:
  \begin{gather*}
    \begin{aligned}
    \text{(\emph{removing a subtree})} && f(\vec{x},y,\vec{z}) &\to f(\vec{x},\vec{y}) & \\
    \text{(\emph{selecting a subtree})} && f(\vec{x},y,\vec{z}) &\to y &\\
    \text{(\emph{decreasing a label})} && f(\vec{x}) &\to g(\vec{x}) & \text{ if } f > g
    \end{aligned}
  \end{gather*}
  for $f, g \in \Sigma$.
  Here, $f > g$ means that $f \ge g$ and not $f \le g$.
  We use ${\to_\trsemb}$ to denote the rewrite relation
  induced by $\trsemb$ on trees.
  We say that $s$ can be \emph{homeomorphically embedded in $t$ (via tree rewriting)} if $t \to_\trsemb^* s$.
\end{definition}

\begin{remark}
  The term rewrite system $\trsemb$ can be used to define both the homeomorphic embedding on trees and on terms. Note that the rewrite system does not permute arguments. So it preserves the argument order (sibling order) when applied to terms. 
\end{remark}

\begin{lemma}[Equivalence of embedding relations]\label{lemma:equivalence:of:embedding:relations}
${\treeemb} = {\leftarrow_\trsemb^*}$.
\end{lemma}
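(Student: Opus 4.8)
The statement says that the two routes to ``homeomorphically embedded'' coincide: existence of an injective, supremum-preserving, label-increasing node map (Definition~\ref{lem:tree:embedding}) matches reducibility $t \to_\trsemb^* s$ (Definition~\ref{def:tree:embedding:transformation}). The plan is to prove the two inclusions separately. For ${\leftarrow_\trsemb^*} \subseteq {\treeemb}$, i.e.\ $t \to_\trsemb^* s \implies s \treeemb t$, I would first record that $\treeemb$ is reflexive and transitive (it is a partial order on trees, as already noted) and that it is closed under contexts: an embedding of $a$ into $b$ extends to one of $C[a]$ into $C[b]$ by taking the identity on the context nodes, the only point to check being that a supremum involving a context node and a node inside the hole is computed in the context. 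Granting this, it suffices to verify that a single rule instance reverses into an embedding: for ``removing a subtree'' and ``selecting a subtree'' the identity on the surviving nodes is an embedding, and for ``decreasing a label'' $f(\vec{x}) \to g(\vec{x})$ with $f > g$ the identity map works since $g \le f$. Combining this with context-closure gives $t \to_\trsemb t' \implies t' \treeemb t$, and a trivial induction on the reduction length then yields the inclusion.

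For the converse ${\treeemb} \subseteq {\leftarrow_\trsemb^*}$, i.e.\ $s \treeemb t \implies t \to_\trsemb^* s$, I would argue by induction on $\treesize{t}$. Fix an embedding $\varphi : \treenodes{s} \to \treenodes{t}$ and let $r_s, r_t$ be the roots of $s, t$. The engine of the proof is a \emph{localisation property} squeezed out of supremum-preservation: since $r_s \wedge \alpha = r_s$ for every node $\alpha$ of $s$, we get $\varphi(r_s) \wedge \varphi(\alpha) = \varphi(r_s)$, so the whole image $\varphi(\treenodes{s})$ sits inside the subtree of $t$ rooted at $\varphi(r_s)$; and applying the same idea one level down, if $s_1, \ldots, s_m$ are the immediate subtrees of $s$ with roots $\beta_1, \ldots, \beta_m$, then $\varphi(\treenodes{s_j})$ lands in the subtree of $t$ rooted at $\varphi(\beta_j)$, and the nodes $\varphi(\beta_1), \ldots, \varphi(\beta_m)$ are \emph{pairwise distinct} and non-nested, because $\beta_j \wedge \beta_k = r_s$ for $j \ne k$ forces $\varphi(\beta_j) \wedge \varphi(\beta_k) = \varphi(r_s)$, so neither image can be an ancestor of the other.

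The induction then splits into routine cases. If $\treesize{t} = 1$, then $s$ is a single node of label $\le$ that of $t$, reached by at most one ``decreasing a label'' step. Otherwise write $t = f(t_1, \ldots, t_n)$. If $\varphi(r_s) \ne r_t$, localisation puts all of $\varphi(\treenodes{s})$ inside one $t_i$, so $\varphi$ restricts to an embedding $s \treeemb t_i$; the induction hypothesis gives $t_i \to_\trsemb^* s$, and one ``selecting a subtree'' step yields $t \to_\trsemb t_i \to_\trsemb^* s$. If $\varphi(r_s) = r_t$, write $s = g(s_1, \ldots, s_m)$: for $m = 0$, delete all of $t$'s subtrees and then (if needed) perform one ``decreasing a label'' step; for $m \ge 1$, localisation produces distinct indices $i_1, \ldots, i_m$ with $\varphi$ restricting to embeddings $s_j \treeemb t_{i_j}$, so the induction hypothesis gives $t_{i_j} \to_\trsemb^* s_j$, and the reduction is assembled in three phases — delete the unused subtrees $t_i$ with $i \notin \{ i_1, \ldots, i_m \}$ by ``removing a subtree'', rewrite each $t_{i_j}$ down to $s_j$ in context, and finally perform one ``decreasing a label'' step if $g < f$.

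I expect the only genuine obstacle to be the localisation property: making precise that supremum-preservation forces the images of sibling subtrees of $s$ into distinct, non-interfering subtrees of $t$ is exactly what drives the recursion, whereas everything else — context-closure and single-step reversal in the easy direction, the bookkeeping of the three-phase reduction in the hard direction — is mechanical. It is worth noting that, unlike Kruskal's theorem, this equivalence requires no well-quasi-order hypothesis; a partial order on $\Sigma$ suffices.
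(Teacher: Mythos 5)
Your proof is correct and, for the substantive inclusion ${\treeemb} \subseteq {\leftarrow_\trsemb^*}$, follows essentially the same route as the paper: supremum-preservation forces the images of sibling subtrees of $s$ into pairwise distinct subtrees below $\varphi(\mathit{root}(s))$, and the reduction is assembled by removing unused subtrees, recursing, and decreasing the root label (your induction on $\treesize{t}$ with the case $\varphi(\mathit{root}(s)) \ne \mathit{root}(t)$ handled by a selecting step is just a cosmetic variant of the paper's structural induction on $s$ preceded by a prefix of selecting steps). The only difference is that you also verify the easy converse inclusion, which the paper explicitly omits as not being needed elsewhere.
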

\begin{proof}
  We provide a proof for $\subseteq$, which is the only direction relied on in this paper. 

  \newcommand{\subtree}[2]{#2|_{#1}}%
  \newcommand{\treeroot}[1]{\mathit{root}(#1)}%
  Assume $s \treeemb t$ is established by some witnessing homeomorphism $\phi$. For a tree $u$ and a node $n$ in $u$, we use $\subtree{n}{u}$ to denote the subtree of $u$ rooted at $n$. It suffices to prove that ($\dagger$) for node every $n$ in $s$, $\subtree{\phi(n)}{t} \to^* \subtree{n}{s}$. For then $t \to^* \subtree{\phi(\treeroot{s})}{t}$ using 0 or more selecting steps, and $\subtree{\phi(\treeroot{s})}{t} \to^* \subtree{\treeroot{s}}{s} = s$ by ($\dagger$). 
  We prove ($\dagger$) by structural induction on $s$. For the base case, $s$ is a leaf $n$. Then $\subtree{\phi(n)}{t} \to^* n = \subtree{n}{s}$ by removing all of the subtrees of $\phi(n)$, and decreasing its label.

  For the inductive step, $s = \treecommute{n(s_0, \ldots, s_k)}$ for some $k \in \mathbb{N}$. For every $0 \le i < j \le k$, the roots of $s_i$ and $s_j$ have $n$ as their supremum in $s$, and thus $\phi(\treeroot{s_i})$ and $\phi(\treeroot{s_j})$ have $\phi(n)$ as their supremum in $t$. This implies that $\phi(\treeroot{s_0})$,\ldots,$\phi(\treeroot{s_k})$ are each in distinct subtrees that are rooted at $\phi(n)$. 
  So $\subtree{\phi(n)}{t} = \treecommute{\phi(n)(t_0, \ldots, t_k, \ldots, t_{k + k'})}$ for some $k' \geq 0$ such that $\phi(\treeroot{s_i})$ is among the nodes of $t_i$ for all $i \leq k$. Now we obtain 
  \[
  \begin{array}{lcll}
    \subtree{\phi(n)}{t} & \to^* & \treecommute{\phi(n)(t_0, \ldots, t_k)} & \text{removing subtrees} \\
     & \to^* &  \treecommute{\phi(n)(\subtree{\phi(\treeroot{s_0})}{t}, \ldots, \subtree{\phi(\treeroot{s_k})}{t})}
     & \text{selecting in the $t_i$'s} \\
     & \to^* & \treecommute{\phi(n)(\subtree{\treeroot{s_0}}{s}, \ldots, \subtree{\treeroot{s_k}}{s})} & \text{induction hypothesis} \\
     & = & \treecommute{\phi(n)(s_0, \ldots, s_k)} \\
     & \to^* & \treecommute{n(s_0, \ldots, s_k)} & \text{decreasing $\phi(n)$'s label} \\
     & = & \subtree{n}{s}
  \end{array}
  \]
  which concludes the proof.
\end{proof}

\begin{example}
  We have $\treecommute{2(9,7(0,4))} \treeemb \treecommute{1(3(8(8(5,1)),9,5(9)),2)}$ since
  \begin{align*}
    &\underline{1(3(8(8(5,1)),9,5(9)),2)} \\
    \to_\trsemb\ &3(\underline{8(8(5,1))},9,5(9)) &&\text{selecting a term} \\
    \to_\trsemb\ &3(8(5,1),\underline{9},5(9)) &&\text{selecting a term} \\
    \to_\trsemb\ &3(8(5,1),\underline{5(9)}) &&\text{removing a subterm} \\
    \to_\trsemb\ &3(8(5,1),9) &&\text{selecting a term} \\
    \commute\ &3(9,8(1,5)) &&\text{commutativity} \\
    \to_\trsemb^+\ &2(9,7(0,4)) &&\text{decreasing labels (multiple times)}
  \end{align*}
\end{example}

\begin{example}[Non-example]
  We do \emph{not} have $\treecommute{1(0,0)} \treeemb \treecommute{1(0(0,0))}$.
\end{example}

The following proposition allows us to infer the existence of an IPO reduction sequence $t \ilpored^* s$ given a tree embedding $s \treeemb t$.

\begin{proposition}\label{ex:treeemb:rpo}
  For $s, t \in \tre{\Sigma}{\avars}$, $s \treeemb t \implies t \ilpored^* s$.
\end{proposition}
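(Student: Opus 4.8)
The plan is to reduce the statement to a rule-by-rule simulation, building on the work already carried out for Lemma~\ref{lemma:equivalence:of:embedding:relations}. That lemma (the inclusion $\subseteq$, which is the one actually proved there) gives $s \treeemb t \implies t \to_\trsemb^* s$, so it suffices to show ${\to_\trsemb} \subseteq {\ilpored^*}$. Since $\ilpored$ is induced by the TRS $\trsstar$ it is closed under contexts and substitutions, and $\ilpored^*$ is transitive; hence it is enough to simulate, by a finite $\ilpored$-reduction, each of the three rule schemas of $\trsemb$ (Definition~\ref{def:tree:embedding:transformation}) applied at the root, parametrically in the matched subtrees. Concatenating these simulations along a $\trsemb$-reduction $t \to_\trsemb^* s$ then yields $t \ilpored^* s$.

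The three simulations are short, each opening with a \lput-step and then using exactly one of the remaining rules, the point being to exploit the ``$0$ or more copies'' latitude of \lcopy\ and \ldown. For \emph{selecting a subtree}, $f(\vec{x},y,\vec{z}) \to_\trsemb y$, one takes $f(\vec{x},y,\vec{z}) \iput f^\star(\vec{x},y,\vec{z}) \iselect y$. For \emph{removing a subtree}, $f(\vec{x},y,\vec{z}) \to_\trsemb f(\vec{x},\vec{z})$, writing $y = g(\vec{w})$ where $g$ is the root symbol of $y$, one takes $f(\vec{x},g(\vec{w}),\vec{z}) \iput f^\star(\vec{x},g(\vec{w}),\vec{z}) \idown f(\vec{x},\vec{z})$, using \ldown\ with zero copies (which conveniently also strips the star from $f$). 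For \emph{decreasing a label}, $f(\vec{x}) \to_\trsemb g(\vec{x})$ with $f > g$ and $\vec{x} = (x_1,\ldots,x_n)$, one applies \lput, then \lcopy\ with exactly $n$ copies, then $n$ further \lselect-steps, the $i$-th of which picks the $i$-th argument out of the $i$-th copy:
\[
  f(\vec{x}) \iput f^\star(\vec{x}) \icopy g\bigl(\,\underbrace{f^\star(\vec{x}),\ldots,f^\star(\vec{x})}_{n}\,\bigr) \;\ilpored^*\; g(x_1,\ldots,x_n) = g(\vec{x}) .
\]
(When $n = 0$, i.e.\ $f$ is a constant, the \lcopy-step with zero copies already delivers $g$ and no \lselect-steps are needed.)

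I do not expect a serious obstacle here: the substantive content is concentrated in Lemma~\ref{lemma:equivalence:of:embedding:relations}, and what remains is essentially bookkeeping. The points needing a little care are the choice of multiplicities in the variadic rules (zero copies to delete a subtree, $\length{\vec{x}}$ copies to peel a label down while recovering the original arguments via \lselect), the harmless fact that trees are unordered so the position of the argument $y$ within $f(\ldots)$ plays no role, and the edge case in which the removed subtree $y$ (or some argument $x_i$) is a variable rather than a term headed by a $\Sigma$-symbol, which is accommodated by reading variables as minimal nullary constants so that \ldown\ and \lselect\ apply verbatim.
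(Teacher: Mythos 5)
Your proposal is correct and follows essentially the same route as the paper: reduce via Lemma~\ref{lemma:equivalence:of:embedding:relations} to showing ${\to_\trsemb} \subseteq {\ilpored^*}$ and then simulate each $\trsemb$-rule by a short $\ilpored$-reduction (the paper leaves this last step as an ``easily observed fact,'' whereas you spell out the three simulations, including the zero-copy use of $\ldown$ for deletion and the $\lcopy$-plus-$\lselect$ trick for label decrease). The extra care you take with variables-as-minimal-constants and the $n=0$ case is sound and only makes explicit what the paper glosses over.
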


\begin{proof}
  By Proposition~\ref{lemma:equivalence:of:embedding:relations} it suffices to show that ${\to_\trsemb^*} \subseteq {\ilpored^*}$. This follows from the easily observed fact that IPO can simulate any ${\to_\trsemb}$ step, i.e., that ${\to_\trsemb} \subseteq {\ilpored^*}$.
\end{proof}

\subsection{Proving acyclicity}\label{sec:star:termination}

Our final obligation concerns the acyclicity of $\ilpored^+$ on star-free trees. The proof that we will give is a microscopic analysis of how the symbols propagate and 
have descendants in a $\ilpored^+$ reduction. Because of this symbol tracing approach, we will have to ``freeze'' the trees $\tre{\Sigma}{\avars}$ of interest, and thus reason on the level of terms $\ter{\Sigma}{\avars}$ rather than trees. The precise nature of this proof reduction is clarified and justified by the following proposition.

\begin{proposition}\label{prop:acyclic:terms:imp:acyclic:trees}
  The relation $\ilpored$ has the following properties:
  \begin{enumerate}[label=(\roman*)]
    \item \label{prop:acyclic:terms:imp:acyclic:trees:postpone}
      On $\ter{\Sigma}{\avars}$, we have ${\commutestep \cdot \ilpored} \subseteq {\ilpored \cdot \commutestep}$.
    \item\label{prop:acyclic:terms:imp:acyclic:trees:imp}
      If for no term $t \in \ter{\Sigma}{\avars}$ we have $t \ (\ilpored^+ \cdot \commute) \ t$, then for no tree $t' \in \tre{\Sigma}{\avars}$ we have $t' \ilpored^+ t'$.
  \end{enumerate}
\end{proposition}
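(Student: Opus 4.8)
My plan is to treat the two parts separately: (i) is a local \emph{postponement} property — a swap step can be pushed past an $\ilpored$ step — proved by inspecting the four rules, and (ii) is a diagram chase that uses (i) to lift a tree-level cycle to a term-level one.

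For (i), I would analyse the relative positions of the $\commutestep$-redex (say at position $q$) and the $\ilpored$-redex (at position $p$) in a step $s \commutestep s' \ilpored u$, together with which of $\lput,\lselect,\lcopy,\ldown$ is applied. Since $\commutestep$ and $\ilpored$ are both closed under contexts, the only nontrivial situation is $q \ge p$, i.e.\ the transposed pair of arguments sits on or below the root of the $\ilpored$-redex; if $q$ is parallel to $p$ or strictly above $p$, the two steps permute verbatim. When $q \ge p$, the transposed pair lies either strictly inside one of the arguments matched by a schematic position ($\vec x$, $y$, $\vec z$, or inside $g(\vec y)$), or at the root of the redex, transposing two of its top-level arguments. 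In each case, permuting the arguments of an $\ilpored$-redex again yields an $\ilpored$-redex of the same rule (possibly with a relabelled choice of distinguished argument), and its contractum agrees with $u$ after re-applying the transposition to each \emph{residual} of the transposed subterm. There are $0$ residuals when the step erases that subterm (the selected-away argument of $\lselect$, or the zero-copy case of $\lcopy$/$\ldown$), exactly one residual in the ``carried-along'' cases, and the copy count many otherwise. Hence $s \ilpored u'$ for some $u'$ with $u' \commutestep^{*} u$, in particular $s \,(\ilpored \cdot \commute)\, u$, which is (i). The real work here is this residual bookkeeping for $\lcopy$ and $\ldown$, where the transposed material may be duplicated (so a single swap on the left becomes several on the right) or deleted (so none are needed); $\lselect$ additionally requires the small observation that the schematic prefix/suffix $\vec x, \vec z$ make the choice of selected argument insensitive to transposing arguments, while a swap inside the selected $y$ is carried through unchanged. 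I expect this case split to be the main obstacle.

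For (ii), I would first note that $\commutestep$ is symmetric — each swap rule schema is its own inverse — so $\commute = \commutestep^{*}$, and then promote (i), by induction on the length of a $\commute$-path, to $\commute \cdot \ilpored \subseteq \ilpored \cdot \commute$; iterating, ${(\ilpored \cup \commute)}^{*} = \ilpored^{*} \cdot \commute$, and, keeping track of the fact that at least one $\ilpored$ step is used, any $\ilpored$/$\commute$-composition that contains an $\ilpored$ step lies in $\ilpored^{+} \cdot \commute$. Now suppose $t' \ilpored^{+} t'$ for some tree $t' \in \tre{\Sigma}{\avars}$, say $T_0 \ilpored T_1 \ilpored \cdots \ilpored T_n$ with $n \ge 1$ and $T_0 = T_n = t'$. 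By definition of $\ilpored$ on trees, each step $T_i \ilpored T_{i+1}$ is witnessed by terms $p_i \ilpored q_i$ with $\treecommute{p_i} = T_i$ and $\treecommute{q_i} = T_{i+1}$; since these $\treecommute{\cdot}$-classes coincide, $q_i \commute p_{i+1}$ for $0 \le i \le n-2$ and $q_{n-1} \commute p_0$. Concatenating gives a term $p_0$ with $p_0 \, {(\ilpored \cdot \commute)}^{n} \, p_0$ and $n \ge 1$, hence $p_0 \,(\ilpored^{+} \cdot \commute)\, p_0$ by the previous step — contradicting the hypothesis of (ii). So no such tree cycle exists. The one delicate point in (ii) is getting the unfolding right: consecutive term-level witnesses need not coincide, only agree up to $\commute$, which is precisely what makes part (i) together with the symmetry of $\commutestep$ the right tools for collapsing the mixed cycle.
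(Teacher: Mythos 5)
Your proposal is correct and follows essentially the same route as the paper: part (i) is the local postponement of swap steps past $\ilpored$ steps (which the paper dismisses as ``routine to verify''), and part (ii) unfolds a tree cycle into term witnesses related by $\commute$ and collapses it by repeated postponement into $\ilpored^+ \cdot \commute$, exactly as in the paper's diagram chase. Your case analysis for (i) is the content the paper omits, and your residual bookkeeping is the right way to carry it out. One point worth flagging: what you actually prove is ${\commutestep \cdot \ilpored} \subseteq {\ilpored \cdot \commutestep^{*}}$, not the literal statement ${\commutestep \cdot \ilpored} \subseteq {\ilpored \cdot \commutestep}$ with a single trailing swap. This is not a defect of your argument but of the statement as printed: when the transposed subterm is duplicated by $\lcopy$ or $\ldown$ (e.g.\ $f^\star(g(a,b)) \commutestep f^\star(g(b,a)) \ilpored_{\lcopy} h(f^\star(g(b,a)),f^\star(g(b,a)))$ with $f>h$), a single swap on the right cannot repair both copies, so the starred closure is genuinely needed. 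Your weaker-but-correct version is exactly what part (ii) consumes — indeed the paper's own proof of (ii) concludes with $\ilpored^+ \cdot \commutestep^{*}$ — so nothing downstream is affected.
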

\begin{proof}
\noindent
  \begin{enumerate}[label=(\roman*)]
    \item 
      It is routine to verify that any swap steps can be postponed.
    \item 
      By contradiction. Assume that there is a cycle on trees $t' \ilpored^+ t'$. Then by the definition of a tree one obtains a sequence on terms $\sigma = t' \ (\commutestep^* \cdot \ilpored \cdot \commutestep^*)^+ \ t'$. From $\sigma$ we can construct a sequence $t' \ (\ilpored^+ \cdot \commutestep^*) \ t'$ on terms by repeatedly applying~\ref{prop:acyclic:terms:imp:acyclic:trees:postpone}, contradicting the hypothesis.
      \qedhere
  \end{enumerate}
\end{proof}

  The tracing can be understood intuitively via coloring.
  In order to trace a symbol occurrence~$\rho$ throughout a reduction, one can color $\rho$ using a unique color~$c$.
  Then one can apply the colored variant of the rewrite rules as shown in Figure~\ref{fig:tracing:colours}.
  After each rewrite step, the symbol occurrences with color $c$ are the descendants of $\rho$. This way of proceeding, in which the tracing is  expressed via labelling is called \emph{label-tracing} in~\cite[Section 8.6.4.]{terese}.

\begin{figure}
  {

\begin{tikzpicture}[default,thick,level distance=10mm,sibling distance=8mm,level 2/.style={sibling distance=4mm},inner sep=1.2mm,
  every node/.style={circle},
  p/.style={paint=#1,opacity=.5},
  ny/.style={thick,draw=yellow!70!black,fill=yellow!30,minimum size=6mm},
  no/.style={thick,rectangle,draw=corange!70!black,fill=corange!30,minimum size=5mm},
  ]

  \begin{scope}[yshift=0mm]
    \begin{scope}
\node (f) {$f$}
    child { node (t_1) {$t_1$}
  }
    child { node (ldots) {$\ldots$}
  }
    child { node (t_n) {$t_n$}
  };
\begin{pgfonlayer}{background}
\end{pgfonlayer}

    \end{scope}
    
    \begin{pgfonlayer}{background}
      \node [ny] at (f) {};    
    \end{pgfonlayer}
  
    \begin{scope}[xshift=30mm]
\node (f^star) {$f^\star$}
    child { node (t_1) {$t_1$}
  }
    child { node (ldots) {$\ldots$}
  }
    child { node (t_n) {$t_n$}
  };
\begin{pgfonlayer}{background}
\end{pgfonlayer}

    \end{scope}
  
    \begin{pgfonlayer}{background}
      \node [ny] at (f^star) {};    
    \end{pgfonlayer}
    
    \draw [thick,->] ($(f)+(9mm,0mm)$) to node [above,rectangle] {$\lput$} ($(f^star)+(-9mm,0mm)$);
  \end{scope}

  \begin{scope}[yshift=-20mm]
    \begin{scope}
\node (f^star) {$f^\star$}
    child { node (t_1) {$t_1$}
  }
    child { node (ldots) {$\ldots$}
  }
    child { node (t_n) {$t_n$}
  };
\begin{pgfonlayer}{background}
\end{pgfonlayer}

    \end{scope}
    
    \begin{pgfonlayer}{background}
      \node [ny] at (f^star) {};    
    \end{pgfonlayer}
  
    \begin{scope}[xshift=30mm]
\node (t_i) {$t_i$};
\begin{pgfonlayer}{background}
\end{pgfonlayer}

    \end{scope}

    \draw [thick,->] ($(f^star)+(9mm,0mm)$) to node [above,rectangle] {$\lselect$} ($(t_i)+(-9mm,0mm)$);
  \end{scope}

  \begin{scope}[yshift=-40mm,xshift=0mm]
    \begin{scope}
\node (f^star) {$f^\star$}
    child { node (t_1) {$t_1$}
  }
    child { node (ldots) {$\ldots$}
  }
    child { node (t_n) {$t_n$}
  };
\begin{pgfonlayer}{background}
\end{pgfonlayer}

    \end{scope}
    
    \begin{pgfonlayer}{background}
      \node [ny] at (f^star) {};    
    \end{pgfonlayer}
  
    \begin{scope}[xshift=30mm]
\node (g) {$g$}
    child { node (f1) {$f^\star$}
      child { node (t_1) {$t_1$}
    }
      child { node (ldots) {$\ldots$}
    }
      child { node (t_n) {$t_n$}
    }
  }
    child { node (ldots) {\ldots}
  }
    child { node (f2) {$f^\star$}
      child { node (t_1) {$t_1$}
    }
      child { node (ldots) {$\ldots$}
    }
      child { node (t_n) {$t_n$}
    }
  };
\begin{pgfonlayer}{background}
\end{pgfonlayer}

    \end{scope}
  
    \begin{pgfonlayer}{background}
      \node [ny] at (g) {};    
      \node [ny] at (f1) {};    
      \node [ny] at (f2) {};    
    \end{pgfonlayer}

    \draw [thick,->] ($(f^star)+(9mm,0mm)$) to node [above,rectangle] {$\lcopy$} ($(g)+(-9mm,0mm)$);
  \end{scope}

  \begin{scope}[yshift=-70mm,xshift=-11mm,sibling distance=7mm]
    \begin{scope}
\node (f^star) {$f^\star$}
    child { node (t_1) {$t_1$}
  }
    child { node (ldots) {$\ldots$}
  }
    child { node (t_i-1) {$t_{i-1}$}
  }
    child { node (g) {$g$}
      child { node (s_1) {$s_1$}
    }
      child { node (ldots) {$\ldots$}
    }
      child { node (s_m) {$s_m$}
    }
  }
    child { node (t_i+1) {$t_{i+1}$}
  }
    child { node (ldots) {$\ldots$}
  }
    child { node (t_n) {$t_n$}
  };
\begin{pgfonlayer}{background}
\end{pgfonlayer}

    \end{scope}
    
    \begin{pgfonlayer}{background}
      \node [ny] at (f^star) {};    
      \node [no] at (g) {};    
    \end{pgfonlayer}
  
    \begin{scope}[xshift=55mm,sibling distance=7mm]
\node (f) {$f$}
    child { node (t_1) {$t_1$}
  }
    child { node (ldots) {$\ldots$}
  }
    child { node (t_i-1) {$t_{i-1}$}
  }
    child { node (g1) {$g^\star$}
      child { node (s_1) {$s_1$}
    }
      child { node (ldots) {$\ldots$}
    }
      child { node (s_m) {$s_m$}
    }
  }
    child { node (ldots) {$\ldots$}
  }
    child { node (g2) {$g^\star$}
      child { node (s_1) {$s_1$}
    }
      child { node (ldots) {$\ldots$}
    }
      child { node (s_m) {$s_m$}
    }
  }
    child { node (t_i+1) {$t_{i+1}$}
  }
    child { node (ldots) {$\ldots$}
  }
    child { node (t_n) {$t_n$}
  };
\begin{pgfonlayer}{background}
\end{pgfonlayer}

    \end{scope}
  
    \begin{pgfonlayer}{background}
      \node [ny] at (f) {};    
      \node [no] at (g1) {};    
      \node [no] at (g2) {};    
    \end{pgfonlayer}

    \draw [thick,->] ($(f^star)+(15mm,0mm)$) to node [above,rectangle] {$\ldown$} ($(f)+(-15mm,0mm)$);
  \end{scope}
\end{tikzpicture}
}  
  \caption{Tracing via colors.
    The symbol occurrences in the right-hand side are descendants of those symbol occurrences in the left-hand side that have the same color.
    Here red (square) and yellow (round) can be arbitrary colors (they can also be the same color).
    The subterms $s_1,\ldots,s_m,t_1,\ldots,t_n$ can contain colored symbol occurrences as well.}
  \label{fig:tracing:colours}
\end{figure}

Here, we define the descendant relation without the use of colors as follows.

\begin{definition}[Tracing]\label{def:star:tracing}
  We define in a step $s \ilpored t$ ($s,t \in \terstar{\Sigma}{\avars}$)
  a \emph{descendant relation} between the occurrences of constant and function symbols of $s$ and $t$.
  We will disregard the markers $\star$.
  The definition follows the clauses of the definition of $\ilpored$.
  \begin{enumerate}
    \item
      $s = f(\vec{u}) \ilpored_\lput f^\star(\vec{u}) = t$.
      Every symbol occurrence in $s$ has as unique descendant ``the same'' symbol occurrence in $t$. (In particular, $f^\star$ descends from $f$.)
    \item 
      $s = f^\star(\vec{u}, v, \vec{w}) \ilpored_\lselect v = t$.
      Every symbol occurrence in $v$ (subterm of~$s$) has as unique descendant ``the same'' symbol occurrence in $v = t$. 
      The other symbol occurrences of $s$ have no descendant.
     \item\label{def:star:tracing:copy}
      $s = f^\star(\vec{u}) \ilpored_\lcopy g(s,\ldots,s) = t$.
      \begin{enumerate}[label=(\alph*)]
        \item 
          Every symbol occurrence in $s$ has as descendants the corresponding symbols in each copy of $s$ in $t$.
        \item \label{def:star:tracing:copy:special}
          In addition, the head symbol $f^\star$ of $s$ has the head symbol $g$ of $t$ as descendant. 
          In this case $g$ is called the \emph{special} descendant of $f^\star$.
      \end{enumerate} 
    \item
      $s = f^\star(\vec{u},v, \vec{w}) \ilpored_\ldown f(\vec{u}, v^\star, \ldots, v^\star, \vec{w}) = t$.
      Every symbol occurrence in $s$ has as unique descendant ``the same'' symbol in $t$.
    \item
      $s = \cxtfill{C}{t_1} \ilpored \cxtfill{C}{t_2} = t$ with $t_1 \ilpored t_2$.
      The descendants of symbol occurrences in $t_1$ are given by the clauses above;
      each symbol occurrence in the context $\cxtfill{C}{}$ has as unique descendant ``the same'' symbol in $t$.
  \end{enumerate}
  In a sequence $t_0 \ilpored t_1 \ilpored t_2 \ilpored \cdots \ilpored t_n$ of several steps ($n \geq 2$) 
  we define the descendant relation between symbol occurrences of $t_0$ and those of $t_n$ in the obvious way, namely by relational composition. Moreover, if $p$ is a descendant of $q$, then we call $q$ an \emph{ancestor} of $p$.
\end{definition}

The descendant relation thus defined has the following three properties that are easily checked.

\begin{lemma}[Unique ancestors]
  Let $s \ilpored^+ t$, and let $q$ be a symbol occurrence in $t$. 
  Then $q$ has a unique ancestor in $s$.
\end{lemma}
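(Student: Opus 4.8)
The plan is to prove the statement first for a single $\ilpored$ step, by a case analysis following the clauses of Definition~\ref{def:star:tracing}, and then to lift it to $\ilpored^+$ by relational composition. It is convenient to phrase the single-step claim as: for $s \ilpored t$, the converse of the descendant relation — viewed as a relation from the symbol occurrences of $t$ to those of $s$ — is a \emph{total function}. ``Total'' then says that every occurrence of $t$ has at least one ancestor (no occurrence is conjured from nothing), and single-valuedness says it has at most one.

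For the single-step case I would go through the clauses. The context clause reduces matters to the four root-level rules: a step $C[t_1] \ilpored C[t_2]$ has, as the occurrences of its reduct, those of the context $C$ — each with the evident unique ancestor in $C[t_1]$ — together with those of the contractum $t_2$, whose ancestors in $t_1$ are governed by one of $\lput$, $\lselect$, $\lcopy$, $\ldown$. For $\lput$ the descendant relation is an occurrence-preserving bijection, so its converse is a total function. For $\lselect$ the occurrences of the reduct are exactly those of the selected subterm $v$, each descending from ``the same'' occurrence of $v$ inside $s$, while the occurrences of $s$ outside $v$ have no descendant (harmless for ancestors); again the converse is total and single-valued. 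The two copying rules need a little care. For a $\lcopy$ step $s \ilpored t$ with $t = g(s,\ldots,s)$: an occurrence of $t$ is either the fresh root occurrence of $g$ — whose only ancestor is the head $f^\star$ of $s$ via the special-descendant clause~\ref{def:star:tracing:copy:special}, and which does \emph{not} lie inside any copy of $s$, so clause~\ref{def:star:tracing:copy}(a) contributes nothing to it — or it lies inside exactly one copy of $s$, and then descends from the unique corresponding occurrence of $s$. For an $\ldown$ step $s \ilpored t$ with $t = f(\vec{u}, v^\star, \ldots, v^\star, \vec{w})$: an occurrence of $t$ is the root $f$ (only ancestor: the head $f^\star$ of $s$), or lies in $\vec{u}$ or $\vec{w}$ (only ancestor: the same occurrence in $s$), or lies in exactly one copy of $v^\star$ (only ancestor: the corresponding occurrence of the displayed $v$ inside $s$); these cases are mutually exclusive, so each occurrence of $t$ has precisely one ancestor.

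To pass to a reduction $s = t_0 \ilpored t_1 \ilpored \cdots \ilpored t_n = t$ with $n \ge 1$: by definition the descendant relation along the sequence is the relational composition $D_1 \cdot D_2 \cdots D_n$ of the one-step descendant relations, so the converse ancestor relation is $D_n^{-1} \cdot \cdots \cdot D_1^{-1}$. Each $D_i^{-1}$ is a total function by the single-step case, and a composition of total functions is a total function; hence every symbol occurrence of $t_n$ has a unique ancestor in $t_0$, which is what we wanted.

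I do not anticipate a real obstacle — the argument is pure bookkeeping. The one spot demanding attention is the pair of copying rules, where one must check that duplicating a subterm gives a single $s$-occurrence many descendants but never equips a single $t$-occurrence with two distinct ancestors; concretely, that the new root created by $\lcopy$ is covered by the special-descendant clause alone and not also by the ``copies'' clause, and symmetrically that the copies of $v^\star$ produced by $\ldown$ occupy distinct positions of $t$.
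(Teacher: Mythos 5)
Your proof is correct: the paper offers no proof of this lemma, merely asserting that the property is ``easily checked,'' and your case analysis by the clauses of Definition~\ref{def:star:tracing} (converse of each one-step descendant relation is a total function, then compose) is precisely the routine verification intended. The one point genuinely needing care --- that the fresh root created by $\lcopy$ gets its ancestor only via the special-descendant clause and not also via the copies clause --- is exactly the one you flag and resolve.
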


\begin{lemma}[Descendants are smaller]\label{lem:descendants_are_smaller}
  Let $p$ and $q$ be symbol occurrences such that $q$ descends from $p$.
  Then $p \geq q$ (in the ordering on $S$). 
  If one or more steps in the descendant sequence from $p$ to $q$ are special, i.e., are of type Definition~\ref{def:star:tracing}.\ref{def:star:tracing:copy}\ref{def:star:tracing:copy:special},
  then moreover $p > q$.
\end{lemma}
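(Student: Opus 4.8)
The plan is to prove both assertions simultaneously by induction on the length $n$ of the descendant sequence witnessing that $q$ descends from $p$, treating the single-step descendant relation of Definition~\ref{def:star:tracing} as the base case and using relational composition for the inductive step. Throughout, recall that the relevant order is the precedence order $>$ on $\Sigma$ (a strict partial order, hence well-founded), with $\ge$ its reflexive closure, and that descendants are compared \emph{modulo} the marker $\star$, exactly as stipulated in Definition~\ref{def:star:tracing}.

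For the base case ($n = 1$, a single step $s \ilpored t$ with $q$ a descendant of $p$), I would simply run through the five clauses of Definition~\ref{def:star:tracing}. In the clauses for $\lput$, $\lselect$, $\ldown$, and the context-closure clause, every descendant is literally ``the same'' symbol occurrence as its ancestor (ignoring $\star$), so $p$ and $q$ carry equal symbols and $p \ge q$; none of these steps is special. For $\lcopy$, clause~\ref{def:star:tracing:copy}(a) again copies symbols verbatim into each copy of $s$, giving $p = q$; the sole remaining possibility is the special descendant of clause~\ref{def:star:tracing:copy}\ref{def:star:tracing:copy:special}, where $p$ carries $f^\star$, $q$ carries $g$, and the side condition of the $\lcopy$ rule is precisely $f > g$. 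So a single step always yields $p \ge q$, and a special single step yields $p > q$.

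For the inductive step I would split the sequence as $p \to \cdots \to p' \to q$, where $p'$ is the ancestor of $q$ one step before $q$ (unique by the Unique ancestors lemma). The base case gives $p' \ge q$, and $p' > q$ if that final step is special; the induction hypothesis gives $p \ge p'$, and $p > p'$ if any of the first $n - 1$ steps is special. Since $\ge$ is the reflexive closure of a transitive irreflexive relation, both $a \ge b > c$ and $a > b \ge c$ imply $a > c$; combining the two facts therefore gives $p \ge q$ in general and $p > q$ whenever at least one of the $n$ steps is special, which is exactly the statement.

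I do not anticipate a genuine obstacle: this is essentially bookkeeping over the four rule schemas. The only two points that need a moment's care are (i) making sure that ``the same occurrence'' is read as contributing the equality $p = q$ in the precedence (so that the non-\lcopy\ clauses really do give $p \ge q$ and not merely something weaker), and (ii) the small order-theoretic observation that a chain of $\ge$-steps containing at least one strict $>$-step collapses to an overall strict inequality — the one place where strictness (and hence well-foundedness) of the precedence is invoked.
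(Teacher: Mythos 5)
Your proof is correct, and it is precisely the routine verification the paper intends: the paper states this lemma (together with the neighbouring ones) as ``easily checked'' without giving details, and your induction on the length of the descendant sequence, with the case analysis over the clauses of Definition~\ref{def:star:tracing} in the base case and the observation that a $\ge$-chain containing a strict step is strict, is exactly that check carried out in full.
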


  \begin{figure}[h!]
    \centering
    \begin{tikzpicture}[scale=0.7, default,thick,arrow/.style={->,arrows={-Triangle[angle=60:10pt,black,fill=white, scale=0.8]}}]
      \begin{scope}[xshift=0mm]
        \draw (0,0) to ++(20mm,-40mm) to ++(-40mm,0mm) to cycle;
        \node (t) [anchor=south] at (0,0) {$t$};
        \draw [fill=cblue!20] (5mm,-25mm) to ++(5mm,-15mm) to ++(-10mm,0mm) to cycle;
        \node (p) [anchor=east] at (5mm,-25mm) {$p$};
      \end{scope}
      \begin{scope}[xshift=62mm]
        \draw (0,0) to ++(20mm,-40mm) to ++(-40mm,0mm) to cycle;
        \node (t2) [anchor=south] at (0,0) {$t$};
        \draw [fill=cblue!20] (-1mm,-20mm) to ++(10mm,-20mm) to ++(-20mm,0mm) to cycle;
        \node (q) [anchor=west] at (-1mm,-20mm) {$q$};
      \end{scope}
      \draw [arrow] ($(t)+(4mm,0mm)$) to ++(10mm,0mm);
      \draw [arrow] ($(t)+(15mm,0mm)$) to ++(10mm,0mm);
      \draw [arrow] ($(t)+(26mm,0mm)$) to ++(10mm,0mm);
      \draw [arrow] ($(t)+(37mm,0mm)$) to ++(10mm,0mm);
      \draw [arrow] ($(t)+(48mm,0mm)$) to ++(10mm,0mm);

      \draw [ultra thick,cblue!30] (p) to[out=0,in=180] ++(30mm,10mm) to[out=0,in=180] (q);
      \begin{scope}[shorten <= 1mm, shorten >= 1mm]
      \coordinate (p) at ($(p)+(4mm,-2mm)$);
      \coordinate (q) at ($(q)+(-4mm,-2mm)$);
      \draw [arrow] ($(p)!0.0!(q)$) to ($(p)!0.25!(q)$);
      \draw [arrow] ($(p)!0.25!(q)$) to ($(p)!0.5!(q)$);
      \draw [arrow] ($(p)!0.5!(q)$) to ($(p)!.75!(q)$);
      \draw [arrow] ($(p)!0.75!(q)$) to ($(p)!1!(q)$);
      \end{scope}
    \end{tikzpicture}
    \caption{Illustration of Lemma~\ref{lem:star:subtrees}.}
    \label{fig:lem:star:subtrees}
  \end{figure}
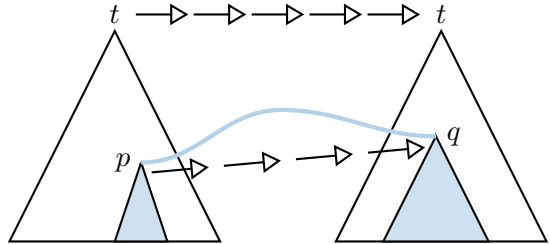

The following lemma is illustrated in Figure~\ref{fig:lem:star:subtrees}.

\begin{lemma}[Reductions on subterms]\label{lem:star:subtrees}
  Let $\alpha : s \ilpored^+ t$ be a nonempty sequence.
  Let $p$ be a symbol occurrence in $s$ and $q$ one of its descendants in $t$.
  Let $\subtermat{s}{p}$ be the subterm of $s$ rooted at symbol occurrence $p$,
  and $\subtermat{t}{q}$ the subterm of $t$ rooted at symbol occurrence $q$.
  Then $\subtermat{s}{p} \ilpored^* \subtermat{t}{q}$, 
  in a number of $\ilpored$-steps less or equal to the number of steps in the sequence $\alpha$.
\end{lemma}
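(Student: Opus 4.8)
The plan is to prove the statement by induction on the length $n$ of $\alpha$, after first isolating a single-step version of the claim: if $s \ilpored t$ is one step and $q$ is a descendant of $p$, then $\subtermat{s}{p} = \subtermat{t}{q}$ or $\subtermat{s}{p} \ilpored \subtermat{t}{q}$; that is, $\subtermat{s}{p} \ilpored^* \subtermat{t}{q}$ in at most one step. Granting this single-step claim, the general case is immediate: for $n \geq 2$ I would split $\alpha$ as $s \ilpored s' \ilpored^+ t$ with the tail of length $n-1 \geq 1$; since every symbol occurrence has a unique ancestor, $q$ determines a unique occurrence $p'$ in $s'$ from which it descends along the tail, and $p'$ in turn descends from $p$ in the first step. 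The single-step claim gives $\subtermat{s}{p} \ilpored^* \subtermat{s'}{p'}$ in $\leq 1$ steps, the induction hypothesis gives $\subtermat{s'}{p'} \ilpored^* \subtermat{t}{q}$ in $\leq n-1$ steps, and concatenating the two reductions yields $\subtermat{s}{p} \ilpored^* \subtermat{t}{q}$ in $\leq n$ steps, as required.

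For the single-step claim I would argue according to where the occurrence $p$ sits relative to the redex contracted in the step $s \ilpored t$. If the subterm $\subtermat{s}{p}$ does not contain that redex, then by the contextual clause of Definition~\ref{def:star:tracing} the descendant $q$ is ``the same'' occurrence and $\subtermat{s}{p} = \subtermat{t}{q}$, so zero steps. If $p$ lies strictly above the redex, then $\subtermat{s}{p}$ is itself a context surrounding the redex, its descendant $q$ is again ``the same'' occurrence, and contracting that redex inside $\subtermat{s}{p}$ produces $\subtermat{s}{p} \ilpored \subtermat{t}{q}$ (using that $\ilpored$ is closed under contexts), so one step. The only remaining possibility is that $p$ is the root of the redex or lies inside the left-hand pattern of the rule applied, and there the four rules have to be checked individually.

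That case inspection would be short. If $p$ is the root of the redex: for $\lput$ we have $\subtermat{s}{p} = f(\vec u) \ilpored_\lput f^\star(\vec u) = \subtermat{t}{q}$; for $\ldown$, $\subtermat{s}{p} \ilpored_\ldown \subtermat{t}{q}$ directly; for $\lcopy$ with $q$ the special descendant (the new head $g$), $\subtermat{s}{p} = f^\star(\vec u) \ilpored_\lcopy g(f^\star(\vec u), \ldots, f^\star(\vec u)) = \subtermat{t}{q}$; and in $\lselect$ the root $f^\star$ has no descendant at all. If $p$ lies strictly inside the pattern, the only non-trivial sub-case is the $\ldown$ step $f^\star(\vec u, g(\vec y), \vec w) \ilpored f(\vec u, g^\star(\vec y), \ldots, g^\star(\vec y), \vec w)$ when $p$ is the root of the displaced subterm $g(\vec y)$ and $q$ is the root of one of the freshly created copies $g^\star(\vec y)$: then $\subtermat{s}{p} = g(\vec y) \ilpored_\lput g^\star(\vec y) = \subtermat{t}{q}$, again one step; in every other sub-case (the $\vec u, \vec w$ components of $\ldown$, the proper subterms of $g(\vec y)$, and all occurrences inside the patterns of $\lput$, $\lselect$, and $\lcopy$) the subterms at $p$ and at $q$ are literally equal. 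I expect the only genuine point of care to be precisely these ``root'' sub-cases — the special descendant in $\lcopy$, absorbed by a single $\lcopy$ step, and the starred copies produced by $\ldown$, absorbed by a single $\lput$ step — while everything else is routine bookkeeping about descendants being ``the same'' occurrence.
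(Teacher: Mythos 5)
Your proposal is correct and follows exactly the paper's route: the paper's (one-sentence) proof likewise reduces the lemma to the single-step observation that $\subtermat{s}{p} \ilpored^= \subtermat{t}{q}$ and then composes, and your case analysis of that single-step claim (including the two non-trivial sub-cases: the special descendant under $\lcopy$, absorbed by one $\lcopy$ step, and the starred copies under $\ldown$, absorbed by one $\lput$ step) correctly fills in the details the paper leaves implicit.
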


\begin{proof}
  This property is easily proved from the particular case of one step $s \ilpored t$, where for $p$ in $s$ and $q$ in $t$ such that $q$ descends from $p$, we have $\subtermat{s}{p} \ilpored^= \subtermat{t}{q}$.
\end{proof}

We are now ready to prove acyclicity on terms.

\begin{lemma}[Acyclicity on terms]\label{prop:rpo:acyclic}
For no term $t \in \ter{\Sigma}{\avars}$,  $t \ (\ilpored^+ \cdot \commute) \ t$
\end{lemma}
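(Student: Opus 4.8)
The plan is to argue by minimal counterexample, driven by the tracing machinery of Definition~\ref{def:star:tracing} and Lemmas~\ref{lem:descendants_are_smaller} and~\ref{lem:star:subtrees}. Suppose the statement fails: then some star-free term lies on a cycle, i.e., there is $t_0 \in \ter{\Sigma}{\avars}$ and a reduction $t_0 \ilpored t_1 \ilpored \cdots \ilpored t_n$ with $n \ge 1$ and $t_n \commute t_0$. Among all such pairs, choose one with the number of symbols of $t_0$ minimal and, among those, with $n$ minimal. The whole argument then hinges on tracing the root symbol occurrence of $t_n$ backwards: by the Unique Ancestors lemma it has a unique ancestor $p$ in $t_0$, and I would case-split on whether $p$ is the root of $t_0$.

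\textbf{Case 1: $p$ is not the root of $t_0$.} Then $u := \subtermat{t_0}{p}$ is a proper subterm, hence star-free with strictly fewer symbols than $t_0$. By Lemma~\ref{lem:star:subtrees}, $u \ilpored^* \subtermat{t_n}{\mathrm{root}(t_n)} = t_n \commute t_0$. On the other hand $t_0 \ilpored^+ u$: walking down the path from the root of $t_0$ to $p$, each level contributes a step $h(\ldots,c,\ldots) \ilpored_{\lput} h^\star(\ldots,c,\ldots) \ilpored_{\lselect} c$, and $p$ has depth $\ge 1$. Composing, and using Proposition~\ref{prop:acyclic:terms:imp:acyclic:trees}\ref{prop:acyclic:terms:imp:acyclic:trees:postpone} to postpone swaps past $\ilpored$ steps, we get $u \,(\ilpored^+ \cdot \commute)\, u$. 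So $u$ lies on a cycle, contradicting minimality of the symbol count.

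\textbf{Case 2: $p$ is the root of $t_0$.} Since $t_n \commute t_0$ forces $\mathrm{label}(p) = \mathrm{label}(\mathrm{root}(t_n))$, Lemma~\ref{lem:descendants_are_smaller} tells us the traced ancestor chain $p = p_0, p_1, \ldots, p_n = \mathrm{root}(t_n)$ has constant label $f := \mathrm{label}(p)$ and contains \emph{no special step}. First I would rule out $\lcopy$ at the root: were step $i$ the first such step, one checks that $p_j$ is the root of $t_j$ for all $j \le i$, so the $\lcopy$-redex is $t_i = f^\star(\vec{c})$; the chain cannot follow the special new head, so $p_{i+1}$ is the head of one of the copies, which equals $f^\star(\vec{c})$; Lemma~\ref{lem:star:subtrees} applied before and after the $\lcopy$ step then yields $t_0 \ilpored^* f^\star(\vec{c}) \ilpored^* t_n \commute t_0$ of total length $\le n-1$, contradicting minimality of $n$. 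Similarly $\lselect$ at the root is impossible (it would leave $p$ no descendant). Hence $p_j$ stays at the root of $t_j$ throughout, and the only steps acting at the root are $\lput$ and $\ldown$. If \emph{no} step acts at the root, every step is strictly below it, so the children forests are related by $A_i \ilpored^* B_i$ with $\{A_i\} = \{B_i\}$ as multisets and at least one reduction nonempty; following the matching permutation and again postponing swaps, some child $A_{i_0}$ — a smaller star-free term — lies on a cycle, contradiction.

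It remains to treat Case~2 when $\ldown$-at-root steps do occur, and this is the technical heart of the argument — presumably the ``microscopic'' symbol-tracing analysis the authors allude to. Here each step $f^\star(\vec{u}, g(\vec{y}), \vec{w}) \ilpored f(\vec{u}, g^\star(\vec{y}), \ldots, g^\star(\vec{y}), \vec{w})$ deletes or duplicates a child and injects stars into the copies, while the root label $f$ is preserved. The plan is to trace each child-root of $t_n$ back — it must descend from within a proper subterm of $t_0$, since $p$'s only descendant is $\mathrm{root}(t_n)$ — apply Lemma~\ref{lem:star:subtrees} to each, and carefully account for the duplications, deletions, and the stars that $\ldown$ pushes downward, so as to again isolate a cycle on a strictly smaller subterm of $t_0$ (or reduce to the ``no root steps'' sub-case above). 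The delicate bookkeeping there — in particular, coping with the non-injectivity of the ancestor map and following the downward-migrating stars until they are consumed — is where I expect the real difficulty to lie.
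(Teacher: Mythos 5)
Your setup (minimal counterexample by symbol count, then by cycle length), your Case~1 (the root of the final term descends from a non-root position, yielding a cycle on a proper subterm), and your elimination of $\lselect$ and $\lcopy$ at the root in Case~2 all match the paper's proof essentially step for step, and those parts are correct. But the proof is not finished: you explicitly defer the case where $\ldown$ steps occur at the root, offering only a ``plan'' to trace the child-roots of $t_n$ backwards and ``carefully account for the duplications, deletions, and the stars''. That deferred case is not a technicality to be filled in mechanically --- it is the remaining third of the argument, and the bookkeeping you sketch (following non-injective ancestor maps and downward-migrating stars) is not how it gets resolved, nor is it clear it would terminate in a contradiction.

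The paper closes this case with a short counting argument rather than further tracing. Write $\ldown(T,k)$ for a root step $f^\star(\vec{u},T,\vec{w}) \ilpored f(\vec{u},T^\star,\ldots,T^\star,\vec{w})$ creating $k$ copies. One first rules out $k=1$: such a step together with its matching $\lput$ at the root can be replaced by a single $\lput$ on the root of $T$, giving a strictly shorter cycle. One then rules out $k=0$: since no $\lselect$ or $\lcopy$ happens at the root, the deleted subtree $T$ either sits unchanged under the root of the original $t$ (no local steps can have occurred in it, or omitting them would shorten the cycle), in which case pruning $T$ from $t$ yields a smaller cyclic term; or $T$ is a copy produced by an earlier root step $\ldown(T',k')$ with $k'>1$, in which case making one fewer copy there and dropping the deletion yields a shorter cycle. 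Finally, at least one root $\ldown$ step must occur (otherwise all steps are below the root and a child of $t$ is a smaller cyclic term, your ``no root steps'' subcase). Hence every step at the root is either $\lput$ (arity-preserving) or $\ldown(T,k)$ with $k\ge 2$ (strictly arity-increasing), and at least one of the latter occurs, so the root of $t'$ has strictly more arguments than the root of $t$ --- contradicting $t \commute t'$. This is the key idea missing from your proposal.
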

\begin{proof}
  For a contradiction, 
  suppose there exists a cycle $t \ (\ilpored^+ \cdot \commute) \ t$ on terms.
  We call such a term $t$ \emph{cyclic}. 
  Let $t$ be a smallest cyclic term (with respect to the number of symbols).
  We consider a shortest cycle of $t$ (with respect to the number of steps):
  \begin{align*}
    \text{$\mathcal{C}$: $t = f(\vec{u}) \ilpored^+ t' = f(\vec{v}) \commute t$,}
  \end{align*}
  \begin{enumerate}
  \item \label{item:head_t'_descends_from_head_t}
    Our first observation on the cycle $\mathcal{C}$ is that 
    the head symbol of $t'$ descends from the head symbol of $t$.
    
    For suppose that the head symbol $f$ of $t'$ descends from a symbol occurrence in $t$ at position $q \ne \posemp$. Then, by Lemma~\ref{lem:star:subtrees}, we have $\subtermat{t}{q} \ilpored^* t'$.  Since $t \commute t'$, there is a $q'$ in $t'$
  such that $\subtermat{t'}{q'} \commute \subtermat{t}{q}$, $q'$ not at the root.
  (That is, $\subtermat{t'}{q'}$ is a proper subterm of $t'$.)
  Therefore $t' \ilpored^+ \subtermat{t'}{q'}$ by, e.g., a $(\ilpored_\lput \cdot \ilpored_\lselect)^+$ sequence at the root. But then we obtain the cycle
  \[
  \subtermat{t}{q} \ilpored^* t' \ilpored^+ \subtermat{t'}{q'} \commute \subtermat{t}{q}
  \]
  which contradicts that $t$ that is the smallest cyclic term (Figure~\ref{prop:rpo:acyclic:item1}).

  \begin{figure}[h!]
    \centering
    \begin{tikzpicture}[scale=0.7,default,thick,arrow/.style={->,arrows={-Triangle[angle=60:10pt,black,fill=white,scale=0.8]}}]
      \begin{scope}[xshift=0mm]
        \draw (0,0) to ++(20mm,-40mm) to ++(-40mm,0mm) to cycle;
        \draw [fill=cblue!20] (5mm,-25mm) to ++(5mm,-15mm) to ++(-10mm,0mm) to cycle;
        \node (t) [anchor=south] at (0,0) {$t$};
        \node (q) [anchor=east] at (5mm,-25mm) {$q$};
      \end{scope}
      \begin{scope}[xshift=50mm]
        \draw (0,0) to ++(20mm,-40mm) to ++(-40mm,0mm) to cycle;
        \draw [fill=cblue!20] (5mm,-25mm) to ++(5mm,-15mm) to ++(-10mm,0mm) to cycle;
        \node (t2) [anchor=south] at (0,0) {$t'$};
        \draw [arrow] (0,0) to node (c) [at end] {} ++(-1mm,-8mm);
        \draw [arrow] (c) to node (c) [at end] {} ++(4mm,-8mm);
        \draw [arrow] (c) to node (c) [at end] {} ++(2mm,-8mm);
        \node (q2) [anchor=east] at (5mm,-25mm) {$q'$};
      \end{scope}
      \begin{scope}[xshift=100mm]
      \end{scope}
      \draw [arrow] ($(t)+(9mm,0mm)$) to ++(10mm,0mm);
      \draw [arrow] ($(t)+(20mm,0mm)$) to ++(10mm,0mm);
      \draw [arrow] ($(t)+(31mm,0mm)$) to ++(10mm,0mm);
      \node at ($(q)!.5!(q2)$) {$\commute$};
    
      \draw [ultra thick,cblue!30] (5mm,-25mm) to[out=0,in=-130] ++(15mm,10mm) to[out=-130+180,in=170] ++(10mm,5mm) to[out=170+180,in=180] ++(20mm,10mm);
    \end{tikzpicture}
    \caption{Illustration of item (1) in the proof of Proposition~\ref{prop:rpo:acyclic}.}\label{prop:rpo:acyclic:item1}
  \end{figure}
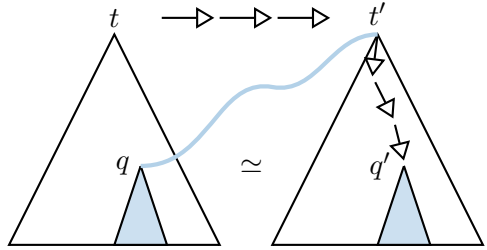

  \item\label{item:no_select_or_copy_at_root} The second observation is that in $\mathcal{C}$ there do not occur
  steps of type $\lselect$ or $\lcopy$ at the root.
  
  For suppose there is such a first step in $\mathcal{C}$. This cannot be a $\lselect$ step, since otherwise the head symbol of $t$ would not have a descendant in $t'$, contradicting the previous item. So $\mathcal{C}$ must be of the form
  \[
  t = f(\vec{u}) \ilpored^+_{\alpha} f^\star(\vec{w}) = s
  \ilpored_{\lcopy} g(s,\ldots,s) \ilpored^*_{\beta} t' \commute t
  \]
  for some $g < f$, where in part $\alpha$ no $\lselect$ or $\lcopy$ steps occur at the root. (Observe that $f(\vec{u}) \ilpored^+_{\alpha} f^\star(\vec{w})$ is nonempty because $f$ must become starred for a $\lcopy$ step to occur.)
  
  Now the only descendants in $g(s,\ldots,s)$ of the head symbol $f$ of $t$ are the head symbol $g$ and the head symbols of the arguments $s,\ldots,s$.
  By item~\eqref{item:head_t'_descends_from_head_t} and the fact that the descendant relation is defined through relational composition, one of these symbols must be the ancestor of the head symbol $f$ of $t'$. By Lemma \ref{lem:descendants_are_smaller} and $g < f$, this ancestor cannot be $g$. Thus, it must be the head symbol $f$ of one of the arguments $s,\ldots, s$.
  By Lemma \ref{lem:star:subtrees} we then obtain a sequence $s \ilpored^*_\gamma t'$ of length not
  exceeding the length of part $\beta$. Omitting the first $\lcopy$ step, it follows that 
  \[
  t = f(\vec{u}) \ilpored^+_{\alpha} f^\star(\vec{w}) = s
  \ilpored^*_\gamma t' \commute t
  \]
  is a cycle on $t$ shorter than $\mathcal{C}$, contradicting $\mathcal{C}$'s
  minimality.

  \item
  Let $\ldown(T, k)$ denote a $\ldown$ step \emph{at a root in $\mathcal{C}$} which creates $k$ starred copies of a subtree $T$.
  Our final observation is that any step at a root in $\mathcal{C}$ is either a $\lput$ step or a $\ldown(k)$ step with $k > 1$, and that there is at least one such a $\ldown$ step. From this it follows that the root of $t'$ has more subterms than the root of $t$, contradicting $t \commute t'$, and concluding the proof.
  
  The final observation follows from the following three items:
  \begin{enumerate}
      \item At least one $\ldown$ step must occur in $\mathcal{C}$. Otherwise, it follows from item~\eqref{item:no_select_or_copy_at_root} that all steps in $\mathcal{C}$ happen below the root, implying the existence of a cyclic term smaller than $t$.
     \item No $\ldown(T, 1)$ step takes place. For the step and its corresponding $\lput$ step at the root can be replaced with a single $\lput$ step on the root of $T$, yielding a cycle smaller than $\mathcal{C}$.
    \item No $\ldown(T,0)$ step takes place. For a contradiction, suppose otherwise. Since no $\lselect$ and $\lcopy$ steps take place at the root, $T$ was either originally under the head symbol of $t$, or it arose due to some earlier $\ldown(T', k)$ step with $k > 1$ and $T$ a reduct of $(T')^\star$. In either case, we know that no local rewrite steps have since taken place in $T$, since omitting these steps would yield a cycle shorter than $\mathcal{C}$.
    
    In the first case, $T$ can be pruned from the initial term $t$, yielding a smaller cyclic term. More formally, we have
    \[
    t 
    = 
    f(\vec{w},T,\vec{x})
    \ilpored^+
    f^\star(\vec{y},T,\vec{z})
    \ilpored_{\ldown(T,0)} 
    f(\vec{y}, \vec{z})
    \ilpored^*
    t'
    \commute
    t \text{,}
    \]
   from which we can construct a sequence
    \[
    f(\vec{w}, \vec{x})
    \ilpored^*
    f(\vec{y},\vec{z})
    \ilpored^*
    t'
    \commute
    t 
    =
    f(\vec{w},T,\vec{x})
    \ilpored^+
    f(\vec{w}, \vec{x})
    \]
  with $f(\vec{w}, \vec{x})$ the cyclic term smaller than $t$ (postponing the ${\commute}$ step).
   
   In the second case, we can make one less copy of $T'$ in the earlier $\ldown(T', k)$ step, allowing us to construct a cycle shorter than $\mathcal{C}$ by omitting the $\ldown(T,0)$ step. More formally, we have
   \[
   \begin{array}{llll}
    t 
    \
    (\ilpored^+ \cdot \ilpored_{\ldown(T',k)})
    \
    f(\vec{w}, (T')^\star, \vec{x})
    \ilpored^+_\alpha
    f^\star(\vec{y}, T, \vec{z})
    & \ilpored_{\ldown(T,0)}
    f(\vec{y}, \vec{z}) \\
    &\ilpored^* t' \commute t
    \end{array}
    \]
    from which we can construct a shorter cycle
       \[
    t 
    \
    (\ilpored^+ \cdot \ilpored_{\ldown(T', k-1)})
    \
    f(\vec{w}, \vec{x})
    \ilpored^*_\beta
    f(\vec{y}, \vec{z})
    \ilpored^* t' \commute t
    \]
   where part $\beta$ is shorter than $\alpha$.\qedhere
\end{enumerate}
  \end{enumerate}
\end{proof}

This ends the proof of acyclicity of the iterative path ordering on terms $\ter{\Sigma}{\avars}$.

\begin{corollary}[Acyclicity on trees]\label{lem:rpo:acyclic}
  For no tree $t \in \tre{\Sigma}{\avars}$ we have $t \ilpored^+ t$.
\end{corollary}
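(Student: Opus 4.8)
The plan is to obtain this corollary as an immediate consequence of the two preceding results, with essentially no new work required. Lemma~\ref{prop:rpo:acyclic} establishes that no term $t \in \ter{\Sigma}{\avars}$ satisfies $t \ (\ilpored^+ \cdot \commute) \ t$. Proposition~\ref{prop:acyclic:terms:imp:acyclic:trees}\ref{prop:acyclic:terms:imp:acyclic:trees:imp} then states precisely that, under this hypothesis, no tree $t' \in \tre{\Sigma}{\avars}$ satisfies $t' \ilpored^+ t'$. Chaining the two facts gives the claim directly, so the proof is a single line invoking them.

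Should one prefer to unfold the argument rather than cite Proposition~\ref{prop:acyclic:terms:imp:acyclic:trees}, I would proceed by contradiction: assume $\treecommute{t} \ilpored^+ \treecommute{t}$ for some representative term $t$. By the definition of the rewrite relation on trees, this tree cycle lifts to a sequence $t \ (\commutestep^* \cdot \ilpored \cdot \commutestep^*)^+ \ t$ on terms. Applying the postponement property of Proposition~\ref{prop:acyclic:terms:imp:acyclic:trees}\ref{prop:acyclic:terms:imp:acyclic:trees:postpone} repeatedly, I would push every $\commutestep$ step past the $\ilpored$ steps to the end of the sequence, obtaining $t \ (\ilpored^+ \cdot \commutestep^*) \ t$, i.e.\ $t \ (\ilpored^+ \cdot \commute) \ t$, which contradicts Lemma~\ref{prop:rpo:acyclic}.

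There is no genuine obstacle at this stage: all of the substance lies in the symbol-tracing argument of Lemma~\ref{prop:rpo:acyclic} and in the postponement lemma, both already in place. The only point deserving a moment's care is that the lifting of a tree cycle to a term sequence and the iterated postponement apply to sequences of arbitrary length, but this is exactly what Proposition~\ref{prop:acyclic:terms:imp:acyclic:trees} handles, so I would simply appeal to it.
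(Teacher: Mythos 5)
Your proposal matches the paper's proof exactly: the corollary is obtained as an immediate consequence of Proposition~\ref{prop:acyclic:terms:imp:acyclic:trees}\ref{prop:acyclic:terms:imp:acyclic:trees:imp} and Lemma~\ref{prop:rpo:acyclic}. Your optional unfolding simply reproduces the argument already contained in the proof of Proposition~\ref{prop:acyclic:terms:imp:acyclic:trees}, so no new work is needed and the one-line citation suffices.
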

\begin{proof}
Immediate from Proposition~\ref{prop:acyclic:terms:imp:acyclic:trees}\ref{prop:acyclic:terms:imp:acyclic:trees:imp} and Lemma~\ref{prop:rpo:acyclic}.
\end{proof}

We now have all the ingredients required for proving Theorem~\ref{thm:rpo:tree:sn}, the claim that $\ilpored^+$ restricted to unstarred trees is SN.

\begin{proof}[Proof of Theorem~\ref{thm:rpo:tree:sn}] 
  Follows from the basic Proposition~\ref{basic}, Kruskal's tree theorem (Theorem~\ref{thm:kruskal}), Proposition~\ref{ex:treeemb:rpo} and Corollary~\ref{lem:rpo:acyclic}.
\end{proof}

\section{The Second Proof: a Magical Proof}\label{sec:buchholz}

We now give an alternative proof of Theorem~\ref{thm:rpo:tree:sn}, by employing an ingenious proof technique due to Buchholz~\cite{buchholz1995proof} which uses neither Kruskal's tree theorem nor acyclicity.
Throughout this section, $(\Sigma,<)$ is a well-founded partial order.\footnote{
  Strictly speaking even transitivity is not needed, an arbitrary well-founded relation suffices. 
  However, without loss of generality, we may assume that $<$ is transitive, for otherwise we can work with $<^+$.
}

As we have seen, the star game introduced and studied so far 
has the peculiarity that it is terminating when restricted to 
the intended, proper objects, 
being finite trees with nodes labeled with natural numbers or ordinals, 
but without presence of the stars as ``control agents'’. 
On the auxiliary trees that are equipped with the control stars, 
we do not have termination, as there may be cycles and spiraling reductions,
generating a proper context of the starting tree.

An interesting subsequent step was made by Vincent van Oostrom in~\cite{klop2006iterative}, 
who noticed that by making the control stars \emph{quantitative}, 
as if having a certain amount of working energy,
conceived as a natural number or ordinal, this peculiarity can be made to disappear, 
and the auxiliary objects are themselves already terminating. 
Moreover, the two reductions, with simple stars or with their numerical refinements, 
are coinciding on the proper objects, trees without control symbols.
Figure~\ref{star-labels} presents a helicopter view of this situation.\footnote{%
	In~\cite{klop2006iterative} the present second author together with his co-authors
	chose to focus on the iterative \emph{lexicographic} path order among
	the material in draft form available at the time. That Buchholz' proof
	method could also be adapted to prove termination of related orders such
	as the iterative \emph{recursive} path order was not fleshed out and
	ended up only as a claim in the conclusion of~\cite{klop2006iterative}.
	We prove the claim applies to IPO (Definition~\ref{def:star:game}) as employed here.
}

We now state the rules of the number star game.
\begin{definition}[Energized signature]
Let $f^n$ be a fresh symbol for every $f \in \Sigma$ and $n \in \nat$.
We define 
$
\Sigma^\omega = \set{ f^n \mid f \in \Sigma,\, n \in \nat }$
as the \emph{energized signature}.
\end{definition}

\newcommand{\teromega}[2]{\ter{#1}{#2}^\omega}
\newcommand{\treomega}[2]{\tre{#1}{#2}^\omega}
\begin{notation}
We abbreviate %
$\tre{\Sigma \cup \Sigma^\omega}{\avars}$
as $\treomega{\Sigma}{\avars}$.
\end{notation}

\begin{definition}[Stars with specified energy]\label{def:star:game:nat}%
  On $\treomega{\Sigma}{\avars}$ we define the rewriting system $\trsomega$,
  which induces the reduction relation $\ilporedomega$, as follows:
  \begin{gather*}
    \begin{aligned}
      (\lput) && f(\vec{s}) &\ilporedomega f^n(\vec{s})\\
      (\lselect) && f^{n+1}(\vec{s},t,\vec{u}) &\ilporedomega t  \\
      (\lcopy) && f^{n+1}(\vec{s}) &\ilporedomega g(\underbrace{f^n(\vec{s}),\ldots,f^n(\vec{s})}_{\text{$0$ or more}}) && \text{if $f > g$} \\
      (\ldown) && f^{n+1}( \vec{s},t,\vec{u} ) &\ilporedomega f( \vec{s}, \underbrace{t^n, \ldots, t^n}_{\text{$0$ or more}}, \vec{u} )
    \end{aligned}
  \end{gather*}
  for every every $n \in \nat$ and $f,g \in \Sigma$.
\end{definition}

A star $\star$ can be seen as an unspecified amount of ``energy''
while a label $n$ is a precise quantity of energy. 
We can lift every finite reduction in $\trsstar$ to one in $\trsomega$
by replacing the stars by energy quanta 
that are large enough to make the journey all the way. 
As the energy decreases by at most $1$ in every $\ilporedomega$ step, 
the length of the (remaining) reduction bounds the amount of energy that we need.

\begin{proposition} \label{lem:ilex:llex}
  The relations
  $\ilpored^+$ and
  $\ilporedomega^+$ coincide
  on unmarked trees $\tre{\Sigma}{\avars}$. \qed
\end{proposition}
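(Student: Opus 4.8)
The plan is to establish the two inclusions ${\ilpored^+} \subseteq {\ilporedomega^+}$ and ${\ilporedomega^+} \subseteq {\ilpored^+}$ on $\tre{\Sigma}{\avars}$ separately, each by a straightforward erasure/refinement argument. For the easy direction, ${\ilporedomega^+} \subseteq {\ilpored^+}$, I would define an erasure map $e \colon \treomega{\Sigma}{\avars} \to \trestar{\Sigma}{\avars}$ that replaces every energized symbol $f^n$ by the starred symbol $f^\star$ and leaves unmarked symbols untouched. A step-by-step inspection of the four rules of $\trsomega$ against the four rules of $\trsstar$ shows that every $\ilporedomega$ step $s \ilporedomega t$ maps to a corresponding $\ilpored$ step $e(s) \ilpored e(t)$ (the side condition $f > g$ in $\lcopy$ is preserved, and the energy superscript on the created $\ldown$-copies $t^n$ simply becomes a $\star$). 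Since $e$ is the identity on unmarked trees, a chain $s \ilporedomega^+ t$ with $s,t \in \tre{\Sigma}{\avars}$ yields $s \ilpored^+ t$.

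For the substantive direction, ${\ilpored^+} \subseteq {\ilporedomega^+}$, the idea sketched just above the statement is to take a given reduction $s = u_0 \ilpored u_1 \ilpored \cdots \ilpored u_k = t$ with $s, t$ unmarked and ``refine'' it by assigning to each star a finite energy budget large enough to complete the remainder of the journey. The key quantitative observation is that in any $\ilporedomega$ step the energy superscript of any surviving energized occurrence drops by at most $1$ (it drops by exactly $1$ in the occurrence that fires $\lselect$, $\lcopy$, or $\ldown$ at its root, and is unchanged in all others, while newly created energized symbols in $\lcopy$ and $\ldown$ inherit the decremented value $n$). Consequently, if every star occurrence that is ever created along the reduction is instead born with energy $k$ (an upper bound on the total number of remaining steps, hence on the number of steps in which that occurrence can participate as the active redex before it disappears), then every $\lselect$, $\lcopy$ and $\ldown$ step that was legal in $\trsstar$ has a positive superscript available and is legal in $\trsomega$. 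Formally I would make this precise by induction on $k$: define a lifting of $u_0 \ilpored^k u_k$ to $u_0' \ilporedomega^k u_k'$ where $u_i'$ is obtained from $u_i$ by replacing each starred occurrence with $f^{m}$ for a suitable $m$ depending on how many more times that occurrence will be used; the $\lput$ step $f(\vec{s}) \ilpored f^\star(\vec{s})$ is simulated by $f(\vec{s}) \ilporedomega f^{m}(\vec{s})$ with $m$ the number of remaining steps, and the three ``spending'' rules go through because the relevant superscript is $\ge 1$.

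The main obstacle, and the only place needing care, is bookkeeping the energy assignment in the presence of the $\lcopy$ and $\ldown$ rules, which \emph{duplicate} a starred subtree: after such a step there are several descendants of one star occurrence, and each may later be used independently, yet in $\trsomega$ they all inherit the \emph{same} decremented energy $n$. The resolution is that the descendant relation (Definition~\ref{def:star:tracing}) still has unique ancestors, so the number of steps in which \emph{any} fixed occurrence acts as the root redex is still bounded by the length of the remaining reduction; giving every star the uniform budget $k$ (or, more frugally, ``remaining length'') is safe because budgets need not be apportioned among copies — each copy independently gets a fresh count that is a valid upper bound for its own future. With this uniform-budget choice the induction is routine, and combining the two inclusions yields the proposition.
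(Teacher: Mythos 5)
Your proposal is correct and follows essentially the same route as the paper, which leaves the proof to the one-line remark preceding the proposition: erase energies for one inclusion, and for the other lift a finite $\trsstar$ reduction to $\trsomega$ by giving each star an energy budget bounded by the length of the remaining reduction, using that energy drops by at most $1$ per step. Your extra care about duplicated occurrences in $\lcopy$/$\ldown$ is a sound elaboration of the invariant (energy $\ge$ remaining length) that the paper's remark implicitly relies on.
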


\begin{example}
  We reconsider Example~\ref{ex:irpo_application}
  with the single rule $1(0(x)) \to 0(0(1(x)))$ and the order $1 > 0$ on the signature.
  In $\trsomega$ we have
  \[
  \begin{array}{llll}
    \treecommute{1(0(x))} & \ilporedomega_\lput & \treecommute{1^4(0(x))} \\
    & \ilporedomega_\lcopy & \treecommute{0(1^3(0(x)))} \\
    & \ilporedomega_\lcopy & \treecommute{0(0(1^2(0(x))))} \\
    &\ilporedomega_\ldown & \treecommute{0(0(1(0^1(x))))} \\
    &\ilporedomega_\lselect & \treecommute{0(0(1(x)))}
  \end{array}
  \]
  Note that the energy decreases in each step.
\end{example}

We now prove the main theorem, termination of $\ilporedomega$, by employing a proof technique due to Buchholz~\cite{buchholz1995proof}. As mentioned before, the proof technique was adapted by Vincent van Oostrom, and is included in~\cite{klop2006iterative}. In the following proof we closely follow the exposition given by van Oostrom 
in~\cite{klop2006iterative}.

\begin{proposition} \label{prop:ilpoomega:sn}
  $\ilporedomega$ is terminating.
\end{proposition}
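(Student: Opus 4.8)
The plan is to prove termination of $\ilporedomega$ on $\treomega{\Sigma}{\avars}$ by a well-founded induction, following Buchholz's method as adapted by van Oostrom. The key idea is to define a predicate expressing ``strongly normalizing'' (i.e., $\SN$ with respect to $\ilporedomega$) and to prove a closure property: if $f \in \Sigma$, $n \in \nat$, and $\vec{s}$ are trees all of which are $\SN$, then $f^n(\vec{s})$ is $\SN$ (and consequently $f(\vec{s})$ is $\SN$, since its only redex-creating step is $\lput$ into some $f^n(\vec{s})$). Once this closure property is established, a straightforward structural induction on trees shows that \emph{every} tree in $\treomega{\Sigma}{\avars}$ is $\SN$, which is the claim.

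First I would fix the right induction order for proving the closure property. The natural candidate is a lexicographic combination: induct primarily on the precedence symbol $f$ in the well-founded order $(\Sigma, <)$, secondarily on the energy $n \in \nat$, and within that on the multiset (or tuple) of the arguments $\vec{s}$ ordered by the $\ilporedomega$-reduction order restricted to $\SN$ trees — which is well-founded precisely because the $s_i$ are assumed $\SN$. Concretely, to show $f^n(\vec{s})$ is $\SN$, it suffices to show that every one-step reduct of $f^n(\vec{s})$ is $\SN$. I would case-split on which rule fires:
\begin{itemize}
  \item a step inside some $s_i$, replacing $s_i$ by $s_i'$ with $s_i \ilporedomega s_i'$: then $s_i'$ is still $\SN$, and $(\vec{s})$ has strictly decreased in the argument component, so the inner induction hypothesis applies;
  \item a $\lselect$ step $f^{n+1}(\vec{s}, t, \vec{u}) \ilporedomega t$: immediate, since $t$ is one of the $\SN$ arguments;
  \item a $\lcopy$ step $f^{n+1}(\vec{s}) \ilporedomega g(f^n(\vec{s}), \ldots, f^n(\vec{s}))$ with $g < f$: by the secondary induction hypothesis (smaller energy $n$), each copy $f^n(\vec{s})$ is $\SN$; then by the primary induction hypothesis (smaller precedence symbol $g$), the whole term $g(\ldots)$ is $\SN$;
  \item a $\ldown$ step $f^{n+1}(\vec{s}, t, \vec{u}) \ilporedomega f(\vec{s}, t^n, \ldots, t^n, \vec{u})$: here $t$ is $\SN$, so by the secondary induction hypothesis each $t^n$ is $\SN$; then the head is plain $f$ whose only step is $\lput$ to some $f^m(\ldots)$, reducing to the already-treated case with the \emph{same} symbol $f$ but on a new argument tuple in which $t^{n}$ replaces $t$ — so I must make sure this is covered, e.g. by also allowing the argument-tuple order to absorb this, or by treating $f(\vec{s})$ directly as $\SN$ whenever all arguments are $\SN$ via the $f^m$ case.
\end{itemize}

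The main obstacle I anticipate is getting the induction order exactly right so that the $\ldown$ case closes. The $\ldown$ step lowers the energy ($n+1 \to$ heads of energy $n$) but \emph{raises} the complexity of the argument $t \mapsto t^n$ — it is no longer a subtree but a strictly larger term — and it also produces a plain-$f$ head that can be re-energized to an arbitrary $f^m$. The fix is the standard one: prove as an auxiliary lemma that if every argument of $f(\vec{s})$ (resp.\ $f^n(\vec{s})$) is $\SN$, then $f(\vec{s})$ (resp.\ $f^n(\vec{s})$) is $\SN$, by a single nested induction whose measure is $\langle f, n, \text{args} \rangle$ where ``args'' is compared by the well-founded union of (a) the product extension of $\ilporedomega$ on $\SN$ trees and (b) a bound on energies appearing at argument heads — because in the $\ldown$ case the new arguments $t^n$ have strictly smaller \emph{energy bound} than the $(n+1)$-energy head being eliminated, while their underlying trees came from the $\SN$ tree $t$. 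Making this combined measure manifestly well-founded (invoking that multiset/product extensions of well-founded relations are well-founded, and that $\nat$ is well-founded) is the delicate bookkeeping, but it is exactly the argument given by van Oostrom in~\cite{klop2006iterative}, and no new idea beyond careful choice of measure is required.
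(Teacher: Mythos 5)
Your overall architecture is the right one and matches the paper's: a well-founded induction proving the closure property ``arguments terminating $\implies$ $f^\alpha(\vec{t})$ terminating'', with the precedence symbol as the primary component and with $\lselect$, $\lcopy$ and internal steps handled essentially as in the paper. But there is a genuine gap in your treatment of $\ldown$, and it is not mere bookkeeping: your lexicographic measure $\langle f, n, \mathrm{args}\rangle$, with the energy \emph{before} the arguments, cannot be repaired in the way you sketch. After $f^{n+1}(\vec{s},t,\vec{u}) \ilporedomega f(\vec{s},t^n,\ldots,t^n,\vec{u})$ the head is plain $f$, which can be re-energized by $\lput$ to $f^m(\ldots)$ for \emph{arbitrary} $m$; so the energy component can strictly increase, and nothing placed after it in the lexicographic order can absorb that. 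Your proposed fix --- comparing argument tuples by a ``well-founded union'' of the product extension with ``a bound on energies appearing at argument heads'' --- does not resolve this: a union of well-founded relations need not be well-founded, the comparison you describe mixes head energies with argument energies, and in any case it still sits \emph{below} the energy component. A second, smaller error: you justify that each $t^n$ is $\SN$ ``by the secondary induction hypothesis'', but the secondary IH concerns the energy of the head symbol $f$, not of $t$'s head; the correct justification is simply that $t^n$ is a one-step $\lput$-reduct of the $\SN$ tree $t$.

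The paper's resolution consists of two moves you are missing. First, every unmarked $f$ is written as $f^\omega$, so that the marked and unmarked cases are proved simultaneously and $\lput$ becomes a strict decrease $\omega > m$ in the energy component with the symbol and arguments unchanged; this is what tames the ``arbitrary $m$'' re-energization (and is also needed in your $\lcopy$ case, where the reduct has a plain head $g$). Second, the triple is ordered as $\triple{f}{\mset{t_1,\ldots,t_n}}{\alpha}$, i.e.\ with the multiset of arguments \emph{before} the energy. Then the $\ldown$ case closes because $\mset{\vec{s},t,\vec{u}} \ilporedomegamulti \mset{\vec{s},t^n,\ldots,t^n,\vec{u}}$ is a strict decrease in the multiset extension ($t$ is replaced by finitely many of its reducts), which dominates the energy jumping back up to $\omega$. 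The choice and the order of the components are thus the essential idea of the proof, not a detail to be deferred.
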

\newcommand{\wflex}{\hookrightarrow}
\begin{proof}
  We write $\funap{f^{\omega}}{\vec{t}}$ to denote the tree $\funap{f}{\vec{t}}$ 
  for an \emph{unmarked} symbol $f$.
  This allows us to write \emph{any} tree from $\treomega{\Sigma}{\avars}$ uniquely in the form $\funap{f^{\alpha}}{\vec{t}}$ 
  for some unmarked symbol $f$, vector of trees $\vec{t}$ 
  and ordinal $\alpha \le \omega$.
  In the crucial induction in this proof we will make use of the fact 
  that in the ordering of the ordinals we have $\omega > n$ for each natural number $n$. 

  We prove by induction on the construction of trees that any $\omega$-marked tree is terminating.
  To be precise, we show that any $\omega$-marked tree $\funap{f^{\alpha}}{\vec{t}}$ is terminating
  under the assumption (the induction hypothesis) that its arguments $\vec{t}$ are terminating:
  \begin{center}
    $t_1,\ldots,t_n$ terminating $\quad\implies\quad$ $\funap{f^\alpha}{t_1,\ldots,t_n}$ terminating
  \end{center}  
  for every $f \in \Sigma$ and $\alpha \le \omega$.
  We prove this implication by induction on the triple
  \[
  \triple{f}{\mset{t_1,\ldots,t_n}}{\alpha}
  \]
  in the lexicographic order with respect to 
  \begin{enumerate}
    \item $>$ on the symbols of $\Sigma$, 
    \item the multiset extension $\ilporedomegamulti$ of  $\ilporedomega$, and
    \item $>$ on the markers in $\nat \cup \set{\omega}$.
  \end{enumerate}
  Note that $\ilporedomegamulti$ is well-founded on multisets that contain only terminating terms.

  Clearly, the tree $\funap{f^{\alpha}}{\vec{t}}$ is terminating if all its one-step reducts are.
  We prove that the one-step reducts are terminating by distinguishing cases on the type of the reduction step:
  \begin{enumerate}
  \item
    If the step occurs at the root, we perform a further case analysis on the rule applied:
    \begin{list}{}{}
    \item[($\lput$)]
    The first and second component of the triple remain the same. The third~component decreases from $\omega$ to some $m \in \nat$. Thus by the IH, the reduct is terminating.
    \item[($\lselect$)]
      The result follows by the termination assumption for the terms $\vec{t}$.
    \item[($\lcopy$)]
      Then $\alpha$ is of the form $m+1$ for some $m \in \nat$,
      and the reduct has shape $\funap{g^{\omega}}{\funap{f^{\,m}}{\vec{t}}, \ldots, \funap{f^{\,m}}{\vec{t}}}$ for some $g$ such that $f > g$.
      By the IH for the third component, each of the $\funap{f^{\,m}}{\vec{t}}$ is terminating.
      Hence, by the IH for the first component,
      the reduct is terminating.
    \item[($\ldown$)]
      Then $\alpha$ is of the form $m+1$ for some $m \in \nat$,
      and the reduct is of the form 
      \begin{align*}
      f^\omega( t_1,\ldots,t_{i-1}, \underbrace{t_i^m, \ldots, t_i^m}_{\text{$0$ or more}}, t_{i+1},\ldots,t_k )
      \end{align*}
      with $t_i = \funap{g^{\omega}}{\vec{s}}$ 
      for some trees $\vec{s}$,
      and $t_i^m = \funap{g^{m}}{\vec{s}}$.
      Each $t_1,\ldots,t_k$ is terminating by assumption, and $t_i^m$ is terminating since it is a reduct of $t_i$.
      Hence, the reduct is terminating by the IH for the second component.
    \end{list}
  \item
    If the step is a non-head step,
    then it rewrites some subterm, and
    the result follows by the IH for the second component. \qedhere
  \end{enumerate}
\end{proof}

\section{Hydras}\label{sec:hydras}

A (mathematical) Hydra is a tree, in which the leaves resemble the Hydra's heads, and the root resembles its body. The objective is to slay the Hydra by chopping off its heads (i.e., removing its leaves) one by one. Like its mythological counterpart, the Hydra may -- subject to certain conditions -- grow new heads in response to each chop. \emph{Will the Hydra eventually be slain, no matter which strategy we choose?}

\newcommand{\kpto}{\to_\mathit{KP}}
Multiple versions of the Hydra exist. One Hydra of particular interest is the \emph{Kirby-Paris} (\emph{KP}) \emph{Hydra}~\cite{kirby1982accessible}.
It was used by Kirby and Paris to derive a mathematical truth that cannot be proved in Peano Arithmetic (namely, the fact there is no losing strategy against this Hydra). Such independence results started with G\"odel's famous theorem; but the endeavour later on was to find independence results with a more ``mathematical" content, referring to mathematical objects such as words and trees rather than logical proof systems. (Of course, for us, the latter are also mathematical objects, but maybe not in the eyes of every mathematician.)

\newcommand{\kproot}{\textsc{KP}_1}
\newcommand{\kpbody}{\textsc{KP}_2}
\begin{definition}[Kirby-Paris Hydra]
A \emph{Kirby-Paris} (\emph{KP}) \emph{Hydra} is a finite tree with nodes labeled from $S = \{ 0, \dagger \}$. The label $\dagger$ labels only the root. Its rewrite relation $\kpto$ is defined by the TRS consisting of rule schemas
\begin{gather*}
    \begin{aligned}
      (\kproot) && \dagger(\vec{t}, \mathbf{0})  &\to \dagger(\vec{t})\\
      (\kpbody) && f(\vec{t_1}, 0(\vec{t_2}, \mathbf{0})) &\to f(\vec{t_1}, \underbrace{0(\vec{t_2}), \ldots, 0(\vec{t_2})}_{\text{$k \in \mathbb{N}$ times}})
    \end{aligned}
  \end{gather*}
for any $f \in S$. In these rules, a boldfaced $\mathbf{0}$ highlights the head $0$ that is ``chopped off'' by the rule.
\end{definition}

\begin{remark}
The Kirby-Paris Hydra is usually described as an unlabeled Hydra -- or equivalently -- a Hydra labeled from a singleton label set. In this modeling, rule $\kproot$ is of the form $0(\vec{t}, \mathbf{0}) \to 0(\vec{t})$, and its applicability must be explicitly restricted to the root.
\end{remark}

\begin{example}[Slaying a KP Hydra]
\label{ex:kp-hydra}
One battle with the KP Hydra $\dagger(0(0,0))$ may progress as follows:
\[
\begin{array}{llll}
\dagger(0(0,\mathbf{0})) & \kpto & \dagger(0(\mathbf{0}), 0(0)) \\
& \kpto & \dagger(0,\mathbf{0},0(0)) \\
& \kpto & \dagger(\mathbf{0},0(0)) \\
& \kpto & \dagger(0(\mathbf{0})) \\ 
& \kpto & \dagger(0, \mathbf{0}) \\ 
& \kpto & \dagger(\mathbf{0}) \\ 
& \kpto & \dagger \\ 
\end{array}
\]
\end{example}

The KP Hydra of Example~\ref{ex:kp-hydra} started out quite small, and made only two copies after each cut not directly below the root. But it is easy to see that the Hydra can grow explosively large even for small starting examples.

Similar to Example~\ref{ex:irpo_application}, we can use $\trsstar$ to prove that the Kirby-Paris Hydra is eventually slain.
\begin{theorem}\label{thm:kp:can:be:slaim}
Any Kirby-Paris Hydra is eventually slain, no matter the strategy. In other words, $\kpto$ is terminating.
\end{theorem}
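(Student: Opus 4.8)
The plan is to invoke Corollary~\ref{cor:termination:by:simulation}: it is enough to fix a precedence on the signature $S=\{0,\dagger\}$ and to prove the inclusion ${\kpto}\subseteq{\ilpored^+}$. Since $S$ is finite, equality on $S$ is a wqo, so Theorem~\ref{thm:rpo:tree:sn} holds whatever precedence we choose; in fact the simulations below use only the rules $\lput$ and $\ldown$, so the precedence never intervenes. Once the inclusion is in hand, Corollary~\ref{cor:termination:by:simulation} applied with $\Sigma=S$ and $R={\kpto}$ yields $\SNr{\kpto}$, which is the claim.

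For the simulation the only fact needed is that, once a head has been starred, a $\ldown$ step with $0$ copies removes any one of its children; in particular $f^\star(\vec{t},\mathbf{0})\ilpored_{\ldown}f(\vec{t})$, deleting the leaf $\mathbf{0}$. Thus a $\kproot$ step $\dagger(\vec{t},\mathbf{0})\to\dagger(\vec{t})$ is matched by the two-step reduction $\dagger(\vec{t},\mathbf{0})\ilpored_{\lput}\dagger^\star(\vec{t},\mathbf{0})\ilpored_{\ldown}\dagger(\vec{t})$. A $\kpbody$ step $f(\vec{t_1},0(\vec{t_2},\mathbf{0}))\to f(\vec{t_1},\underbrace{0(\vec{t_2}),\ldots,0(\vec{t_2})}_{k})$ is matched by first starring $f$, then applying $\ldown$ to produce $k$ starred copies of the cut subtree $0(\vec{t_2},\mathbf{0})$, and finally applying one $\ldown$ step to each copy to erase its trailing head:
\begin{align*}
f(\vec{t_1},0(\vec{t_2},\mathbf{0})) &\;\ilpored_{\lput}\; f^\star(\vec{t_1},0(\vec{t_2},\mathbf{0})) \\
&\;\ilpored_{\ldown}\; f\bigl(\vec{t_1},\underbrace{0^\star(\vec{t_2},\mathbf{0}),\ldots,0^\star(\vec{t_2},\mathbf{0})}_{k}\bigr) \\
&\;\ilpored_{\ldown}^{k}\; f\bigl(\vec{t_1},\underbrace{0(\vec{t_2}),\ldots,0(\vec{t_2})}_{k}\bigr)
\end{align*}
In each case this is a nonempty $\ilpored$-reduction ($2$ steps, respectively $2+k$ steps), so the $\kpto$ step lies in $\ilpored^+$; since $\ilpored$ is closed under contexts, the same holds for $\kpto$ rules applied inside arbitrary contexts, and hence ${\kpto}\subseteq{\ilpored^+}$.

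I expect no genuinely hard step — this is exactly the ``almost immediately obvious'' termination argument the paper advertises. The one point needing a moment's care is that $\kpbody$ is not a local rule: it rewrites the argument list of the \emph{grandparent} $f$ of the chopped head, so the matching $\ilpored^+$ reduction has to be organized at the subtree rooted at $f$, with the copy mechanism of $\ldown$ installing the $k$ duplicates under $f$. The degenerate case $k=0$ is handled automatically: the block of $\ldown$ steps is then empty, the second line already reads $f(\vec{t_1})$, and this is precisely what $\kpbody$ returns when it deletes $0(\vec{t_2},\mathbf{0})$. A last, minor remark: $\trsstar$ is considered over $\{0,\dagger\}$ without the ``$\dagger$ labels only the root'' constraint, but dropping that constraint only enlarges the set of trees on which $\ilpored^+$ is $\SN$, so the inclusion restricted to actual Kirby-Paris Hydras is all that is required.
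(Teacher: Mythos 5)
Your proof is correct and follows essentially the same route as the paper's: the identical $\lput\cdot\ldown$ simulation of the root-chopping rule, the identical $\lput\cdot\ldown\cdot\ldown^{k}$ simulation of the body rule (using $\ldown$ with $0$ copies to erase the chopped head), and then Corollary~\ref{cor:termination:by:simulation}. The extra remarks on the irrelevance of the precedence, closure under contexts, and the $k=0$ case are all sound but not needed beyond what the paper states.
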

\begin{proof}
For $\kproot$ we have:
\[
\dagger(\vec{t}, 0) \ilpored_\lput
\dagger^\star(\vec{t}, 0)
\ilpored_\ldown
\dagger(\vec{t})
\]
and for rule $\kpbody$ we have (see also Figure~\ref{fig:kp:simulation}):
\[
\begin{array}{lcll}
x(\vec{t_1}, 0(\vec{t_2}, 0))
\ilpored_\lput
x^\star(\vec{t_1}, 0(\vec{t_2}, 0))
&
\ilpored_\ldown
&
x(\vec{t_1}, \underbrace{0^\star(\vec{t_2}, 0), \ldots, 0^\star(\vec{t_2}, 0)}_{\text{$k \in \mathbb{N}$ times}}) \\
&
\ilpored_\ldown^*
&
x(\vec{t_1}, \underbrace{0(\vec{t_2}), \ldots, 0(\vec{t_2})}_{\text{$k \in \mathbb{N}$ times}})
\end{array}
\]
Hence ${\kpto} \subseteq {\ilpored^+}$, so that termination of $\kpto$ follows from Corollary~\ref{cor:termination:by:simulation}.
\end{proof}

{%
\newcommand{\anode}[5][]{
  \node (#4) [n,#2] {#1};
  \node at (#4.north east) [anchor=south west,outer sep=0,inner sep=0,rectangle,#3] {#5};
}%
\newcommand{\atree}[2]{
  \begin{pgfonlayer}{background}
    \draw [draw=black,fill=cblue!30] (#1.center) to ++(-2mm,-10mm) to ++(-10mm,0mm) to cycle; 
    \node at ($(#1)+(-4mm,-7mm)$) {$#2$};
  \end{pgfonlayer}
}%
\newcommand{\btree}[2]{
  \begin{pgfonlayer}{background}
    \draw [draw=black,fill=cblue!30] (#1.center) to ++(0mm,-10mm) to ++(-8mm,0mm) to cycle; 
    \node at ($(#1)+(-2mm,-7mm)$) {$#2$};
  \end{pgfonlayer}
}%
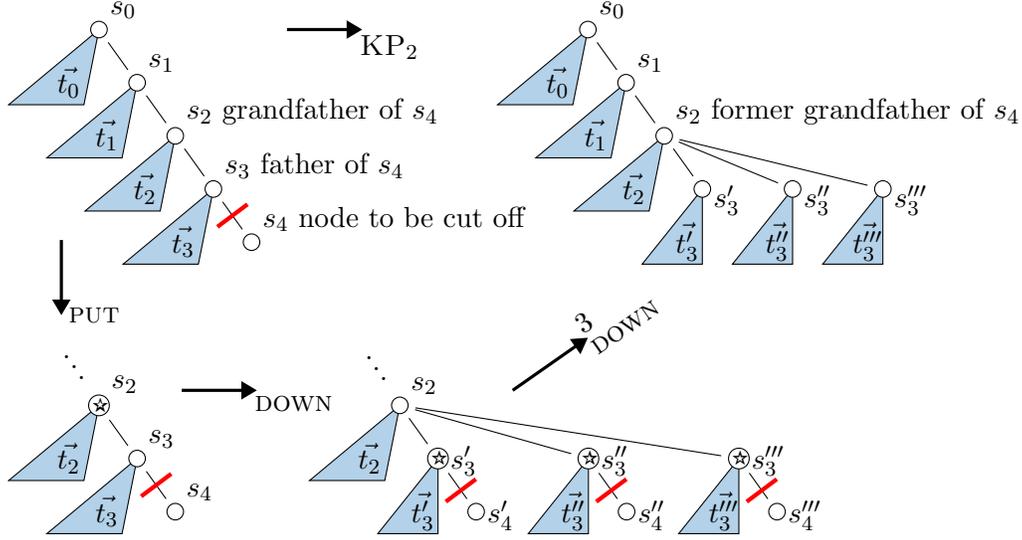
\begin{figure}
  \centering
    
  \begin{tikzpicture}[default,n/.style={smallCircle,fill=white,draw=black,minimum size=2.2mm},thin]
    \begin{scope}
      \anode{}{}{s_0}{$s_0$}
      \atree{s_0}{\vec{t_0}}
      \anode{below right of=s_0,xshift=-2mm}{}{s_1}{$s_1$} \draw (s_0) to (s_1);
      \atree{s_1}{\vec{t_1}}
      \anode{below right of=s_1,xshift=-2mm}{}{s_2}{$s_2$ grandfather of $s_4$} \draw (s_1) to (s_2);
      \atree{s_2}{\vec{t_2}}
      \anode{below right of=s_2,xshift=-2mm}{}{s_3}{$s_3$ father of $s_4$} \draw (s_2) to (s_3);
      \atree{s_3}{\vec{t_3}}
      \anode{below right of=s_3,xshift=-2mm}{}{s_4}{$s_4$ node to be cut off} \draw (s_3) to (s_4);
      \draw [ultra thick,red] ($(s_3)!.5!(s_4) + (-2mm,-1.5mm)$) to ++(4mm,3mm);
    \end{scope}
  
    \begin{scope}[xshift=65mm]
      \draw [very thick,->] (-40mm,0mm) to node [at end,below,anchor=north west,rectangle,inner sep=.5mm,xshift=-1mm] {$\kpbody$} ++(10mm,0mm);
  
      \anode{}{}{s_0}{$s_0$}
      \atree{s_0}{\vec{t_0}}
      \anode{below right of=s_0,xshift=-2mm}{}{s_1}{$s_1$} \draw (s_0) to (s_1);
      \atree{s_1}{\vec{t_1}}
      \anode{below right of=s_1,xshift=-2mm}{}{s_2}{$s_2$ former grandfather of $s_4$} \draw (s_1) to (s_2);
      \atree{s_2}{\vec{t_2}}
  
      \anode{below right of=s_2,xshift=-2mm}{anchor=north west,yshift=-1mm}{s_3'}{$s_3'$} \draw (s_2) to (s_3');
      \btree{s_3'}{\vec{t_3'}}
      \anode{right of=s_3',node distance=12mm}{anchor=north west,yshift=-1mm}{s_3''}{$s_3''$} \draw (s_2) to (s_3'');
      \btree{s_3''}{\vec{t_3''}}
      \anode{right of=s_3'',node distance=12mm}{anchor=north west,yshift=-1mm}{s_3'''}{$s_3'''$} \draw (s_2) to (s_3''');
      \btree{s_3'''}{\vec{t_3'''}}
    \end{scope}
  
    \begin{scope}[yshift=-50mm,xshift=0mm]
      \draw [very thick,->] (-5mm,22mm) to node [at end,right] {$\lput$} ++(0mm,-10mm);
      
      \anode[$\star$]{}{}{s_2}{$s_2$} \node [anchor=south east,xshift=-2mm,rotate=-55] at (s_2.north west) {$\dots$};
      \atree{s_2}{\vec{t_2}}
      \anode{below right of=s_2,xshift=-2mm}{}{s_3}{$s_3$} \draw (s_2) to (s_3);
      \atree{s_3}{\vec{t_3}}
      \anode{below right of=s_3,xshift=-2mm}{}{s_4}{$s_4$} \draw (s_3) to (s_4);
      \draw [ultra thick,red] ($(s_3)!.5!(s_4) + (-2mm,-1.5mm)$) to ++(4mm,3mm);
    \end{scope}
  
    \begin{scope}[yshift=-50mm,xshift=40mm]
      \draw [very thick,->] (-29mm,2mm) to node [at end,below,anchor=north west,rectangle,inner sep=.5mm,xshift=-1mm] {$\ldown$} ++(10mm,0mm);
  
      \anode{}{}{s_2}{$s_2$} \node [anchor=south east,xshift=-2mm,rotate=-55] at (s_2.north west) {$\dots$};
      \atree{s_2}{\vec{t_2}}
  
      \anode[$\star$]{below right of=s_2,xshift=-2mm}{anchor=north west,yshift=0mm}{s_3'}{$s_3'$} \draw (s_2) to (s_3');
      \btree{s_3'}{\vec{t_3'}}
      \anode[$\star$]{right of=s_3',node distance=20mm}{anchor=north west,yshift=0mm}{s_3''}{$s_3''$} \draw (s_2) to (s_3'');
      \btree{s_3''}{\vec{t_3''}}
      \anode[$\star$]{right of=s_3'',node distance=20mm}{anchor=north west,yshift=0mm}{s_3'''}{$s_3'''$} \draw (s_2) to (s_3''');
      \btree{s_3'''}{\vec{t_3'''}}
  
      \anode{below right of=s_3',xshift=-2mm}{anchor=north west}{s_4'}{$s_4'$} \draw (s_3') to (s_4');
      \draw [ultra thick,red] ($(s_3')!.6!(s_4') + (-2mm,-1.5mm)$) to ++(4mm,3mm);
      \anode{below right of=s_3'',xshift=-2mm}{anchor=north west}{s_4''}{$s_4''$} \draw (s_3'') to (s_4'');
      \draw [ultra thick,red] ($(s_3'')!.6!(s_4'') + (-2mm,-1.5mm)$) to ++(4mm,3mm);
      \anode{below right of=s_3''',xshift=-2mm}{anchor=north west}{s_4'''}{$s_4'''$} \draw (s_3''') to (s_4''');
      \draw [ultra thick,red] ($(s_3''')!.6!(s_4''') + (-2mm,-1.5mm)$) to ++(4mm,3mm);
  
      \draw [very thick,->] (15mm,2mm) to node [at end,below,anchor=north west,rectangle,inner sep=.5mm,xshift=-1mm,sloped] {$\ldown$} node [at end,above,anchor=south west,rectangle,inner sep=.5mm,xshift=-1mm,sloped] {$3$} ++(10mm,7mm);
    \end{scope}
  
  \end{tikzpicture}
  \caption{Simulating rule $\kpbody$ by star rules. All nodes of the trees have label 0.}\label{fig:kp:simulation}
\end{figure}%
}

There exist even more intimidating Hydras. A well-known one is the Buchholz Hydra, in which non-root nodes are labeled with ordinals $\alpha \leq \omega$ rather than $0$. For heads labeled with $\alpha = 0$, the Buchholz Hydra acts much like the Kirby-Paris Hydra. More interestingly, for heads labeled with $0 < \alpha < \omega$, the Hydra shifts down a (slightly modified)  copy of a supertree towards the chopping site. This allows it to not only grow in width, but also in height. The precise definition of the Buchholz Hydra contains a number of subtleties, given below.

\newcommand{\bh}{BH}
\newcommand{\bhto}{\to_\textit{\bh}}
\newcommand{\bhone}{\textsc{BH}_1}
\newcommand{\bhtwo}{\textsc{BH}_2}
\newcommand{\bhthree}{\textsc{BH}_3}
\begin{definition}[Buchholz Hydra~\cite{buchholz1987independence}]
A \emph{Buchholz Hydra} (\emph{\bh}) is a finite tree with nodes labeled from $L = \mathbb{N} \cup \{ \omega, \dagger \}$, where
\begin{enumerate}
    \item $\dagger$ labels only the root,
    \item\label{buch:hydra:below:root:zero} every node directly below the root is labeled $0$, and
    \item every remaining node is labeled with an ordinal $\alpha \leq \omega$.
\end{enumerate}

Its rewrite relation $\bhto$ can be defined as a TRS as induced by the following three rule schemas.

The first rule schema is a generalization of the Kirby-Paris rule schema $\kpbody$:
\begin{gather*}
    \begin{aligned}
(\bhone) && x(\vec{t_1}, \alpha(\vec{t_2}, \mathbf{0})) &
\to
x(\vec{t_1}, \underbrace{\alpha(\vec{t_2}), \ldots, \alpha(\vec{t_2})}_{\text{$k \in \mathbb{N}$ times}})
    \end{aligned}
  \end{gather*}
for any $x, \alpha \in L$.

The second rule is more intricate. For a targeted leaf node $x$ labeled with a successor ordinal $0 < \alpha < \omega$, find the closest ancestor $y$ such that $\mathit{label}(y) \leq \alpha$. (By condition \eqref{buch:hydra:below:root:zero} above, such an ancestor exists.)  Now copy the subtree $T$ rooted at $y$. In $T$, set $\mathit{label}(y) = \alpha - 1$, and set $\mathit{label}(x) = 0$. Call the result of these two changes $T'$. Then the second rule is given by
\begin{gather*}
    \begin{aligned}
      (\bhtwo) && z(\vec{t}, \mathbf{\alpha}) & \to z(\vec{t}, T')
    \end{aligned}
\end{gather*}
for any $z \in L$.

The third rule, finally, simply relabels a targeted leaf node with label $\omega$ to an arbitrary $k \in \mathbb{N}$, i.e.,
\begin{gather*}
    \begin{aligned}
      (\bhthree) && z(\vec{t}, \mathbf{\omega}) & \to z(\vec{t}, k) & \text{for $k \in \mathbb{N}$}
    \end{aligned}
\end{gather*}
for any $z \in L$.

\end{definition}

\begin{example}[Example application of $\bhtwo$]
Cutting off the boldfaced head in the Buchholz Hydra $\dagger(0(\omega), 0(2,7(\mathbf{5})))$ results in the Hydra
$\dagger(0(\omega), 0(2,7(T')))$ with $T' = 4(2,7(0))$.
\end{example}

Buchholz~\cite{buchholz1995proof} has shown that every Buchholz Hydra can be slain by always chopping the rightmost head. 

\begin{theorem}
  $\bhto$ is weakly normalizing.
\end{theorem}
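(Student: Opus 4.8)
The plan is to assign to every Buchholz Hydra $H$ a countable ordinal $o(H)$ in such a way that every legal move -- in particular chopping the rightmost head -- strictly decreases $o$. Well-foundedness of the ordinals then guarantees that the rightmost-head battle from any Hydra halts after finitely many steps, and the reduction it produces witnesses weak normalisation. This is in essence Buchholz's argument~\cite{buchholz1987independence,buchholz1995proof}.

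The ordinal $o$ is built from Buchholz's collapsing functions $\psi_\nu$ and the ordinals $\Omega_\nu$ (with $\Omega_0 = 1$ and $\Omega_\nu$ the $\nu$-th uncountable cardinal for $\nu \geq 1$), see~\cite{buchholz1986psifunctions}: to a leaf labelled $u \in \nat \cup \set{\omega}$ we assign $\Omega_u$; to an internal node labelled $u$ with children-subtrees of values $\gamma_1,\ldots,\gamma_m$ we assign $\psi_u(\gamma_1 \# \cdots \# \gamma_m)$, where $\#$ is the natural (Hessenberg) sum; and to the Hydra $H = \dagger(T_1,\ldots,T_n)$ we assign $o(H) = o(T_1) \# \cdots \# o(T_n)$. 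Only the standard properties of this hierarchy are needed: $\psi_\nu$ is strictly increasing, its values are additively principal, $\psi_\nu(\beta)$ always stays below $\Omega_{\nu+1}$ (the collapsing bound), $\psi_\nu(\beta+1)$ exceeds every finite multiple of $\psi_\nu(\beta)$, and $\nu < \mu$ implies $\Omega_\nu < \Omega_\mu$. Since $\psi_0(\beta) < \Omega_1$ for all $\beta$, the value $o(H)$ is always a countable ordinal.

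It then suffices to check that each of the three rule schemas strictly decreases $o$. One uses that $o$ is strictly monotone in the value of every subtree along the spine from the rewritten position up to the root -- each node on that spine retains its label and all children except the one on the spine, whose value drops -- so it is enough to see a strict decrease at the rewritten position. For $\bhthree$ a leaf is relabelled from $\omega$ to some $k \in \nat$, so its value drops from $\Omega_\omega$ to $\Omega_k < \Omega_\omega$. For $\bhone$ a node $\alpha(\vec{t_2},\mathbf{0})$ of value $\psi_\alpha(o(\vec{t_2}) + 1)$ is replaced, beneath its parent, by $k$ copies of $\alpha(\vec{t_2})$, together contributing $\psi_\alpha(o(\vec{t_2})) \cdot k$, which is smaller. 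For $\bhtwo$ the chopped head (a leaf labelled $\alpha$, hence of value $\Omega_\alpha$) is replaced by the tree $T'$ whose root carries label $\alpha - 1$, so $o(T') = \psi_{\alpha-1}(\cdots) < \Omega_\alpha$ by the collapsing bound -- regardless of how large $T'$ has become. Hence $o$ strictly decreases along every $\bhto$-step, and the theorem follows. (The same computation in fact gives termination of $\bhto$; Buchholz's formulation via the rightmost-head strategy is what the independence applications require.)

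The crux -- the step I expect to be the main obstacle -- is getting the ordinal assignment right. The tempting choice that reads the node label $u$ directly as the $\psi$-subscript \emph{and} sends a leaf labelled $u$ to $\psi_u(0)$ fails: in a $\bhtwo$-move the nodes lying between the chopped leaf and its governing ancestor may carry large labels, and then the value of $T'$ overshoots, so that $o$ can even \emph{increase}. Sending leaves to $\Omega_u$ instead is precisely what lets the collapsing bound $\psi_{\alpha-1}(\,\cdot\,) < \Omega_\alpha$ absorb an arbitrarily large $T'$. Beyond this, the proof relies on the basic theory of Buchholz's $\psi$/$\Omega$ hierarchy, which the present paper has so far only mentioned in passing; developing or importing it~\cite{buchholz1986psifunctions} is where most of the labour lies, the Hydra-specific part being the brief case analysis above.
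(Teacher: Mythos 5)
You should first be aware that the paper does not prove this theorem at all: it is quoted from Buchholz~\cite{buchholz1987independence,buchholz1995proof}, whose result is specifically that the \emph{rightmost-head} strategy slays every Hydra. So there is no proof in the paper to compare against, and your attempt must stand on its own; as written it has a genuine gap, and the paper itself signals where. Two sentences after the theorem the authors state that strong normalization of $\bhto$ is only a \emph{conjecture}, whereas your argument nowhere uses the rightmost-head restriction and claims that \emph{every} $\bhto$-step strictly decreases a countable ordinal, which would prove $\SN$. The gap is the assertion that ``$\psi_\nu$ is strictly increasing''. Buchholz's collapsing functions are only weakly increasing: $\psi_\nu(\beta)<\psi_\nu(\beta')$ for $\beta<\beta'$ requires in addition that $\beta$ be constructible below itself, i.e.\ $\beta\in C_\nu(\beta)$ in Buchholz's notation. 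The definition of the Buchholz Hydra places no constraint relating a node's label to the labels of its descendants, so the arguments your assignment feeds to $\psi_u$ need not satisfy this condition; for instance a node labelled $0$ whose child is a node labelled $0$ with a leaf child labelled $2$ already produces the argument $\psi_0(\Omega_2)$, which is additively principal and not of the form $\psi_0(\xi)$ for any countable $\xi<\psi_0(\Omega_2)$, hence not in $C_0$ of itself. Where $\psi_u$ is locally constant, both the local decrease (the $\bhone$ case needs $\psi_\alpha(\beta)\cdot k<\psi_\alpha(\beta+1)$, which fails for every $k\ge 1$ when $\psi_\alpha(\beta)=\psi_\alpha(\beta+1)$) and the propagation of the decrease up the spine break down.

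The substance of Buchholz's proof is exactly the part you defer to ``importing the basic theory'': one isolates a set $OT$ of ordered terms on which the $\psi_\nu$ are strictly monotone, shows that the configurations reachable under the rightmost strategy remain in correspondence with $OT$ (this is where the strategy enters and why only weak normalization is obtained), and only then reads off the ordinal decrease. Your leaf assignment $u\mapsto\Omega_u$ and the use of the collapsing bound $\psi_{\alpha-1}(\cdot)<\Omega_\alpha$ for $\bhtwo$ are the right ingredients, but without the constructibility analysis the decrease is not established, and the parenthetical claim that the same computation yields full termination cannot be right as stated---it would settle the paper's open conjecture in two paragraphs.
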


We conjecture that $\bhto$ is even strongly normalizing. But even if this is correct, IPO cannot be used to prove this. IPO can simulate rules $\bhone$ and $\bhthree$, respectively, by the argument given in the proof to Theorem~\ref{thm:kp:can:be:slaim}, and the fact that IPO can decrease labels. But rule $\bhtwo$ cannot generally be simulated, as the proof to the following lemma demonstrates.

\begin{lemma}[IPO cannot simulate BH]\label{lem:ipo:cannot:simulate:bh}
${\bhto} \not\subseteq {\ilpored^*}$.
\end{lemma}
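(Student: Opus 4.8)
The plan is to pin down one application of rule $\bhtwo$, say $H \bhto H'$, for which conversely $H' \ilpored^+ H$ holds; then $H \ilpored^* H'$ would close an $\ilpored$-cycle on the unstarred tree $H$, which acyclicity forbids. I fix on $L$ the natural precedence $\dagger > \omega > \cdots > 2 > 1 > 0$ (the one allowing IPO to decrease labels, implicitly used to simulate $\bhone$ and $\bhthree$). Take $H = \dagger(0(2(1)))$, which satisfies all side conditions of the Buchholz Hydra, and cut its unique leaf, labelled $1$. Since $\alpha = 1$ is a successor ordinal with $0 < \alpha < \omega$ and the leaf's parent is labelled $2 > \alpha$, the closest ancestor with label $\le \alpha$ is the node labelled $0$ directly below the root; the copied-and-modified subtree is therefore $T' = 0(2(0))$ (setting that node's label to $\alpha - 1 = 0$, unchanged, and the cut leaf's label to $0$), and since the leaf's parent is labelled $2$, rule $\bhtwo$ rewrites $2(\mathbf{1})$ to $2(T') = 2(0(2(0)))$. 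Thus $H \bhto H'$ with $H' = \dagger(0(2(0(2(0)))))$. The first task of the proof is just to write out this computation and recheck the side conditions.

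The crucial observation is that $H \treeemb H'$: both $H$ and $H'$ are paths, and the map sending the root, the node labelled $0$, and the node labelled $2$ of $H$ respectively to the first three nodes of $H'$, and the leaf $1$ of $H$ to the \emph{second} node of $H'$ labelled $2$ (the one sitting inside the spliced-in copy $T'$), is injective, supremum-preserving, and label-increasing, so it witnesses the embedding. Hence, by Proposition~\ref{ex:treeemb:rpo}, $H' \ilpored^* H$; and since $H' \neq H$ this reduction is nonempty, so $H' \ilpored^+ H$.

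The conclusion is then immediate. Suppose, for a contradiction, that $(H, H') \in {\ilpored^*}$. Since $H \neq H'$, this reduction is nonempty, so $H \ilpored^+ H'$; composing with $H' \ilpored^+ H$ gives $H \ilpored^+ H$, and $H$ is an unstarred tree, contradicting acyclicity on trees (Corollary~\ref{lem:rpo:acyclic}). Hence $(H, H') \notin {\ilpored^*}$ while $(H, H') \in {\bhto}$; therefore ${\bhto} \not\subseteq {\ilpored^*}$.

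The main obstacle I anticipate is the \emph{design} of the counterexample rather than any calculation. The idea driving the choice is that when $\bhtwo$ is applied at a head whose label $\alpha$ is strictly below the label of some node between that head and the copied ancestor, the inserted tree $T'$ re-exposes that larger label further down, so $T'$ — and hence $H'$ — homeomorphically embeds the original subtree — and hence $H$; by Proposition~\ref{ex:treeemb:rpo} the step is thus ``reversible'' by IPO, and any simulation would close a cycle. For $\bhtwo$ instances in which $\alpha$ dominates all intervening labels — e.g.\ cutting the leaf $1$ in $\dagger(0(1))$, which IPO \emph{does} simulate (the reduct $\dagger(0(0(0)))$ is $\ilpored$-reachable) — this obstruction vanishes, which is exactly why the lemma claims only that $\bhtwo$ is not simulable in general. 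The one place needing care is the bookkeeping of rule $\bhtwo$ (identifying the ancestor, forming $T'$, locating the redex); exhibiting the embedding and invoking acyclicity are routine.
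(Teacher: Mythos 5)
Your proposal is correct and follows essentially the same strategy as the paper: both fix a concrete $\bhtwo$ instance $H \bhto H'$ whose reverse $H' \ilpored^+ H$ is derivable in IPO, so that any simulation $H \ilpored^+ H'$ would close a cycle on an unstarred tree, contradicting Corollary~\ref{lem:rpo:acyclic}. The only (harmless) differences are the choice of example ($\dagger(0(2(1)))$ versus the paper's $\dagger(0(3(2)))$) and that you obtain the reverse reduction via the embedding $H \treeemb H'$ and Proposition~\ref{ex:treeemb:rpo}, whereas the paper exhibits it by explicit $\lput$, $\lselect$, $\lcopy$ steps.
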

\begin{proof}
Assume by contradiction that ${\bhto} \subseteq {\ilpored^*}$. Then in particular, the $\bhtwo$ step
$\dagger(0(3(\mathbf{2}))) \bhto \dagger(0(3(1(3(0)))))$ implies the existence of a sequence
\[
\begin{array}{llll}
\dagger(0(3(2))) & \ilpored^* & \dagger(0(3(1(3(0))))) \\
& \ilpored_\lput \cdot \ilpored_\lselect  & \dagger(0(3(3(0)))) \\ 
& \ilpored_\lput & \dagger(0(3(3^\star(0)))) \\
& \ilpored_\lcopy  & \dagger(0(3(2))) \text{,} \\
\end{array}
\]
which is a $\ilpored^+$ cycle on unmarked trees, contradicting Lemma~\ref{lem:rpo:acyclic}.
\end{proof}

\begin{remark}
The proof to Lemma~\ref{lem:ipo:cannot:simulate:bh} shows something stronger: any pursuit for a conservative extension of IPO that allows one to simulate the Buchholz Hydra is fruitless. For any such solution results in cyclicity, and therefore non-termination, of $\ilpored^+$ on unmarked trees, defeating the purpose of defining $\ilpored$ in the first place.

The heart of the problem is that, under minimal conditions, the Buchholz Hydra allows replacing a leaf $x$ in a tree $T$ with a tree $T'$, where $T'$ contains a subtree $Y$ larger than $x$. Subsequently, subtree $Y$ can be installed on the position of $x$ using repeated applications of rule $\lselect$, producing a tree $T''$ larger than the starting tree $T$.
\end{remark}

\newcommand{\blood}[1]{\underline{#1}}

Note that the Kirby-Paris Hydra can grow only in width, and never in height, whereas the Buchholz Hydra can grow both in width and in height. We now define an alternative powerful Hydra which can be labeled with any well-ordered set, and which can grow both in width and in height, but which -- unlike the Buchholz Hydra -- can be proven terminating using IPO. The underlying idea is that a chop can be regarded as prefulfilling an ``obligation'' to make its supertrees smaller: hence the roots of the supertrees receive ``permission'' (denoted by underlining) to modify themselves in accordance with the rules of $\trsstar$.
For a labeled tree $t = f(\vec{u})$, we write $\blood{t}$ to denote 
the tree $\blood{f}(\vec{u})$ obtained from $t$ by underlining the root symbol.

\newcommand{\lchop}{\ilporule{chop}}
\newcommand{\lpropagate}{\ilporule{propagate}}
\newcommand{\lwiden}{\ilporule{widen}}
\newcommand{\llengthen}{\ilporule{lengthen}}
\newcommand{\lwaive}{\ilporule{waive}}
\begin{definition}[Star Hydra]\label{def:starhydra}
A \emph{Star Hydra} (\emph{SH}) is a finite tree with nodes labeled from a well-founded set $(S, {<})$.
For every $s \in S$ let $\blood{s}$ be a fresh symbol, and define $\blood{S} = \set{\blood{s} \mid s \in S}$. 
On finite trees labeled with $S \cup \blood{S}$ we define the relation $\to$ by the term rewrite system:
\[
\begin{array}{lrlll}
    (\lchop) & n(\alpha,\vec{x}) & \to & \blood{n}(\blood{\beta_1},\ldots,\blood{\beta_k},\vec{x}) & \text{$k \ge 0$, $\beta_1,\ldots,\beta_k < \alpha$} \\
    (\lpropagate) & n(\blood{m}(\vec{x}),\vec{y}) &\to & \blood{n}(\blood{m}(\vec{x}),\vec{y}) \\
    (\lwiden) &  \blood{n}(\blood{m}(\vec{x}),\vec{y}) &\to & \blood{n}(\underbrace{\blood{m}(\vec{x}),\ldots,\blood{m}(\vec{x})}_{\text{$0$ or more}},\vec{y}) \\
    (\llengthen) &  \blood{n}(\vec{x}) &\to & \blood{m}(\blood{n}(\vec{x})) & \text{$m < n$} \\
    (\lwaive) &  \blood{m}(\vec{x}) &\to & m(\vec{x})
\end{array}
\]
The Star Hydra rewrite relation $\stepstarhydra$ is defined as the
relation
\[
\to_\lchop \cdot \to_\lpropagate^! \cdot \ ({\to_\lwiden} \ \cup \ {\to_\llengthen})^* \ \cdot 
\to_\lwaive^!
\]
where $R^!$ denotes a maximal reduction of $R$, i.e., $x \mathrel{R}^! y$ if and only if $x \mathrel{R}^* y$ and there exists no $z$ such that $y \mathrel{R} z$.
\end{definition}

Intuitively, the Star Hydra can be described as follows. When Hercules chops off a head, the Hydra enrages, possibly regrowing any smaller number of heads at the wound (${\to_\lchop}$). In addition, it seeks to strengthen the neck from the body to the wound (${\to^!_\lpropagate}$). It does this by repeatedly lengthening and duplicating subtrees along the neck, in any order ($({\to_\lwiden} \ \cup \ {\to_\llengthen})^*$). Eventually it calms down and stabilizes (${\to^!_\lwaive}$), opening up another opportunity for Hercules to strike. An example $\stepstarhydra$ step is given in Figure~\ref{fig:star:hydra}.

\begin{figure}
  \centering
  {
\newcommand{\x}[1]{\scalebox{1}[-1]{$#1$}}
\newcommand{\y}[1]{\x{\underline{#1}}}

\begin{tikzpicture}[]
  \node [anchor=south] at (0,0) {
    \scalebox{1}[-1]{%
    \begin{tikzpicture}[default,thick,level distance=10mm,nodes={circle}]
    
\node (x4) {\x{4}}
    child { node (x2) {\x{2}}
  }
    child { node (x3) {\x{3}}
      child { node (x6) {\x{6}}
    }
      child { node (x5) {\x{5}}
    }
  };
\begin{pgfonlayer}{background}
  \draw [rounded corners=1.5mm] [cblue!50,fill=cblue!50,rounded corners=1.2mm] ($(x4.east) + 0.5*(1mm,0mm)$) -- ($(x4.north east) + 0.5*(.4mm,.4mm)$) -- ($(x4.north) + 0.5*(0mm,1mm)$) -- ($(x4.north west) + 0.5*(-.4mm,.4mm)$) -- ($(x4.west) + 0.5*(-1mm,0mm)$) -- ($(x4) !0.2! 270:(x2) !.4! (x2)$) -- ($(x2) !0.2! 90:(x4) !.4! (x4)$) -- ($(x2.north west) + 0.5*(-.4mm,.4mm)$) -- ($(x2.south west) + 0.5*(-.4mm,-.4mm)$) -- ($(x2.south) + 0.5*(0mm,-1mm)$) -- ($(x2.south east) + 0.5*(.4mm,-.4mm)$) -- ($(x2) !0.2! 270:(x4) !.4! (x4)$) -- ($(x4) !0.2! 90:(x2) !.4! (x2)$) -- ($(x4) !0.2! 270:(x3) !.4! (x3)$) -- ($(x3) !0.2! 90:(x4) !.4! (x4)$) -- ($(x3.west) + 0.5*(-1mm,0mm)$) -- ($(x3) !0.2! 270:(x6) !.4! (x6)$) -- ($(x6) !0.2! 90:(x3) !.4! (x3)$) -- ($(x6.north west) + 0.5*(-.4mm,.4mm)$) -- ($(x6.south west) + 0.5*(-.4mm,-.4mm)$) -- ($(x6.south) + 0.5*(0mm,-1mm)$) -- ($(x6.south east) + 0.5*(.4mm,-.4mm)$) -- ($(x6) !0.2! 270:(x3) !.4! (x3)$) -- ($(x3) !0.2! 90:(x6) !.4! (x6)$) -- ($(x3) !0.2! 270:(x5) !.4! (x5)$) -- ($(x5) !0.2! 90:(x3) !.4! (x3)$) -- ($(x5.west) + 0.5*(-1mm,0mm)$) -- ($(x5.south west) + 0.5*(-.4mm,-.4mm)$) -- ($(x5.south) + 0.5*(0mm,-1mm)$) -- ($(x5.south east) + 0.5*(.4mm,-.4mm)$) -- ($(x5.north east) + 0.5*(.4mm,.4mm)$) -- ($(x5) !0.2! 270:(x4) !.4! (x4)$) -- ($(x4) !0.2! 90:(x5) !.4! (x5)$) -- cycle;
\end{pgfonlayer}

      \hydrahead{x2}{1}
      \hydrahead{x6}{-1}
      \hydrahead{x5}{-1}
 
     \draw [ultra thick,red] ($(x5)+(-6mm,2mm)$) to ++(6mm,5mm);
   \end{tikzpicture}
    }
  };

  \draw [->,thick] (18mm,10mm) to node [at end,anchor=north west,rectangle,inner sep=.5mm] {\textit{SH}} ++(15mm,0);

  \node [anchor=south] at (75mm,0mm) {
    \scalebox{1}[-1]{%
    \begin{tikzpicture}[default,thick,level distance=10mm,nodes={circle},level 2/.style={sibling distance=40mm},level 3/.style={sibling distance=10mm}]
    
\node (y2) {\x{2}}
    child { node (y4) {\x{4}}
      child { node[xshift=10mm] (2) {\x{2}}
    }
      child { node[xshift=-15mm,yshift=-5mm] (3a1) {\x{3}}
        child { node (6a) {\x{6}}
      }
        child { node (3a2) {\x{3}}
          child { node (4a1) {\x{4}}
        }
      }
        child { node (4a2) {\x{4}}
      }
    }
      child { node[xshift=-15mm,yshift=-5mm] (3b1) {\x{3}}
        child { node (6b) {\x{6}}
      }
        child { node (3b2) {\x{3}}
          child { node (4b1) {\x{4}}
        }
      }
        child { node (4b2) {\x{4}}
      }
    }
  };
\begin{pgfonlayer}{background}
  \draw [rounded corners=1.5mm] [cblue!50,fill=cblue!50,rounded corners=1.2mm] ($(y2.south east) + 0.5*(.4mm,-.4mm)$) -- ($(y2.north east) + 0.5*(.4mm,.4mm)$) -- ($(y2.north) + 0.5*(0mm,1mm)$) -- ($(y2.north west) + 0.5*(-.4mm,.4mm)$) -- ($(y2.south west) + 0.5*(-.4mm,-.4mm)$) -- ($(y2) !0.2! 270:(y4) !.4! (y4)$) -- ($(y4) !0.2! 90:(y2) !.4! (y2)$) -- ($(y4.north west) + 0.5*(-.4mm,.4mm)$) -- ($(y4.west) + 0.5*(-1mm,0mm)$) -- ($(y4) !0.2! 270:(2) !.4! (2)$) -- ($(2) !0.2! 90:(y4) !.4! (y4)$) -- ($(2.north west) + 0.5*(-.4mm,.4mm)$) -- ($(2.south west) + 0.5*(-.4mm,-.4mm)$) -- ($(2.south) + 0.5*(0mm,-1mm)$) -- ($(2.south east) + 0.5*(.4mm,-.4mm)$) -- ($(2) !0.2! 270:(y4) !.4! (y4)$) -- ($(y4) !0.2! 90:(2) !.4! (2)$) -- ($(y4) !0.2! 270:(3a1) !.4! (3a1)$) -- ($(3a1) !0.2! 90:(y4) !.4! (y4)$) -- ($(3a1.north west) + 0.5*(-.4mm,.4mm)$) -- ($(3a1.west) + 0.5*(-1mm,0mm)$) -- ($(3a1) !0.2! 270:(6a) !.4! (6a)$) -- ($(6a) !0.2! 90:(3a1) !.4! (3a1)$) -- ($(6a.north west) + 0.5*(-.4mm,.4mm)$) -- ($(6a.south west) + 0.5*(-.4mm,-.4mm)$) -- ($(6a.south) + 0.5*(0mm,-1mm)$) -- ($(6a.south east) + 0.5*(.4mm,-.4mm)$) -- ($(6a) !0.2! 270:(3a1) !.4! (3a1)$) -- ($(3a1) !0.2! 90:(6a) !.4! (6a)$) -- ($(3a1) !0.2! 270:(4a1) !.4! (4a1)$) -- ($(4a1) !0.2! 90:(3a1) !.4! (3a1)$) -- ($(4a1.north west) + 0.5*(-.4mm,.4mm)$) -- ($(4a1.south west) + 0.5*(-.4mm,-.4mm)$) -- ($(4a1.south) + 0.5*(0mm,-1mm)$) -- ($(4a1.south east) + 0.5*(.4mm,-.4mm)$) -- ($(4a1.north east) + 0.5*(.4mm,.4mm)$) -- ($(4a1) !0.2! 270:(3a1) !.4! (3a1)$) -- ($(3a1) !0.2! 90:(4a1) !.4! (4a1)$) -- ($(3a1) !0.2! 270:(4a2) !.4! (4a2)$) -- ($(4a2) !0.2! 90:(3a1) !.4! (3a1)$) -- ($(4a2.west) + 0.5*(-1mm,0mm)$) -- ($(4a2.south west) + 0.5*(-.4mm,-.4mm)$) -- ($(4a2.south) + 0.5*(0mm,-1mm)$) -- ($(4a2.south east) + 0.5*(.4mm,-.4mm)$) -- ($(4a2.north east) + 0.5*(.4mm,.4mm)$) -- ($(4a2) !0.2! 270:(3a1) !.4! (3a1)$) -- ($(3a1) !0.2! 90:(4a2) !.4! (4a2)$) -- ($(3a1.east) + 0.5*(1mm,0mm)$) -- ($(3a1.north east) + 0.5*(.4mm,.4mm)$) -- ($(3a1) !0.2! 270:(y4) !.4! (y4)$) -- ($(y4) !0.2! 90:(3a1) !.4! (3a1)$) -- ($(y4) !0.2! 270:(3b1) !.4! (3b1)$) -- ($(3b1) !0.2! 90:(y4) !.4! (y4)$) -- ($(3b1.west) + 0.5*(-1mm,0mm)$) -- ($(3b1) !0.2! 270:(6b) !.4! (6b)$) -- ($(6b) !0.2! 90:(3b1) !.4! (3b1)$) -- ($(6b.north west) + 0.5*(-.4mm,.4mm)$) -- ($(6b.south west) + 0.5*(-.4mm,-.4mm)$) -- ($(6b.south) + 0.5*(0mm,-1mm)$) -- ($(6b.south east) + 0.5*(.4mm,-.4mm)$) -- ($(6b) !0.2! 270:(3b1) !.4! (3b1)$) -- ($(3b1) !0.2! 90:(6b) !.4! (6b)$) -- ($(3b1) !0.2! 270:(4b1) !.4! (4b1)$) -- ($(4b1) !0.2! 90:(3b1) !.4! (3b1)$) -- ($(4b1.north west) + 0.5*(-.4mm,.4mm)$) -- ($(4b1.south west) + 0.5*(-.4mm,-.4mm)$) -- ($(4b1.south) + 0.5*(0mm,-1mm)$) -- ($(4b1.south east) + 0.5*(.4mm,-.4mm)$) -- ($(4b1.north east) + 0.5*(.4mm,.4mm)$) -- ($(4b1) !0.2! 270:(3b1) !.4! (3b1)$) -- ($(3b1) !0.2! 90:(4b1) !.4! (4b1)$) -- ($(3b1) !0.2! 270:(4b2) !.4! (4b2)$) -- ($(4b2) !0.2! 90:(3b1) !.4! (3b1)$) -- ($(4b2.west) + 0.5*(-1mm,0mm)$) -- ($(4b2.south west) + 0.5*(-.4mm,-.4mm)$) -- ($(4b2.south) + 0.5*(0mm,-1mm)$) -- ($(4b2.south east) + 0.5*(.4mm,-.4mm)$) -- ($(4b2.north east) + 0.5*(.4mm,.4mm)$) -- ($(4b2) !0.2! 270:(3b1) !.4! (3b1)$) -- ($(3b1) !0.2! 90:(4b2) !.4! (4b2)$) -- ($(3b1.east) + 0.5*(1mm,0mm)$) -- ($(3b1.north east) + 0.5*(.4mm,.4mm)$) -- ($(3b1) !0.2! 270:(y4) !.4! (y4)$) -- ($(y4) !0.2! 90:(3b1) !.4! (3b1)$) -- ($(y4.east) + 0.5*(1mm,0mm)$) -- ($(y4.north east) + 0.5*(.4mm,.4mm)$) -- ($(y4) !0.2! 270:(y2) !.4! (y2)$) -- ($(y2) !0.2! 90:(y4) !.4! (y4)$) -- ($(y2.south east) + 0.5*(.4mm,-.4mm)$) -- ($(y2.north east) + 0.5*(.4mm,.4mm)$) -- ($(y2) !0.2! 270:(y2) !.4! (y2)$) -- ($(y2) !0.2! 90:(y2) !.4! (y2)$) -- cycle;
\end{pgfonlayer}

      \hydrahead{2}{1}
      \hydrahead{6a}{1}
      \hydrahead{4a1}{1}
      \hydrahead{4a2}{-1}
      \hydrahead{6b}{1}
      \hydrahead{4b1}{-1}
      \hydrahead{4b2}{-1}
    \end{tikzpicture}
    }
  };

\end{tikzpicture}
}
  \caption{One step in the process of killing the Star Hydra.
    After chopping the head with value $5$, the Hydra grows both in width and in height,
    and Hercules starts running.}\label{fig:star:hydra}
  \label{fig:starhydra}
\end{figure}

\begin{figure}
  \centering
  {
\newcommand{\x}[1]{\scalebox{1}[1]{$#1$}}
\newcommand{\y}[1]{\x{\underline{#1}}}

\tikzset{hydra/.style={default,thick,level distance=10mm,nodes={circle},level 2/.style={sibling distance=12mm},level 3/.style={sibling distance=8mm}}}

\begin{tikzpicture}[]
  \node at (0,0) {
    \scalebox{1}[1]{%
    \begin{tikzpicture}[hydra]
    
\node (x4) {\x{4}}
    child { node (x2) {\x{2}}
  }
    child { node (x3) {\x{3}}
      child { node (x6) {\x{6}}
    }
      child { node (x5) {\x{5}}
    }
  };
\begin{pgfonlayer}{background}
\end{pgfonlayer}

      \draw [->,thick] (13mm,-2mm) to node [anchor=north,rectangle,inner sep=2mm] {$\lchop$} ++(12mm,0mm);
    \end{tikzpicture}
    }
  };
  \node at (40mm,0) {
    \scalebox{1}[1]{%
    \begin{tikzpicture}[hydra]
    
\node (x4) {\x{4}}
    child { node (x2) {\x{2}}
  }
    child { node (x3) {\x{3}}
      child { node (x6) {\x{6}}
    }
      child { node (4a) {\y{4}}
    }
      child { node (4b) {\y{4}}
    }
  };
\begin{pgfonlayer}{background}
\end{pgfonlayer}

      \draw [->,thick] (14mm,-2mm) to node [anchor=north,rectangle,inner sep=2mm] {$\lpropagate$} node [at end,anchor=south west,rectangle,inner sep=0.5mm] {$!$} ++(12mm,0mm);    
    \end{tikzpicture}
    }
  };
  \node at (80mm,0) {
    \scalebox{1}[1]{%
    \begin{tikzpicture}[hydra]
    
\node (y4) {\y{4}}
    child { node (x2) {\x{2}}
  }
    child { node (y3) {\y{3}}
      child { node (x6) {\x{6}}
    }
      child { node (4a) {\y{4}}
    }
      child { node (4b) {\y{4}}
    }
  };
\begin{pgfonlayer}{background}
\end{pgfonlayer}

    \end{tikzpicture}
    }
  };
  \node at (20mm,-40mm) {
    \scalebox{1}[1]{%
    \begin{tikzpicture}[hydra]
    
\node (y2) {\y{2}}
    child { node (y4) {\y{4}}
      child { node (x2) {\x{2}}
    }
      child { node (y3) {\y{3}}
        child { node (x6) {\x{6}}
      }
        child { node {\y{3}}
          child { node {\y{4}}
        }
      }
        child { node {\y{4}}
      }
    }
  };
\begin{pgfonlayer}{background}
\end{pgfonlayer}

      \draw [->,thick] (-25mm,-2mm) to node [anchor=north,rectangle,inner sep=2mm] {$\llengthen$} node [at end,anchor=south west,rectangle,inner sep=0.5mm] {$2$} ++(12mm,0mm);    
      \draw [->,thick] (11mm,-2mm) to node [anchor=north,rectangle,inner sep=2mm] {$\lwiden$} ++(12mm,0mm);    
      \draw [->,thick] (28mm,-2mm) to node [anchor=north,rectangle,inner sep=2mm] {$\lwaive$} node [at end,anchor=south west,rectangle,inner sep=0.5mm] {$!$} ++(12mm,0mm);    
    \end{tikzpicture}
    }
  };
  
  \node at (75mm,-40mm) {
    \scalebox{1}[1]{%
    \begin{tikzpicture}[hydra,level 2/.style={sibling distance=25mm}]
    
\node (y2) {\x{2}}
    child { node (y4) {\x{4}}
      child { node (2) {\x{2}}
    }
      child { node (3a1) {\x{3}}
        child { node (6a) {\x{6}}
      }
        child { node (3a2) {\x{3}}
          child { node (4a1) {\x{4}}
        }
      }
        child { node (4a2) {\x{4}}
      }
    }
      child { node (3b1) {\x{3}}
        child { node (6b) {\x{6}}
      }
        child { node (3b2) {\x{3}}
          child { node (4b1) {\x{4}}
        }
      }
        child { node (4b2) {\x{4}}
      }
    }
  };
\begin{pgfonlayer}{background}
\end{pgfonlayer}

    \end{tikzpicture}
    }
  };
\end{tikzpicture}
}
  \caption{Justification of the $\stepstarhydra$ step in Figure~\ref{fig:starhydra},
    detailing the change of the Hydra via the steps in Definition~\ref{def:starhydra}.
  }
  \label{fig:starhydra:details}
\end{figure}

\newcommand{\remtrash}{\hookrightarrow}
\newcommand{\dropul}[1]{[#1]}

In order to prove termination of $\stepstarhydra$ we show that ${\stepstarhydra} \subseteq {\ilpored^+}$.
For this purpose, we show that the underlined reduction 
introduced in Definition~\ref{def:starhydra},
in particular the rules $\lwiden$ and $\llengthen$, can be simulated by $\ilpored^+$.
As a first approximation, these rules can be simulated by
\begin{align*}
    n(m(\vec{x}),\vec{u})
      &\ilpored_{\lput} n^\star(m(\vec{x}),\vec{u}) \\ 
      &\ilpored_{\ldown} n(m^\star(\vec{x}),\ldots,m^\star(\vec{x}),\vec{u})
\end{align*}
and
\begin{align*}
    n(\vec{x})
      &\ilpored_{\lput} n^\star(t,\vec{x}) \\ 
      &\ilpored_{\lcopy} m(n^\star(\vec{x})) \;,
\end{align*}
respectively.
However, there is one problem: we need to get rid of the stars $\star$ 
after the $\ilpored$ simulation.
The idea is to enrich the trees such that every underlined symbol has additional ``garbage'' arguments.
A garbage argument $\textit{garbage}$ can then be used to eliminate a $\star$ symbol as follows:
\[
  n^\star(\vec{t},\textit{garbage}) \ilpored_{\ldown} n(\vec{t})
\]
To this end, we introduce the notion of garbage collection.

For convenience, when proving ${\stepstarhydra} \subseteq {\ilpored^+}$, we consider $\ilpored$ over an extended signature
\[ S' = S \cup \set{\bot} \]
where $\bot$ is a fresh element that is smaller than all elements in $S$. So $\bot$ is an auxiliary element that is not used in the Hydra reduction $\stepstarhydra$.

\begin{definition}[Garbage collection]
  For every $g \ge 1$ we define $\remtrash_g$ on $\tre{S' \cup \blood{S'}}{\avars}$ by the following rewrite system
  \begin{align*}
    n(\vec{x},y_1,\ldots,y_h) \remtrash_g \blood{n}(\vec{x}) \quad \text{for every $h \ge g$}
  \end{align*}
\end{definition}

\begin{definition}[Dropping underlining]
  For $s \in \tre{S' \cup \blood{S'}}{\avars}$ we write
  $\dropul{s}$ for the tree obtained from $s$ by removing underlining,
  that is, replacing each symbol occurrence $\blood{n}$ by $n$.
\end{definition}

Garbage collection can be handled by $\ilpored$ in the following sense.

\begin{lemma}\label{lem:garbage:ilpo}
  If $s \in \tre{S'}{\avars}$ and $s \remtrash_g^* t$,
  then $s \ilpored^* \dropul{t}$.
\end{lemma}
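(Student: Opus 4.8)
The plan is to induct on the length of the reduction $s \remtrash_g^* t$, the inductive step amounting to a simulation of a single $\remtrash_g$ step. Since $s$ carries no underlinings we have $\dropul{s} = s$, so the base case (empty reduction) is immediate. For the inductive step, split the reduction as $s \remtrash_g^* u \remtrash_g t$; the induction hypothesis gives $s \ilpored^* \dropul{u}$, so it suffices to prove that a single step $u \remtrash_g t$ already yields $\dropul{u} \ilpored^* \dropul{t}$, after which the two reductions compose into $s \ilpored^* \dropul{t}$.

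So consider a step $u = C[n(\vec{x}, y_1, \ldots, y_h)] \remtrash_g C[\blood{n}(\vec{x})] = t$ with $h \ge g \ge 1$ and $n \in S'$. The erasure map $\dropul{\cdot}$ commutes with context- and subterm-formation, hence $\dropul{u} = \dropul{C}[n(\dropul{\vec{x}}, \dropul{y_1}, \ldots, \dropul{y_h})]$ and $\dropul{t} = \dropul{C}[n(\dropul{\vec{x}})]$, where $\dropul{C}$ is $C$ with all underlinings removed. As $\ilpored$ on trees is closed under contexts, it is enough to prove the following \emph{core claim}: for all trees $\vec{z}, w_1, \ldots, w_h \in \tre{S'}{\avars}$ (with the $w_i$ not variables, which holds in the application, where every collected garbage argument is rooted by $\bot$), one has $n(\vec{z}, w_1, \ldots, w_h) \ilpored^* n(\vec{z})$.

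The core claim is established by iterating the two-step pattern
\[
n(\vec{z}, w_1, \ldots, w_j) \ilpored_{\lput} n^\star(\vec{z}, w_1, \ldots, w_j) \ilpored_{\ldown} n(\vec{z}, w_1, \ldots, w_{j-1}),
\]
where the $\ldown$ step targets the argument $w_j$ and takes zero copies; the point is that this very $\ldown$ step both deletes $w_j$ and removes the star, so that after $h$ such rounds (for $j = h, h-1, \ldots, 1$) we land on the fully unmarked tree $n(\vec{z})$. This costs $2h$ $\ilpored$-steps per garbage-collection step.

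The argument is largely routine; the only point demanding attention is the bookkeeping through $\dropul{\cdot}$. The relation $\ilpored$ is oblivious to underlinings, whereas the $\remtrash_g$-reduction accumulates them, so the two reductions cannot be matched term-for-term and must instead be aligned via the erasure map, which is exactly what the induction above does. (One also has to observe that $\remtrash_g$-redices always have an \emph{un}underlined root symbol $n \in S'$, so that $\dropul{\cdot}$ behaves as claimed on the redex; this is built into the garbage-collection rule.)
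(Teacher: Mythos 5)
Your proof is correct and takes essentially the same route as the paper, which likewise observes that $\dropul{t}$ arises from $s$ by deleting subtrees and simulates each deletion by a $\lput$ step followed by a zero-copy $\ldown$ step; you merely make the induction on the length of the $\remtrash_g$-reduction and the $\dropul{\cdot}$-bookkeeping explicit. Your side remark that $\ldown$ requires the deleted argument to be non-variable (guaranteed in the application, where garbage arguments are $\bot$-rooted) is a genuine detail that the paper's one-line proof silently glosses over.
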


\begin{proof}
  Since $s \remtrash_g^* t$, the term $\dropul{t}$ is obtained from $s$ by removing at least $g$ subtrees.
  This can be simulated by repeated applications of $\ilpored_{\lput} \cdot \ilpored_{\ldown}$.
\end{proof}

The following lemma states that $\ilpored^*$ can simulate the chopping of a head
and insert garbage arguments along the affected neck (the underlined symbol occurrence).

\begin{lemma}\label{lem:garbage:chop}
  If $g \geq 1$, $s \in \tre{S}{\avars}$ and $s \to_\lchop \cdot \to_\lpropagate^! t$,
  then $s \ilpored s^\star \ilpored^* v \remtrash_g^* t$ for some $v \in \tre{S'}{\avars}$.
\end{lemma}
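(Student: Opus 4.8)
The plan is to read off from the given reduction $s \to_\lchop \cdot \to_\lpropagate^! t$ precisely which ``neck'' was attacked, then construct a concrete IPO reduction that walks a star down that neck while depositing the right number of garbage leaves, so that a final burst of $\remtrash_g$-steps collects the garbage and reinstalls the underlining of $t$. First I would fix notation: say the removed head occupies position $q \cdot i$, let its parent be at position $q$, and let $\posemp = r_0, r_1, \ldots, r_\ell = q$ be the unique path of positions from the root to that parent, with $f_j$ the label of $r_j$. Unravelling the two rule schemas, $\to_\lchop$ replaces the attacked leaf $\alpha$ by $k$ underlined leaves $\blood{\beta_1}, \ldots, \blood{\beta_k}$ (with every $\beta_m < \alpha$) and underlines $f_\ell$; then $\to_\lpropagate^!$ underlines every ancestor of $f_\ell$ up to and including the root, and nothing else changes. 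Hence $t$ differs from $s$ only by this local surgery, and the underlined occurrences of $t$ are exactly $f_0, \ldots, f_\ell$ together with $\beta_1, \ldots, \beta_k$.

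Next I would choose $v$ to be $t$ with all underlining erased and with $g$ garbage leaves $\bot$ appended as children to each of the nodes $f_0, \ldots, f_\ell, \beta_1, \ldots, \beta_k$. Then $v$ carries no markers, so $v \in \tre{S'}{\avars}$, and $v \remtrash_g^* t$: apply $\remtrash_g$ once at each of those $\ell + 1 + k$ nodes, deleting precisely its $g$ appended $\bot$-children (legal since $g \ge g$). Working bottom-up keeps the left-hand side of $\remtrash_g$ applicable, since each such node is still unmarked when its turn comes, and the net effect is exactly the underlining pattern of $t$. This part is pure bookkeeping.

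It then remains to build $s \ilpored s^\star \ilpored^* v$. The first step is $\ilpored_\lput$ at the root: $s = f_0(\vec{c}) \ilpored_\lput f_0^\star(\vec{c}) = s^\star$. Then for $j = 0, 1, \ldots, \ell - 1$ in turn I apply $\ilpored_\ldown$ to the on-path child of the currently starred $f_j$ (the subtree rooted at $r_{j+1}$), creating $1 + g$ copies of $f_{j+1}^\star(\cdots)$ and unstarring $f_j$; I retain one copy to carry the star down to $f_{j+1}$ and collapse each of the remaining $g$ copies to $\bot$ by a single $\ilpored_\lcopy$ step with zero copies, which is legal because $f_{j+1} > \bot$. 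After the loop, $f_0, \ldots, f_{\ell-1}$ are unstarred, each now carrying its original children plus $g$ leaves $\bot$, and the star sits on $f_\ell$, whose children are the original siblings of the attacked head together with $\alpha$. I apply $\ilpored_\ldown$ to $\alpha$, producing $k + g$ copies of $\alpha^\star$; I turn $k$ of them into $\beta_m(\bot, \ldots, \bot)$ — each by $\alpha^\star \ilpored_\lcopy \beta_m(\alpha^\star, \ldots, \alpha^\star)$ with $g$ copies (valid since $\beta_m < \alpha$), followed by $g$ steps $\alpha^\star \ilpored_\lcopy \bot$ — and collapse the remaining $g$ directly to $\bot$. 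Since nothing off the path was ever touched, the resulting tree is precisely $v$, and it is star-free and underline-free.

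The main obstacle, and the one point that is not routine, is getting rid of the full garbage subtrees that $\ilpored_\ldown$ unavoidably duplicates: the key observation is that one $\ilpored_\lcopy$ step with zero copies sends any starred subtree $h^\star(\cdots)$ straight to the leaf $\bot$, which is available exactly because the signature was enlarged to $S' = S \cup \set{\bot}$ with $\bot$ below everything. Everything else — that $\to_\lpropagate^!$ underlines precisely the path, that the unorderedness of trees makes ``append / delete trailing garbage'' harmless, and that the $\remtrash_g$-steps can be scheduled so their heads are still unmarked — is straightforward verification, and I would relegate it to a brief remark.
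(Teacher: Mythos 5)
Your construction is correct, and it realizes the same core mechanism as the paper's proof: a star is put at the root and walked down the attacked neck by $\ldown$ steps that deposit $g$ extra copies at each level, after which $\remtrash_g$ collects exactly that surplus and reinstates the underlining of $t$. The differences are organizational. The paper argues by induction on the depth of the $\lchop$ step: in the inductive case it creates $g+1$ starred copies of the entire on-path subtree, reduces \emph{all} of them to $v'$ via the induction hypothesis, and lets the root-level $\remtrash_g$ step delete $g$ full copies of $v'$; only in the base case does it collapse surplus copies to $\bot$. You instead give a single explicit reduction along the path $r_0,\ldots,r_\ell$ and immediately collapse every surplus copy to the leaf $\bot$ by a zero-copy $\lcopy$ step (legal since $\bot$ is below everything in $S'$), so that your intermediate tree $v$ carries only $\bot$-leaf garbage and is described in closed form. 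Both variants are sound because $\remtrash_g$ is indifferent to what the deleted arguments are; your version buys a more uniform and explicit $v$ at the cost of having to argue (as you do) that $\to_\lpropagate^!$ underlines exactly the root-to-wound path, which the paper's induction gets for free from the recursive shape of $\to_\lchop\cdot\to_\lpropagate^!$.
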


\begin{proof}
  We prove the claim by induction on the depth of the $\lchop$ step:
  \begin{enumerate}
    \item 
      If the $\lchop$ step occurs at the root, then it is of the form 
      \[ s = n(\alpha,\vec{u}) \to_\lchop \blood{n}(\blood{\beta_1},\ldots,\blood{\beta_k},\vec{u}) = t \]
      with $k \ge 0$, $\beta_1,\ldots,\beta_k < \alpha$. We then reduce as follows:
      \[
      \begin{array}{rll}
        s = n(\alpha,\vec{u}) 
        & \ilpored_{\lput} & n^\star(\alpha,\vec{u}) = s^\star \\
        &\ilpored_{\ldown}& n(\underbrace{\alpha^\star,\ldots,\alpha^\star}_{\text{$k+g$ times}},\vec{u}) \\
        & \ilpored_{\lcopy}^* & n(\beta_1(\underbrace{\bot,\ldots,\bot}_{\text{$g$ times}}),\ldots,\beta_k(\underbrace{\bot,\ldots,\bot}_{\text{$g$ times}}),\underbrace{\bot,\ldots,\bot}_{\text{$g$ times}},\vec{u}) = v\\
        & \remtrash_{g}^*& \blood{n}(\blood{\beta_1},\ldots,\blood{\beta_k},\vec{u}) = t
      \end{array}
      \]
    \item 
      Let $s = n(s',\vec{u})$ and assume that the $\lchop$ step occurs in $s'$.
      Then
      \[
        s' \to_\lchop \cdot \to_\lpropagate^! t'
        \quad \text{and} \quad
        n(t',\vec{u}) \to_\lpropagate \blood{n}(t',\vec{u}) = t
      \]
      for some $t'$.
      By the induction hypothesis $s' \ilpored s'^\star \ilpored^* v' \remtrash_{g}^* t'$ for some $v' \in \tre{S'}{\avars}$. Hence
      \[
      \begin{array}{rll}
        n(s',\vec{u}) 
        & \ilpored_{\lput} & n^\star(s',\vec{u}) \\
        &\ilpored_{\ldown}& n(\underbrace{s'^\star,\ldots,s'^\star}_{\text{$g+1$ times}},\vec{u}) \\
        &\ilpored^*&  n(\underbrace{v',\ldots,v'}_{\text{$g+1$ times}},\vec{u}) = v\\
        &\remtrash_{g}^*& \blood{n}(t',\vec{u}) = t
      \end{array}
      \]
  \end{enumerate}
  This concludes the proof.
\end{proof} 

\begin{lemma}\label{lem:garbage:widen}
  If $s \in \tre{S'}{\avars}$ and $s \remtrash_{g+1}^* \cdot \to_{\lwiden} u$ for $g \ge 1$,
  then $s \ilpored^* v \remtrash_{g}^* u$ for some $v \in \tre{S'}{\avars}$.
\end{lemma}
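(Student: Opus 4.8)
Write the hypothesis as $s \remtrash_{g+1}^* t \to_{\lwiden} u$ for some $t \in \tre{S'\cup\blood{S'}}{\avars}$, where the $\lwiden$ redex $\blood{n}(\blood{m}(\vec{x}),\vec{y})$ in $t$ is contracted to $\blood{n}(\underbrace{\blood{m}(\vec{x}),\ldots,\blood{m}(\vec{x})}_{c},\vec{y})$ in $u$ for some $c \ge 0$. The plan is to simulate this single $\lwiden$ step already on $s$, producing a tree $v \in \tre{S'}{\avars}$ with $s \ilpored^* v$, and then to re-run the garbage collection on $v$, but now at level $g$ rather than $g+1$. The decrement from $g+1$ to $g$ is exactly what pays for the extra $\ilpored$-steps needed to dispose of the auxiliary stars introduced by the simulation; and since $\remtrash_{g+1} \subseteq \remtrash_{g}$, all parts of $s \remtrash_{g+1}^* t$ untouched by the simulation can simply be replayed as $\remtrash_{g}$-reductions from $v$.

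\textbf{Analysing $s$.} Restricting the reduction $s \remtrash_{g+1}^* t$ to the node at which the redex sits, we obtain a subterm $P$ of $s$ with $P \remtrash_{g+1}^* \blood{n}(\blood{m}(\vec{x}),\vec{y})$. Since $s$ carries no underlining, a node acquires an underline in $t$ exactly when a $\remtrash_{g+1}$-step is applied at it, and this can happen at most once (its head thereafter has the form $\blood{\cdot}$, which no $\remtrash$ rule matches). As the redex root is $\blood{n}$, it follows that the root of $P$ is $n$ and $P = n(\mathrm{kept},w_1,\ldots,w_h)$ with $h \ge g+1$, where $w_1,\ldots,w_h$ are the arguments removed at $P$'s unique root $\remtrash_{g+1}$-step and the children in $\mathrm{kept}$ reduce to the children $\blood{m}(\vec{x}),\vec{y}$ of the redex. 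In particular one child $Q \in \mathrm{kept}$ satisfies $Q \remtrash_{g+1}^* \blood{m}(\vec{x})$, and the same reasoning gives $Q = m(\vec{a},z_1,\ldots,z_{h'})$ with $h' \ge g+1$, the $z_i$ removed at $Q$'s root, and $\vec{a} \remtrash_{g+1}^* \vec{x}$ componentwise. (We work modulo $\commute$, so child order is immaterial.)

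\textbf{The simulation and collapse.} From $s$, apply $\ilpored_{\lput}$ at the root of $P$ and then $\ilpored_{\ldown}$ to the argument $Q$, producing $c$ starred copies $m^\star(\vec{a},z_1,\ldots,z_{h'})$ of $Q$ (this step also removes the star from the root of $P$); then, for each of the $c$ copies, one further $\ilpored_{\ldown}$-step discarding the subtree $z_{h'}$ with zero copies, turning it into $Q^{-} := m(\vec{a},z_1,\ldots,z_{h'-1})$. By closure under contexts this yields $s \ilpored^* v$, where $v$ is $s$ with $P$ replaced by the tree obtained from $n(\mathrm{kept},w_1,\ldots,w_h)$ by replacing $Q$ with $c$ copies of $Q^{-}$. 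Then $v \in \tre{S'}{\avars}$, and the essential bookkeeping point is that $h'-1 \ge g$ and $h \ge g$. Now $v \remtrash_{g}^* u$: outside $P$ the original reduction $s \remtrash_{g+1}^* t$ is intact and replays as a $\remtrash_{g}$-reduction into the ambient part of $u$; each copy $Q^{-}$ reduces by $\vec{a} \remtrash_{g}^* \vec{x}$ componentwise and then one root $\remtrash_{g}$-step removing the suffix $z_1,\ldots,z_{h'-1}$ of length $h'-1 \ge g$, yielding $\blood{m}(\vec{x})$; the other children of the $n$-node reduce to $\vec{y}$ as before; and a final root $\remtrash_{g}$-step on $n$ removes $w_1,\ldots,w_h$ (length $h \ge g$), producing $\blood{n}(\underbrace{\blood{m}(\vec{x}),\ldots,\blood{m}(\vec{x})}_{c},\vec{y})$, exactly the $\lwiden$-contractum inside $u$. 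This is the required $v$.

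\textbf{Main obstacle.} I expect the only real difficulty to be structural bookkeeping rather than anything deep: justifying the clean split of each garbage-collected node's argument list into a surviving part and a discarded suffix (which rests on a node being underlined by at most one $\remtrash$-step and on $\remtrash$ stripping a suffix), and then checking that every node re-collected in the final step still carries at least $g$ spare arguments — the simulation spends exactly one spare argument per copy, which is covered precisely because the incoming budget is $g+1$. A routine point is that the $\ilpored_{\ldown}$-step used to strip a star needs a non-variable child, which is unproblematic since Star Hydras (Definition~\ref{def:starhydra}) are ground trees.
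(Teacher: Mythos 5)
Your proof is correct and follows essentially the same route as the paper's: localize to the subterm of $s$ above the $\lwiden$ redex, observe that each underlined node of $t$ was garbage-collected exactly once and hence carries at least $g+1$ spare arguments in $s$, simulate the widening by $\ilpored_{\lput}$ at the $n$-node followed by $\ilpored_{\ldown}$ to create the starred copies of the $m$-subtree, strip each star with a further $\ilpored_{\ldown}$ step that discards one garbage argument, and re-collect at level $g$. Your explicit justification of the shape of $s$ and your remark about the non-variable child needed for $\ldown$ are slightly more careful than the paper's terse ``$s$ must be of the form \dots'', but the argument is the same.
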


\begin{proof}
  Let $s \in \tre{S'}{\avars}$ and $s \remtrash_{g+1}^* t \to_{\lwiden} u$ with $g \ge 1$.
  W.l.o.g.\ we may assume that the $\lwiden$ step occurs at the  root
  (otherwise we consider the relevant subtrees of $s,t,u$).
  Then the rewrite sequence is of the form
  \begin{align*}
     s \remtrash_{g+1}^* t = \blood{n}(\blood{m}(\vec{v}),\vec{w}) 
    &\to_{\lwiden} \blood{n}(\blood{m}(\vec{v}),\ldots,\blood{m}(\vec{v}),\vec{w}) = u
  \end{align*}
  for some $n,m \in S'$ and $\vec{v},\vec{w} \in \tre{S' \cup \blood{S'}}{\avars}^*$.
  As a consequence $s$ must be of the form
  \begin{align*}
     s = n(m(\vec{v'},x_1,\ldots,x_{g_1+1}),\vec{w'},y_1,\ldots,y_{g_2+1}) 
  \end{align*}
  with $g_1,g_2 \ge g$
  and componentwise $\vec{v'} \remtrash_{g+1}^* \vec{v}$ and $\vec{w'} \remtrash_{g+1}^* \vec{w}$.
  Then
  \[
  \begin{array}{rll}
     s &=& n(m(\vec{v'},x_1,\ldots,x_{g_1+1}),\vec{w'},\vec{y}) \\
     &\ilpored_{\lput}& n^\star(m(\vec{v'},x_1,\ldots,x_{g_1+1}),\vec{w'},\vec{y}) \\
     &\ilpored_{\ldown}& n(m^\star(\vec{v'},x_1,\ldots,x_{g_1+1}),\ldots,m^\star(\vec{v'},x_1,\ldots,x_{g_1+1}),\vec{w'},\vec{y}) \\ 
     &\ilpored_{\ldown}^*& n(m(\vec{v'},x_1,\ldots,x_{g_1}),\ldots,m(\vec{v'},x_1,\ldots,x_{g_1}),\vec{w'},\vec{y}) = v \\
     &\remtrash_{g}^*& \blood{n}(\blood{m}(\vec{v}),\ldots,\blood{m}(\vec{v}),\vec{w}) \\
     &=& u
  \end{array}
  \]
  This proves the claim.
\end{proof}

\begin{lemma}\label{lem:garbage:lengthen}
  If $s \in \tre{S'}{\avars}$ and $s \remtrash_{g+1}^* \cdot \to_{\llengthen} u$ for $g \ge 1$,
  then $s \ilpored^* v \remtrash_{g}^* u$ for some $v \in \tre{S'}{\avars}$.
\end{lemma}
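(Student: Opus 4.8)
The plan is to follow the same pattern as the proof of Lemma~\ref{lem:garbage:widen}, adapting the garbage bookkeeping to the shape of the $\llengthen$ rule $\blood{n}(\vec{x}) \to \blood{m}(\blood{n}(\vec{x}))$ with $m < n$. Writing the hypothesis as $s \remtrash_{g+1}^{*} t \to_{\llengthen} u$, I would first reduce to the case in which the $\llengthen$ step occurs at the root, restricting attention to the relevant subtrees of $s$, $t$ and $u$ otherwise. In the root case $t = \blood{n}(\vec{w})$ and $u = \blood{m}(\blood{n}(\vec{w}))$, and since $t$ has an underlined root while $s$ does not, the sequence $s \remtrash_{g+1}^{*} t$ performs exactly one garbage-collection step at the root; hence $s$ has the form $s = n(\vec{w'}, y_1, \ldots, y_h)$ with $h \ge g+1$ and $\vec{w'} \remtrash_{g+1}^{*} \vec{w}$ componentwise.

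I would then take as witness
\[
  v \;=\; m\bigl( n(\vec{w'}, y_1, \ldots, y_{h-1}),\, \underbrace{\bot, \ldots, \bot}_{g}\bigr) \;\in\; \tre{S'}{\avars}
\]
and construct $s \ilpored^{*} v$ as follows: a $\lput$ step turns the root into $n^{\star}$; a $\lcopy$ step (legitimate since $m < n$) rewrites $n^{\star}(\vec{w'}, \vec{y})$ to $m\bigl(n^{\star}(\vec{w'},\vec{y}),\ldots,n^{\star}(\vec{w'},\vec{y})\bigr)$ with $g+1$ copies; each of the last $g$ copies is then sent to $\bot$ by a further $\lcopy$ with zero copies (legitimate since $\bot$ is below every element of $S$); and finally one $\ldown$ step on the first copy deletes the subtree $y_h$ and erases the star, giving $n^{\star}(\vec{w'}, y_1, \ldots, y_h) \ilpored_{\ldown} n(\vec{w'}, y_1, \ldots, y_{h-1})$. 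It remains to verify $v \remtrash_{g}^{*} u$: collecting the $g$ garbage arguments $\bot$ under $m$ yields $\blood{m}\bigl(n(\vec{w'}, y_1, \ldots, y_{h-1})\bigr)$, collecting the $h-1 \ge g$ garbage arguments under $n$ yields $\blood{m}(\blood{n}(\vec{w'}))$, and $\vec{w'} \remtrash_{g}^{*} \vec{w}$ because ${\remtrash_{g+1}} \subseteq {\remtrash_{g}}$ and $\vec{w'} \remtrash_{g+1}^{*} \vec{w}$; hence $v \remtrash_{g}^{*} \blood{m}(\blood{n}(\vec{w})) = u$.

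The argument is largely routine, in the spirit of Lemmas~\ref{lem:garbage:ilpo} and~\ref{lem:garbage:widen}; the one point demanding care is the garbage budget. The hypothesis only supplies $\remtrash_{g+1}$-collectible trailing arguments, whereas the constructed $v$ must present at least $g$ garbage arguments both under the freshly created root $m$ and under the node $n$ beneath it, and at the same time every $\star$ introduced by $\lput$ must be discharged so that $v$ lies in $\tre{S'}{\avars}$. This is exactly why the $\ldown$-with-zero-copies step is made to do double duty (deleting a garbage subtree and removing the star) and why the surplus $\lcopy$-to-$\bot$ steps are needed (to manufacture the garbage under $m$); the inclusion ${\remtrash_{g+1}} \subseteq {\remtrash_{g}}$ then closes the gap between the two garbage levels.
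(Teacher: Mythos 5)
Your proof is correct and follows essentially the same route as the paper's: reduce to the case of a root $\llengthen$ step, read off the forced shape $s = n(\vec{w'},x_1,\ldots,x_{g'+1})$ with $g'\ge g$, and simulate by $\lput$, a $\lcopy$ step producing $g+1$ starred copies under the fresh root $m$, and $\ldown$ steps with zero copies that discharge each star while consuming one trailing garbage argument, after which $\remtrash_{g}$ (using ${\remtrash_{g+1}}\subseteq{\remtrash_{g}}$ on $\vec{w'}$) closes the gap to $u$. The only, immaterial, difference is in the garbage manufactured under $m$: the paper keeps all $g+1$ full copies of $n(\ldots)$ and lets $\remtrash_g$ absorb $g$ of them, whereas you first collapse $g$ of the starred copies to $\bot$ via $\lcopy$ with zero arguments.
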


\begin{proof}
  Let $s \in \tre{S'}{\avars}$ and $s \remtrash_{g+1}^* t \to_{\llengthen} u$ with $g \ge 1$.
  Assume that the $\llengthen$ step occurs at the  root
  (otherwise we consider the relevant subtrees of $s,t,u$).
  The rewrite sequence is of the form
  \begin{align*}
     s \remtrash_{g+1}^* t = \blood{n}(\vec{w}) 
    &\to_{\llengthen} \blood{m}(\blood{n}(\vec{w})) = u
  \end{align*}
  for some $n > m \in S'$ and $\vec{w} \in \tre{S' \cup \blood{S'}}{\avars}^*$.
  Hence $s$ must be of the form
  \begin{align*}
     s = n(\vec{w'},x_1,\ldots,x_{g'+1})
  \end{align*}
  with $g' \ge g$
  and componentwise $\vec{w'} \remtrash_{g+1}^* \vec{w}$.
  Then
  \[
  \begin{array}{rll}
     s & = & n(\vec{w'},x_1,\ldots,x_{g'+1}) \\
     &\ilpored_{\lput} &  n^\star(\vec{w'},x_1,\ldots,x_{g'+1}) \\
     &\ilpored_{\lcopy}& m(\underbrace{n^\star(\vec{w'},x_1,\ldots,x_{g'+1}),\ldots,n^\star(\vec{w'},x_1,\ldots,x_{g'+1})}_{\text{$g+1$ times}} ) \\ 
     &\ilpored_{\ldown}^*& m(n(\vec{w'},x_1,\ldots,x_{g'}),\ldots,n(\vec{w'},x_1,\ldots,x_{g'}) ) \\
     &\remtrash_{g}^*&  \blood{m}( \blood{n}(\vec{w}) ) \\
     &=& u
    \end{array}
    \]
  This concludes the proof.
\end{proof}

\begin{lemma}\label{lem:starhydra:ilpo}
  We have ${\stepstarhydra} \subseteq {\ilpored^+}$ where $\ilpored$ uses the given order $<$ on $S'$.
\end{lemma}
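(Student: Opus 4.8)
The plan is to unfold the definition of $\stepstarhydra$ and simulate each of its phases by $\ilpored$, threading a ``garbage counter'' through the simulation: at every stage the current $\ilpored$-reduct $v \in \tre{S'}{\avars}$ should satisfy $v \remtrash_g^* w$, where $w$ is the corresponding Star Hydra term and $g$ is chosen large enough, at the very start, to pay for every $\lwiden$ and $\llengthen$ step still to come. All the real work lives in Lemmas~\ref{lem:garbage:chop}, \ref{lem:garbage:widen}, \ref{lem:garbage:lengthen} and~\ref{lem:garbage:ilpo}; the task is only to compose them with the right garbage budget.

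Concretely, I would fix a step $s \stepstarhydra s'$ and write it out as
\[
  s \;\to_\lchop \cdot \to_\lpropagate^!\; p_0 \;\to_{r_1}\; p_1 \;\to_{r_2}\; \cdots \;\to_{r_\ell}\; p_\ell \;\to_\lwaive^!\; s'\text{,}
\]
where each $r_i \in \set{\lwiden, \llengthen}$ and $\ell \ge 0$ is the (finite) number of $\lwiden$/$\llengthen$ steps. Put $g_i = \ell + 1 - i$, so $g_0 = \ell + 1$ and $g_\ell = 1$. First, Lemma~\ref{lem:garbage:chop} applied with $g = g_0 \ge 1$ yields $v_0 \in \tre{S'}{\avars}$ with $s \ilpored s^\star \ilpored^* v_0 \remtrash_{g_0}^* p_0$; note this opening segment is already nonempty. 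Then, for $i = 1, \ldots, \ell$, assuming inductively that $v_{i-1} \in \tre{S'}{\avars}$ satisfies $v_{i-1} \remtrash_{g_{i-1}}^* p_{i-1}$: since $g_{i-1} = g_i + 1$ and $g_i \ge 1$, we have $v_{i-1} \remtrash_{g_i+1}^* \cdot \to_{r_i} p_i$, so Lemma~\ref{lem:garbage:widen} (if $r_i = \lwiden$) or Lemma~\ref{lem:garbage:lengthen} (if $r_i = \llengthen$) provides $v_i \in \tre{S'}{\avars}$ with $v_{i-1} \ilpored^* v_i \remtrash_{g_i}^* p_i$.

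After $\ell$ iterations we reach $v_\ell \in \tre{S'}{\avars}$ with $v_\ell \remtrash_{1}^* p_\ell$. The maximal reduction $p_\ell \to_\lwaive^! s'$ just erases all underlinings (the rule $\blood{m}(\vec{x}) \to m(\vec{x})$ is non-overlapping and terminating, with unique normal form dropping every underlining), so $s' = \dropul{p_\ell}$; hence Lemma~\ref{lem:garbage:ilpo} (with $g = 1$) gives $v_\ell \ilpored^* \dropul{p_\ell} = s'$. Composing everything,
\[
  s \ilpored s^\star \ilpored^* v_0 \ilpored^* v_1 \ilpored^* \cdots \ilpored^* v_\ell \ilpored^* s'\text{,}
\]
and as $s \ilpored s^\star$ makes the sequence nonempty, $s \ilpored^+ s'$.

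The crux — and the whole point of garbage collection — is the bookkeeping: one must commit, already at the $\lchop$ step, to enough garbage arguments to absorb every later $\lwiden$/$\llengthen$ step, each consuming exactly one unit by Lemmas~\ref{lem:garbage:widen} and~\ref{lem:garbage:lengthen}, with exactly one unit left at the end for Lemma~\ref{lem:garbage:ilpo} to discharge the terminal $\lwaive^!$. This is legitimate precisely because a single $\stepstarhydra$ step contains only finitely many $\lwiden$/$\llengthen$ steps, so the finite choice $g_0 = \ell + 1$ suffices; the subsidiary facts that each $v_i$ stays in $\tre{S'}{\avars}$ and that $\remtrash$ correctly handles the $\bot$-padded garbage are exactly what the cited lemmas already provide.
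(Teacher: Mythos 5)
Your proposal is correct and follows essentially the same route as the paper's proof: decompose the $\stepstarhydra$ step, invoke Lemma~\ref{lem:garbage:chop} with initial garbage budget $n+1$ (your $\ell+1$), discharge the $\lwiden$/$\llengthen$ steps one by one via Lemmas~\ref{lem:garbage:widen} and~\ref{lem:garbage:lengthen} by induction, and finish with Lemma~\ref{lem:garbage:ilpo} using the identity $t = \dropul{t'}$ for the terminal $\to_\lwaive^!$ phase. Your explicit counter $g_i = \ell+1-i$ and the remark that the opening $\lput$ step makes the composite reduction nonempty (hence genuinely $\ilpored^+$) merely spell out what the paper leaves implicit.
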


\begin{proof}
  Assume that $s \stepstarhydra t$. Then
  \[
  s \to_\lchop \cdot \to_\lpropagate^! s'' \mathrel{ ({\to_\lwiden} \ \cup \ {\to_\llengthen})^n } t' \to_\lwaive^! t
  \]
  for some $n \ge 0$ and trees $s'',t'$.
  By Lemma~\ref{lem:garbage:chop} we have $s \ilpored^+ s' \remtrash_{n+1}^* s''$ for some $s' \in \tre{S'}{\avars}$.
  Using induction on $n$ together with Lemmas~\ref{lem:garbage:widen} and~\ref{lem:garbage:lengthen},
  we obtain $s' \ilpored^* t'' \remtrash_1 t'$ for some $t'' \in \tre{S'}{\avars}$.
  Thus we have
  \[
    s \ilpored^* s' \ilpored^* t'' \remtrash_1 t' \to_\lwaive^! t 
  \]
  Moreover, by Lemma~\ref{lem:garbage:ilpo} we have $t'' \ilpored^* \dropul{t'} = t$.
  Hence $s \ilpored^* t$.
\end{proof}

\begin{theorem}\label{thm:starhydra}
  The Star Hydra $\stepstarhydra$ is terminating.
\end{theorem}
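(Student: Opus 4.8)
The plan is to obtain the theorem as a direct corollary of the simulation Lemma~\ref{lem:starhydra:ilpo} together with termination of the iterative path order on trees, exactly in the style of Corollary~\ref{cor:termination:by:simulation}. First I would recall the extended label set $S' = S \cup \set{\bot}$ from Lemma~\ref{lem:starhydra:ilpo} and note that, since $(S,<)$ is well-founded and $\bot$ sits strictly below every element of $S$, the precedence $(S',<)$ is again a well-founded partial order. I would also record that the Star Hydra reduction $\stepstarhydra$ of Definition~\ref{def:starhydra} is a relation on ordinary trees $\tre{S}{\avars}$ — the concluding $\to_\lwaive^!$ phase strips off every underlining, and no stars ever occur — so $\stepstarhydra$ is in particular a relation on $\tre{S'}{\avars}$, and by Lemma~\ref{lem:starhydra:ilpo} we have ${\stepstarhydra} \subseteq {\ilpored^+}$ for the IPO $\ilpored$ over $(S',<)$.

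Second, I would feed this into termination by simulation. The point requiring a little care is that $(S,<)$, hence $(S',<)$, is only assumed well-founded, not a well-quasi-order; so the scenic-route proof of Theorem~\ref{thm:rpo:tree:sn} via Kruskal's tree theorem does not apply as such. Instead I would invoke the second proof (Section~\ref{sec:buchholz}): Proposition~\ref{prop:ilpoomega:sn} gives termination of $\ilporedomega$ for an arbitrary well-founded precedence, and Proposition~\ref{lem:ilex:llex} transfers this to $\ilpored^+$ on unstarred trees, so that $\ilpored^+$ restricted to $\tre{S'}{\avars}$ is $\SN$. Now if $t_0 \stepstarhydra t_1 \stepstarhydra \cdots$ were an infinite Star Hydra reduction, each step would, by Lemma~\ref{lem:starhydra:ilpo}, expand into $t_i \ilpored^+ t_{i+1}$ with all $t_i$ unstarred trees in $\tre{S'}{\avars}$, producing an infinite $\ilpored^+$-chain through unstarred trees — a contradiction. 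Hence $\SNr{\stepstarhydra}$. Equivalently: apply Corollary~\ref{cor:termination:by:simulation} with $R = {\stepstarhydra}$ and $\Sigma = S'$, reading the corollary through the infinite-signature version of Theorem~\ref{thm:rpo:tree:sn}.

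Once Lemma~\ref{lem:starhydra:ilpo} is available, there is no obstacle left in the theorem itself — all the real work is in the preparatory garbage-collection lemmas. The only subtlety I would flag, were one to revisit those, is the bookkeeping of the garbage bound: a single $\stepstarhydra$ step consists of a chop, an exhaustive propagation, then some fixed number $n \ge 0$ of $\to_\lwiden$/$\to_\llengthen$ steps, and finally an exhaustive waive, so the simulation must plant $g = n+1$ layers of $\bot$-garbage during the chop phase (Lemma~\ref{lem:garbage:chop}); each subsequent widen or lengthen is then mirrored by an $\ilpored_\lput \cdot \ilpored_\ldown$ (resp.\ $\ilpored_\lcopy$) burst that consumes exactly one garbage layer (Lemmas~\ref{lem:garbage:widen} and~\ref{lem:garbage:lengthen}), and the final layer is discarded when the underlining is dropped (Lemma~\ref{lem:garbage:ilpo}). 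But for the present statement all of this is already settled, and the theorem is immediate from Lemma~\ref{lem:starhydra:ilpo} and termination by simulation.
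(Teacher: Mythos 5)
Your proposal is correct and takes essentially the same route as the paper, whose entire proof is ``a direct consequence of Lemma~\ref{lem:starhydra:ilpo}'' combined (implicitly) with termination by simulation. You additionally make explicit a point the paper glosses over here, namely that since $(S,<)$ is only assumed well-founded and not a wqo, one must appeal to the Section~\ref{sec:buchholz} proof of Theorem~\ref{thm:rpo:tree:sn} (Propositions~\ref{lem:ilex:llex} and~\ref{prop:ilpoomega:sn}) rather than the Kruskal-based one --- a worthwhile clarification, but not a different argument.
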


\begin{proof}
  A direct consequence of Lemma~\ref{lem:starhydra:ilpo}.
\end{proof}

\section{Concluding remarks and questions}\label{sec:conclusion}

We have revisited star games as an alternative setting for dealing with several termination problems, to wit, of word rewriting systems, term rewriting systems, and other objects, in this paper in particular Hydras. We expect that the scope of this method, consisting of letting some control symbols walk over the objects and transforming them, is wider than we could treat in this paper. We mention a few follow-up matters, that we hope to address in subsequent notes or papers.

\begin{enumerate}
\item An essential part of the star games is that they cooperate with well-quasi-orders, as vocalized in our simple but basic Proposition~\ref{basic}. That requires that the appropriate embedding relations in the wqo can be obtained by a subset of the star game rules, or by derived rules, derived from the star rules. Above we have seen this for the notion of embedding as in Kruskal's tree theorem.

A  challenge is to extend the star rules so that they capture the more difficult embedding on words known as \emph{gap embedding}, and likewise the corresponding embedding on trees as proposed by Friedman
and studied by Dershowitz and Tzameret~\cite{dershowitz2003gap,tzameret2002kruskal}. %

\item In general, it is our endeavour to extend the \emph{expressivity} of the star rules by adding new rules. A typical example, already realized in the literature, is admitting the \llex{} rule as treated in the paper ILPO~\cite{klop2006iterative}, that actually makes the $\ldown$ rule a derived rule, and that importantly, enables one to prove termination of functions like the one of Ackermann. See further Geser~\cite{gese:1990}. However, in the setting of trees (as we have used in this paper), the \llex{} rule is unsound for proving termination (see Remark~\ref{rem:lex}).

\item In continuation of the previous item about enhancing expressivity,
fruitful results have already been obtained towards connecting the framework of this paper
to \emph{higher-order rewriting}, pertaining to various typed and untyped lambda calculi, or combinations 
of these with first-order rewriting such as in HRSs and CRSs \cite{klop1993term}.

For the original version of RPO, progress in this direction has been made by Jouannaud and Rubio and others. 
For the equivalent version of IPO as in~\cite{klop2006iterative}  and the present paper, several developments have been realized by Kop~\cite{kop2012higher} and Van Raamsdonk~\cite{kop2009iterative}.

\item
Another direction is the adaptation of the star game for proving infinitary normalization~\cite{infinitary:normalization:2009} and productivity~\cite{endr:2010,complexity:productivity:2009}.

\item
We also would like  to enhance the \emph{precision} of the star rules. Can the star game be changed in order to prove \emph{local termination}~\cite{termination:local:2009,termination:local:2010} on a given set of starting terms, in contrast to \emph{global termination} on all terms?

\item The star game framework lends itself to a further syntactical analysis
of its fine-structure along the lines of obtaining for instance a \emph{standardization theorem}, well-known in lambda calculus and first order rewriting.
Such a theorem was already 
included and used in~\cite{klop2006iterative}, but as yet without proof, which was subsequently given by Van Oostrom under the name of 
obtaining \emph{cut elimination}. An interesting corollary pointed out by him in that work is the \emph{decidability} of the star rewrite relation.
The point of such a strategy is that computation is going down along the tree with a \emph{wave front} of star symbols, denoting the front of activity, and freezing everything above that front.

\item 
Are there interesting variations of star games for proving termination in infinitary rewriting~\cite{infinitary:highlights:2012,infintary:rewriting:coinductive:2018}, that is, considering infinite trees and rewrite sequences of ordinal length?

\item A further question that we wondered about, is obtained by \emph{reflecting a star game in itself}, and continuing to repeat that.  That is, we can allow finite $\omega$-labeled trees themselves as labels, and determine the termination ordinal thus obtained. We can reiterate this feeding the system into itself, even in an ordinal long procedure. Here the notion of normal function on ordinals will likely play a role. Can we determine at what ordinal this process closes? We expect that for some of these questions there are already answers in the literature, but we have not encountered such answers explicitly, for several questions.

\item The most fascinating question to our mind concerns the extension of star games to \emph{finite graphs}, to start with, undirected  and unlabeled as in the original \emph{graph minor theorem (GMT)} by Robertson and Seymour~\cite{robertson2004graph}.
Obvious generalizations here are admitting directed edges, and admitting ordinal labels on the nodes. This is a very powerful generalization of KTT. Ideally, we would like to obtain a star game on such finite graphs, yielding a framework to termination of various graph rewrite problems, by simulation into this terminating graph framework~\cite{joshi2008applying}. A promising advance in this direction has been  made by Dershowitz and Jouannaud, who designed a very general datatype called \emph{drags}, not using star games, but with developments towards setting up RPO, recursive path order, for these objects~\cite{dershowitz2018drags,dershowitz2018graph}. 
\end{enumerate}

\section*{Acknowledgments}
We thank Nachum Dershowitz for his useful conversations concerning a draft of this paper and 
many comments and pointers to the literature, during his sabbatical period at CWI Amsterdam around end of 2019 and beginning of 2020.
Also we gratefully acknowledge support for our research from CWI Amsterdam and the Section Theory of the Vrije Universiteit Amsterdam. The third author received funding from NWO under the Innovational Research Incentives Scheme (project No.\ VI.Vidi.192.004). The second author is thankful for monthly funding from the Dutch ABP, the Algemeen Burgerlijk Pensioenfonds, for daily subsistence and cost of living.

Many thanks to Vincent van Oostrom, for a scrutiny of a draft of this paper and numerous comments for improvement. We thank Andreas Weiermann and Toshiyasu Arai for advising us on references for ordinal analysis. We finally thank two anonymous referees, for several more comments and valuable suggestions.

\bibliographystyle{alpha}
\bibliography{main}

\end{document}